\newcommand{\Exp}{\mathop{\mathbb{E}}}
\newcommand{\Sh}{\ensuremath{\mathsf{Stab}}}
\newcommand{\poly}{\ensuremath{\mathsf{poly}}}
\newcommand{\R}{\ensuremath{\mathbb{R}}}
\newcommand{\id}{\ensuremath{\mathbb{I}}}
\def\01{\mathbb{F}_2}
\newcommand{\ketbra}[2]{|#1\rangle\langle#2|}
\newcommand{\be}{\begin{equation}}
\newcommand{\ee}{\end{equation}}
\newcommand{\ba}{\begin{array}}
\newcommand{\ea}{\end{array}}
\newcommand{\bea}{\begin{eqnarray}}
\newcommand{\eea}{\end{eqnarray}}
\DeclareMathOperator{\Tr}{Tr}
\newcommand{\ra}{\rangle}
\newcommand{\la}{\langle}
\newcommand{\norm}[1]{\left\lVert#1\right\rVert}
\newcommand{\calF}{{\cal F }}
\newcommand{\calS}{{\cal S }}
\newcommand{\calP}{{\cal P }}
\newcommand{\FF}{\mathbb{F}}
\newcommand{\ZZ}{\mathbb{Z}}
\newcommand{\CC}{\mathbb{C}}
\newcommand{\gowers}[2]{\textsc{Gowers}\left({#1},{#2}\right)}
\newcommand{\AC}{\textsf{AC}}
\newcommand{\nac}{\textsf{nac}}
\declaretheorem[numberwithin=section]{theorem}
\declaretheorem[sibling=theorem]{lemma}
\declaretheorem[sibling=theorem]{claim}
\declaretheorem[sibling=theorem]{proposition}
\declaretheorem[sibling=theorem]{corollary}
\declaretheorem[sibling=theorem]{conjecture}
\theoremstyle{definition}
\declaretheorem[sibling=theorem]{fact}
\declaretheorem[sibling=theorem]{remark}
\def\widebreve{\mathpalette\wide@breve}
\def\wide@breve#1#2{\sbox\z@{$#1#2$}%
     \mathop{\vbox{\m@th\ialign{##\crcr
\kern0.08em\brevefill#1{0.8\wd\z@}\crcr\noalign{\nointerlineskip}%
                    $\hss#1#2\hss$\crcr}}}\nolimits}
\def\brevefill#1#2{$\m@th\sbox\tw@{$#1($}%
  \hss\resizebox{#2}{\wd\tw@}{\rotatebox[origin=c]{90}{\upshape(}}\hss$}
\title{Polynomial-time tolerant testing stabilizer states}
\begin{document}

\author{
Srinivasan Arunachalam\\[2mm]
IBM Quantum\\
\small Almaden Research Center\\
\small \texttt{Srinivasan.Arunachalam@ibm.com}
\and
Arkopal Dutt\\[2mm]
IBM Quantum\\
\small   Cambridge, Massachusetts\\
\small \texttt{arkopal@ibm.com}
}


\maketitle
\begin{abstract}
We consider the following task: suppose an algorithm is given copies of an unknown $n$-qubit quantum state $\ket{\psi}$ promised $(i)$ $\ket{\psi}$ is $\varepsilon_1$-close to a stabilizer state in fidelity or $(ii)$ $\ket{\psi}$ is  $\varepsilon_2$-far from all stabilizer states, decide which is the case. We show that for every $\varepsilon_1>0$ and $\varepsilon_2\leq \varepsilon_1^C$, there is a $\poly(1/\varepsilon_1)$-sample and $n\cdot \poly(1/\varepsilon_1)$-time algorithm that decides which is the case (where $C>1$ is a universal constant). Our proof includes a new definition of Gowers norm for quantum states, an inverse theorem for the Gowers-$3$ norm of quantum states and new bounds on stabilizer covering for structured subsets of Paulis using results in additive~combinatorics. 

\end{abstract}

{\small \tableofcontents}
\section{Introduction}
\paragraph{Classical property testing.} The field of property testing was initiated by the seminal work, Blum, Luby and Rubinfield~\cite{blum1990self} who studied the following task: decide if a Boolean function $f:\01^n\rightarrow \01$ satisfies a global property given only query-access to $f$ or is far from all functions satisfying this property. More formally, a property can be associated with a collection of functions $\mathcal{P}\subseteq \{f:\01^n\rightarrow \01\}$ and the goal is: given query access to a function $g$, decide if $(i)$ $g$ is in the property, i.e., $g\in \calP$ or $(ii)$ $g$ is $\varepsilon$-far from the property under some distance metric $d$, i.e., $\max_{f\in \calP}d(f,g)\geq \varepsilon$, promised one of them is the case. Since its initiation, the field of property testing has  been influential in theoretical computer science. 
In particular,~\cite{blum1990self} showed that \emph{$3$ queries} suffices to test the property of linear functions and this was the basis for the PCP~theorem.

Subsequently, Parnas, Ron and Rubinfield~\cite{parnas2006tolerant} introduced the notion of \emph{tolerant} property testing wherein they modified the usual property testing goal as follows: decide if the unknown function $g$ is $\varepsilon_1$-close to the property (i.e., $\max_{f\in \calP}d(f,g)\leq \varepsilon_1$) or $\varepsilon_2$ far from the property. Note that the standard notion of property testing is when one fixed $\varepsilon_1=1$. Since its introduction, several works have investigated the strengths and limitations of tolerant property testing, showing that standard property testers can sometimes be made tolerant as well as giving hardness results in the tolerant framework~\cite{alon2000efficient,diakonikolas2007testing,fischer2004testing,fischer2005tolerant,goldreich2003three}. 

\paragraph{Quantum property testing.} This generalizes the classical setting, wherein the goal is to test a property of a \emph{quantum state} instead of a Boolean function. Here, a property $\calP$ is a collection of pure quantum states and the goal in \emph{intolerant} testing is: given copies of an unknown $\ket{\psi}$, decide if $(i)$ $\ket{\psi}$ is in the property, i.e., $\ket{\psi}\in \calP$ or $(ii)$ $\ket{\psi}$ is $\varepsilon$-far from the property, i.e., $\max_{\ket{\phi}\in \calP} |\langle \psi|\phi\rangle|^2\leq 1-\varepsilon$, promised of them is the case. Although~(intolerant) quantum property testing is more nascent than classical property testing, influential results have been produced which have had applications in other areas of quantum computing. 
It is well known that stabilizer states~\cite{gross2021schur}, product states~\cite{harrow2013testing} and matrix product states~\cite{sw2022testing} are efficiently testable. Other efficiently (intolerant) testable states include low-degree phase states~\cite{arunachalam2022optimal}, high-temperature Gibbs states of local Hamiltonians~\cite{anshu2021sample,haah2022optimal}, and states produced by Clifford circuits with few $T$ gates~\cite{grewal2023efficient}, for which efficient learning algorithms are known, implying testing~algorithms.

Since near-term quantum devices are noisy, a natural implementation of a quantum algorithm will inherently have some noise, motivating the study of \emph{tolerance} in quantum algorithms. Akin to classical property testing, we initiate \emph{tolerant quantum property testing} in this work. As far as we know, only a handful of works~\cite{grewal2023improved,gilyen2019distributional} have touched upon this topic. In fact, we do not know of any natural class of states that are testable in the tolerant framework. Since stabilizer states are efficiently learnable~\cite{montanaro2017learning}, simulatable and (intolerant) testable, it motivates the  question 
\begin{quote}
\begin{center}
\emph{Is there an efficient (i.e., polynomial-time) tolerant tester for stabilizer states?}
\end{center}
\end{quote} 
If one does not require efficiency, then one could use  shadow tomography~\cite{aaronson:shadow,buadescu2021improved} that uses $O(n^2)$ copies and exponential time. However, a holy grail in property testing is \emph{constant-copy testers} (i.e., the query/sample complexity is independent of $n$). Similar to~\cite{blum1990self}, constant-copy tolerant testers could potentially be useful in understanding the quantum PCP conjecture. To this end,~\cite{gross2021schur} showed that stabilizer states are \emph{intolerant} testable  using $6$ copies, however these algorithms are not robust. We know conditional hardness of learning states which are \emph{noisy/close} to stabilizers, but not exactly stabilizers~\cite{gollakota2022hardness,chen2024stabilizer,hinsche2022single}, motivating the ambitious~question 
\begin{quote}
\begin{center}
\emph{Can we tolerant test stabilizer states with {constantly many} copies?}
\end{center}
\end{quote}

\subsection{Main result}
Our main result is a positive answer to the above questions. We give a constant-sample and time-efficient algorithm that tolerantly tests stabilizer states. We state our theorems below before comparing with prior work. Below, we will let $\Sh$ denote the class of $n$-qubit stabilizer~states.
\begin{restatable}{theorem}{tolerantstabtesting}\label{thm:testphase2}
There is a constant $C>1$ for which the following is true. For any $\varepsilon_1 > 0$ and $\varepsilon_2\leq \varepsilon_1^C$, there exists an algorithm that~given
$\poly(1/\varepsilon_1)$ copies of an $n$-qubit quantum state $\ket{\psi}$, decides if $\max_{\ket{\phi}\in \Sh}|\langle\phi |\psi\rangle|^2 \geq  \varepsilon_1$ or $\max_{\ket{\phi}\in \Sh}|\langle\phi |\psi\rangle|^2 \leq  \varepsilon_2$  using $O(n\cdot \poly(1/\varepsilon_1))$~gates.
\end{restatable}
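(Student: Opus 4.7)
The plan is to adapt the classical template of higher-order Fourier analysis for testing quadratic functions over $\mathbb{F}_2^n$---namely the Gowers $U^3$ norm together with an inverse theorem---to the quantum setting, where stabilizer states take the role of quadratic phase functions. At the highest level I will (i) define a Gowers-$3$-type norm $\gowers{\psi}{3}$ for quantum states, (ii) prove a forward inequality showing that non-trivial fidelity with a stabilizer implies large $\gowers{\psi}{3}$, (iii) prove a matching inverse theorem showing that large $\gowers{\psi}{3}$ implies non-trivial fidelity with some stabilizer, and (iv) give a sample-efficient Bell-sampling estimator for $\gowers{\psi}{3}$, whose threshold behavior yields the tester.

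For the definition, I would work with the \emph{Weyl distribution} obtained from Bell difference sampling on pairs of copies of $\ket{\psi}$: each sample is a Pauli $P\in\{I,X,Y,Z\}^{\otimes n}$ with probability $p_\psi(P)=|\langle\psi|P|\psi\rangle|^2/2^n$. Stabilizer states are exactly those whose Weyl distribution is uniform on a Lagrangian subspace of $\mathbb{F}_2^{2n}$, which motivates defining $\gowers{\psi}{3}$ by the combinatorial $U^3$ expression: an expectation of a product of $\langle\psi|P|\psi\rangle$ values over $8$-tuples of Paulis whose symplectic labels form a $3$-dimensional cube. The forward direction---that $|\langle\phi|\psi\rangle|^2\geq \varepsilon_1$ with $\phi\in\Sh$ forces $\gowers{\psi}{3}\geq \poly(\varepsilon_1)$---should then follow from a Cauchy--Schwarz expansion together with the fact that $\gowers{\phi}{3}=1$ for every stabilizer $\phi$.

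The main obstacle, and the technical heart of the argument, will be the \emph{inverse theorem}: if $\gowers{\psi}{3}\geq \delta$, then there exists $\ket{\phi}\in\Sh$ with $|\langle\phi|\psi\rangle|^2\geq \poly(\delta)$. I plan to mimic Samorodnitsky's proof of the classical $U^3$ inverse theorem: use $U^3$-largeness to exhibit a ``derivative'' structure on $p_\psi$ forcing the set of heavy Paulis $S_\tau=\{P:p_\psi(P)\geq \tau\}$ to have large additive energy in the symplectic space $\mathbb{F}_2^{2n}$; apply a Balog--Szemer\'edi--Gowers step to pass to a subset of small doubling; then invoke a Freiman-type result, tailored to the symplectic geometry, to cover this subset by a small number of \emph{isotropic} subspaces (i.e.\ stabilizer groups). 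This covering step is precisely where the paper's promised additive-combinatorial bounds on stabilizer covering of structured Pauli subsets should enter, and it is the step I expect to be most delicate, because standard Freiman-in-$\mathbb{F}_2^n$ tools do not a priori respect the symplectic form needed to produce isotropic covers. Once $S_\tau$ is covered, pigeonhole selects one stabilizer group carrying Weyl mass $\poly(\delta)$, from which projecting $\ket{\psi}$ onto the corresponding stabilizer codespace (in the spirit of Montanaro's learning algorithm) yields a stabilizer state of the desired fidelity.

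Finally, for the algorithm, I would draw $\poly(1/\varepsilon_1)$ Bell-difference samples, form the empirical $U^3$ statistic, and accept iff it exceeds a threshold chosen strictly between the forward lower bound $\poly(\varepsilon_1)$ and the inverse-theorem failure level $\poly(\varepsilon_2)$. Yes-instances pass by the forward inequality and no-instances fail by the contrapositive of the inverse theorem; the exponent $C$ in $\varepsilon_2\leq \varepsilon_1^C$ is fixed so that these two regimes are separated by a constant-factor gap. A Chernoff bound controls the sample complexity, and since each Bell-difference sample uses $O(n)$ Clifford gates, the overall gate count is $O(n\cdot \poly(1/\varepsilon_1))$ as claimed.
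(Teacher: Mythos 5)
Your high-level plan coincides with the paper's: define a quantum Gowers-$3$ norm, relate it to Bell-(difference-)sampling statistics, prove a forward inequality and an inverse theorem, and threshold the empirical estimator. The forward direction and the estimator/Chernoff argument are both standard and essentially as the paper does them. However, the inverse theorem in your proposal has three real gaps where the paper introduces ideas that your sketch would not deliver.

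\emph{Choice sets instead of heavy Paulis.} You propose showing that the heavy set $S_\tau=\{P:p_\psi(P)\geq\tau\}$ has large additive energy. Samorodnitsky's classical argument does not run on the heavy set directly; it runs on a random \emph{choice function} $x\mapsto\phi(x)$ picking a label $\alpha$ for each $x$ with probability $\widehat{f}_x(\alpha)^2$, which exploits the product structure of the domain. For arbitrary quantum states this breaks (the paper gives the GHZ state as a counterexample: for most $x$ there is no valid $\alpha$). The paper's fix is to define a \emph{random choice set} $S\subseteq S_1$ by including each heavy $(x,\alpha)$ independently with probability $2^np_\psi(x,\alpha)$, and to prove $\Exp_S[L(S)]\geq\poly(\gamma)$ by a careful comparison against the $q_\psi$-weighted sum. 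Without this randomization the ``large additive energy of $S_\tau$'' claim is not established by the $U^3$ hypothesis alone.

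\emph{Affine shift removal.} PFR/Marton gives a covering of the small-doubling set $S'$ by cosets $z+V$ of a subgroup $V$, not by $V$ itself, and a generic coset carries no useful stabilizer structure. Your sketch jumps straight from small-doubling to ``cover by isotropic subspaces'' without addressing this. The paper handles it with a symplectic Fourier argument (Claims 4.5--4.6) showing that the coset average $\Exp_{y\in V}[2^np_\Psi(y+z)]$ is maximized at $z\in V$, so the mass concentrates on the subgroup itself. This is a step you would have to discover; it does not follow from Freiman-type machinery.

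\emph{Stabilizer covering of $V$.} You expect a ``Freiman-type result tailored to the symplectic geometry'' to yield a covering by isotropic subspaces. The paper does not use (and does not have) such a symplectic Freiman theorem; instead, it applies ordinary PFR to get an arbitrary subgroup $V\subseteq\mathbb{F}_2^{2n}$ and then proves separately, by a quantum argument, that $V$ has a small stabilizer covering. Concretely: a Clifford canonical form gives $UVU^\dagger=\calP^k\times\langle Z_{k+1},\ldots,Z_{k+m}\rangle$, and then Fact 4.8 bounds $\sum_{x\in UVU^\dagger}2^np_{\tilde\Psi}(x)\leq 2^{k+m}$ via the reduced density matrix on the first $k$ qubits and the fact that the Pauli group is a $1$-design. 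Comparing with the lower bound $\poly(\gamma)\cdot 2^{2k+m}$ forces $k=O(\log(1/\gamma))$, so $V$ is covered by $4^k=\poly(1/\gamma)$ stabilizer groups. This is operator-theoretic, not additive-combinatorial, and the $\poly(1/\gamma)$ (rather than $2^{\poly(1/\gamma)}$) bound is what gives the polynomial tolerance gap in the final theorem. (The paper discusses a purely combinatorial route in the appendix, but it requires an open conjecture and only gives an exponentially weaker bound.) Finally, your closing step can be simpler than you propose: once a Lagrangian subspace $T$ with $\sum_{y\in T}p_\Psi(y)\geq\poly(\gamma)$ is found, the stabilizer-fidelity lower bound $\calF_\calS(\psi)\geq\sum_{y\in T}p_\Psi(y)$ (Fact 2.17) finishes immediately, without any projection argument.
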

We remark that this theorem covers the regime when the gap between $\varepsilon_1,\varepsilon_2$ is multiplicative, in which case our sample complexity is independent of $n$ and depends only on the gap. 
 The regimes which are not covered by this theorem is when the additive gap between $\varepsilon_1,\varepsilon_2$  is an arbitrary polynomial in $n$, which we leave for tackling as part of future work. Additionally, we did not make efforts in optimizing the constants in this proof. Prior to our work, there were two results in this direction: Gross, Nezami and Walter~\cite{gross2021schur} showed a result assuming $\varepsilon_1=1$; a follow-up work of Grewal, Liang, Iyer, Kretschmer~\cite{grewal2023improved} showed one can improve the analysis of~\cite{gross2021schur} to obtain an improved property tester but their analysis could only handle $\varepsilon_1^6\geq 1/4$. Both these works fall short of the usual definition of \emph{tolerant} property testing (where $\varepsilon_1,\varepsilon_2$ need \emph{not} be constants and also $\varepsilon_1$ need not at least be a universal constant) and our theorem addresses this gap. As far as we are aware, this is one of the first results on tolerant quantum property testing with sample complexity independent~of~$n$.

\begin{table}[h]
\small
\centering
 \begin{tabular}{|c | c  | c  |c|c|} 
 \hline
  & \makecell{$\varepsilon_1$} & $\varepsilon_2$ &Sample &Time \\ [0.5ex] 
 \hline
 \makecell{~\cite{gross2021schur}} & $1$  & $< 1-16\sqrt{1-\varepsilon_1}$ &$O(1)$&$O(n)$   \\ 
 \hline
 \makecell{~\cite{grewal2023improved}}   & $\geq (1/4)^{1/6}$ & $\leq (4\varepsilon_1^6-1)/3$ &$O(1)$&$O(n)$\\ \hline
 \makecell{\emph{This work}} & $\geq \tau$ & $\leq \tau^C$&$\poly(1/\tau)$&$n\cdot \poly(1/\tau)$ \\ 
 \hline
 \makecell{Concurent work~\cite{chen2024stabilizer}} & $\geq \tau$ & $\leq \tau-\varepsilon$&$n\cdot (1/\tau)^{\log (1/\tau)}$& $n^3/\varepsilon^2\cdot (1/\tau)^{\log (1/\tau)}$ \\ 
 \hline
 \end{tabular}
 \caption{Parameters for tolerant testing $n$-qubit stabilizer states. The goal of the tolerant tester is to decide if $(i)$ $\max_{\ket{\phi}\in \Sh}|\langle\phi |\psi\rangle|^2 \geq  \varepsilon_1$ or $(ii)$ $\max_{\ket{\phi}\in \Sh}|\langle\phi |\psi\rangle|^2 \leq  \varepsilon_2$ . The $\tau$ in the final two rows could be an arbitrary function of $n$ and $C>1$ is a universal constant.} 
 \label{tab:summary_results_paper}
\end{table}
Concurrently, Chen, Gong, Ye and Zhang~\cite{chen2024stabilizer} put out a paper regarding agnostic learning stabilizer states (a task much harder than tolerant testing, but could be used as a way for tolerant testing). We make a couple of remarks comparing our works:  $(i)$ the proof techniques in both papers are vastly different, $(ii)$ their paper can handle ranges of $\varepsilon_1,\varepsilon_2$ where the difference is inverse polynomial in $n$, but ours requires a \emph{multiplicative} guarantee that $\varepsilon_2\leq \varepsilon_1^C$ for some constant $C>1$, $(iii)$  their sample  complexity grows with $n$, whereas our sample complexity is a constant, and $(iv)$ furthermore when $\tau=1/\poly(n)$, our algorithm has time complexity that is polynomial in $n$ whereas their complexity is quasipolynomial in $n$. 

Finally, they also have a \emph{conditional} hardness result showing that one cannot hope for a polynomial-time algorithm for $\tau=1/\poly(n)$ by assuming that a non-standard variant of learning parity with noise is hard.  Interestingly, we show that in the weaker task of \emph{testing}, one can obtain a \emph{polynomial-time}~tester to decide if $\ket{\psi}$ is $\tau$-close to a stabilizer state in fidelity or $\tau^C$-far away.

\subsection{A warm up result}
Before we discuss our result, we first discuss classical low-degree testing, then give a ``warm-up" result (which is new as far as we are aware) by assuming that the unknown state $\ket{\psi}$ is a quantum phase state. We subsequently highlight a few issues which we had to overcome for arbitrary quantum states, motivating our final~algorithm.

\subsubsection{Classical low-degree testing}
A seminal result  of Samorodnitsky~\cite{samorodnitsky2007low} connects two well-known notions in mathematics and  theoretical computer science: \emph{Gowers norm} and \emph{low-degree testing}. In the task of low-degree testing, the goal is to determine if a Boolean function $f: \mathbb{F}_2^n \rightarrow \mathbb{F}_2$ has at most degree $d$ (in its $\FF_2$ representation) or far from all such functions. A prior work of Alon, Kaufman, Krivelevich, Litsyn and Ron~\cite{alon2003testing} gave a low-degree property testing algorithm that implicitly used the \emph{definition} of Gowers norm without explicitly mentioning it, but the work of Samorodnitsky~\cite{samorodnitsky2007low} made this connection explicit. For $k\geq 1$, the Gowers-$k$ norm (denoted $U^{k}$) of a Boolean function $f:\01^n\rightarrow \{-1,1\}$ is defined as:  
\begin{equation}
    \|{f}\|^{2^{k}}_{U^{k}} = \mathop{\Exp}_{x,h_1,\ldots,h_k \in \FF_2^n} \Big[\prod_{S\subseteq [k]}f\big(x+\sum_{i\in S}h_i\big)\Big].
\end{equation}
Intuitively, the $U^k$ norm can be viewed as taking an expectation of $k$-fold discrete derivatives of the Boolean function $f$ over its inputs. Gowers  norm has found several applications in theoretical computer science and mathematics~\cite{bogdanov2010pseudorandom,sam2006gowers,tulsiani2014quadratic,chen2009gowers,trevisan2009guest,green2007distribution}. 
Based on the intuition above, it is not hard to see that if $\|f\|_{U^3}$ is small, then $f$ is far from a degree-$2$ polynomial, but what about a converse statement? The main technical contribution~\cite{samorodnitsky2007low,gowers2024marton} 
was to show: if a function $f$ has high Gowers norm, i.e., $\|f\|_{U^3}\geq \varepsilon$, then $f$ correlates well with a degree-$2$ polynomial, i.e., $\Exp_x[f(x) (-1)^{g(x)}]\geq \poly(\varepsilon)$, where $g: \FF_2^n \rightarrow \FF_2$ is a degree-$2$ polynomial.
This statement was eventually used for \emph{tolerant} property testing degree-$2$ Boolean functions. The starting point of this work was in considering if a quantum-analogue of this converse statement~holds. 

\subsubsection{Testing phase states}
Instead of giving a proof sketch of the result of Samorodnitsky~\cite{samorodnitsky2007low}, we first use ideas inspired by his paper and solve a simpler tolerant property testing problem: what if the unknown state $\ket{\psi}$ was promised to be a \emph{phase state}, i.e., 
$$
\ket{\psi_f}=\frac{1}{\sqrt{2^n}}\sum_{x\in \01^n}f(x)\ket{x},
$$
for some unknown $f:\01^n\rightarrow \{-1,1\}$. Then, could we test if $\ket{\psi_f}$ is $\varepsilon_1$-close to a stabilizer state in fidelity or $\varepsilon_2$-far away? As far as we are aware, sample efficient bounds on this task were unknown prior to our work. We now sketch a high-level idea of how to tolerantly test these states.\footnote{The possibility of  such a property tester assuming $\ket{\psi}$ is a phase state, was implicitly mentioned in~\cite{mehraban2024quadratic}.}  Inspired by~\cite{samorodnitsky2007low}, we give a proof sketch below for this task.
\begin{enumerate}
\item \textbf{Samples versus queries.} One interesting observation we start with here is that: classically one needs \emph{query} access to $f$ in order to compute its Gowers norm (intuitively one needs to make \emph{correlated} queries of the form $f(x),f(y),f(x+y)$, hence needing queries), however in order to obtain an $\varepsilon$-approximation of the $U^3$ norm, it suffices for a quantum algorithm to just obtain $O(1/\varepsilon^2)$ \emph{copies} of the unknown phase state $\ket{\psi_f}$ and it need not have query access to $f$. So with just copies of $\ket{\psi_f}$, a tester can decide if $\|f\|_{U^3}$ is large or small. Below our goal is to show that, if $\|f\|_{U^3}$ is large then $f$ is close to a degree-$2$ function (which in particular corresponds to a stabilizer state).
\item \textbf{Structural observations.} The main observation here is that using Fourier analysis on the Boolean cube, one can show that~\cite[Corollary~6.6]{samorodnitsky2007low} if $\|f\|_{U^3}\geq \varepsilon$, then
\begin{align}
    \label{eq:implicationofclassicalFU3}
    \Exp_{x,y}\Big[\sum_{\alpha,\beta}\widehat{f}_x(\alpha)^2 \widehat{f}_y(\beta)^2 \widehat{f}_{x+y} (\alpha+\beta)^2\Big]\geq \poly(\varepsilon),
\end{align}
i.e., the derivative function $f_x$ (in the direction $x$) is approximately \emph{linear}.\footnote{Note that when $f$ is degree-$2$, then the derivative functions are \emph{exactly} linear and~\cite{samorodnitsky2007low} shows that large Gowers norm implies approximately linearity.} 
\item \textbf{Choice function.} At this point, we use the same crucial idea as in the proof of~\cite[Lemma~6.7]{samorodnitsky2007low}: probabilistically define a choice function $\phi:\01^n\rightarrow \01$ as: $\Pr[\phi(x)=\alpha]=\widehat{f}_x(\alpha)^2$ and using Eq.~\eqref{eq:implicationofclassicalFU3} one can show the \emph{existence} of a choice function $\phi$ which is approximately linear, i.e., a function $\phi$ for which
$$
L(\phi)=\Pr_{x,y}[\phi(x)+\phi(y)=\phi(x+y):\widehat{f}_x(\phi(x))^2, \widehat{f}_y(\phi(y))^2, \widehat{f}_{x+y}(\phi(x+y))^2 \geq \varepsilon/4]\geq \poly(\varepsilon).
$$
\item \textbf{Additive combinatorics.}  At this point, similar to the proof of~\cite{samorodnitsky2007low}, we use a well-known result of Balog-Szemeredi-Gowers (BSG)~\cite{balog1994statistical} as well as the polynomial Freiman–Ruzsa conjecture~\cite{ruzsa1999analog} (which we discuss further below) to show the existence of a symmetric  matrix $B \in \mathbb{F}_2^{n\times n}$ for which $\phi(x)=B\cdot x$. Using ideas from linear algebra~\cite{viola2011combinatorics,hatami2019higher}, one can conclude that $|\langle \psi_g |\psi_f \rangle|^2\geq \poly(\varepsilon)$ for some quadratic function $g:\01^n\rightarrow \01$.\footnote{We remark that the original result by \cite{samorodnitsky2007low} had an exponential dependence of $2^{-\poly(1/\varepsilon)}$ on $\varepsilon$ which was improved to $\poly(\varepsilon)$ by the recent breakthrough work of Gowers, Green, Manners and Tao~\cite{gowers2023conjecture} that resolved the polynomial Freiman–Ruzsa conjecture.} 
\end{enumerate}
Hence, the above arguments show that if $\|f\|_{U^3}\geq \varepsilon$, then $f$ is $\poly(\varepsilon)$-correlated with a degree-$2$ phase state (which in particular is a stabilizer state). This gives an $\poly(1/\varepsilon)$ tolerant property testing algorithm \emph{assuming} that the input state is of the form $\frac{1}{\sqrt{2^n}}\sum_x f(x)\ket{x}$. 
Now we describe the challenges in proving a similar statement for an arbitrary state $\ket{\psi}$. 

\subsubsection{Challenges in going beyond phase states.} Consider an  $n$-qubit pure state $\ket{\psi}$ (that is not necessarily a phase state). The proof sketch above does not generalize for the following~reasons:
\begin{enumerate}[$(i)$]
    \item A notion of Gowers norm is unclear for quantum states; although one could define a natural analogue similar to the classical setting, properties of this definition were unknown (and its relation to to stabilizer testing unclear) prior to our work,
    \item The notion of a random \emph{choice function} above has been crucial in several works in this area~\cite{samorodnitsky2007low,tulsiani2014quadratic,viola2011combinatorics}, but for arbitrary quantum states, these functions do not even make sense. In particular, this choice function above was motivated by the following fact: if an algorithm was given copies of a phase state upon which it performed Bell sampling, then the samples are all of the form $\{(z,\phi(z)):z\in \01^n\}$ (which coincides with the stabilizers of a degree-$2$ phase state) for an arbitrary $\phi:\01^n\rightarrow \01$. In particular for a degree-$2$ phase state $\ket{\psi}$, for every $z$ there always \emph{exists} a $\phi(z)$ such that $W_{z,\phi(z)}$ (up to a phase) stabilizes $\ket{\psi}=\sum_x f(x)\ket{x}$, i.e., $|\langle \psi|W_{z,\phi(z)}|\psi\rangle|=1$, where $W_{a,b}$ is the $n$-qubit Weyl operator given by $X^a Z^b$ (up to a global phase). But this is clearly false for more general states. In fact, even if one considers stabilizer states, then it is not true.  Consider the simplest  GHZ states
    $$
    \frac{1}{\sqrt{2}}\big(\ket{0^n}+\ket{1^n}\big),
    $$
    then the stabilizers of this state are $\{X^aZ^b: a\in \{0^n,1^n\}, \langle b,1^n\rangle=0\}$. So the first $n$ bits of $(a,b)$ are either $0^n$ or $1^n$, so for all $x\neq \{0^n,1^n\}$, $\phi(x)=0$, and defining such a choice \emph{function} would not work. 
    \item In item $(4)$ of the proof sketch above for testing phase states, we need to \emph{indirectly} show that the set of Paulis corresponding to $\{(x,\phi(x))\}$ can be covered efficiently by an union of stabilizer groups, also called \emph{stabilizer covering}. Using a recent result of King, Gosset, Kothari, Babbush~\cite{king2024triply} regarding stabilizer coverings of Paulis corresponding to arbitrary states, we would have obtained a testing algorithm which could have handled a gap of $\varepsilon_1\geq 1/k$ versus $\varepsilon_2\leq n^{-k}$. However, our goal is to handle $\varepsilon_1,\varepsilon_2$ that is explicitly independent of $n$ (even though $k$ may be a function of $n$). In fact we remark that several works in literature have considered the task of producing a stabilizer covering of subsets of Paulis corresponding to quantum states~\cite{chen2024optimal,king2024triply} or Pauli channels~\cite{flammia2020efficient}, but are only able to produce generic  upper bounds by the uncertainty principle~\cite{king2024triply} or show that it is equivalent to other notions of hardness of finding sets of commuting Paulis~\cite{chen2024optimal}, all of which would introduce a dimension-$n$ dependence in the tolerance parameters of the property testing  algorithm.
\end{enumerate}

In the next section, we now show how to address the challenges that we mentioned above.

\subsection{Gowers norms, inverse theorems and tolerant testing}
\paragraph{Gowers norm.} On a conceptual level, our first contribution is a definition of Gowers norm: for an arbitrary quantum state $\ket{\psi}=\sum_x f(x)\ket{x}$ where $f=(f(x))_x$ is an $\ell_2$-normed vector, one can define its Gowers norm as follows  
\begin{equation}
    \gowers{\ket{\psi}}{k} = 2^{n/2} \left[ \Exp_{x,h_1,h_2,\ldots,h_k \in \mathbb{F}_2^n} \prod_{\omega \in \mathbb{F}_2^k} C^{|\omega|} f(x + \omega \cdot h) \right]^{1/2^{k}},
\end{equation}
where $C^{|\omega|} f = f$ if $|\omega| := \sum_{j \in [k]} \omega_k$ is even and is $\overline{f}$ if $|\omega|$ is odd with $\overline{f}$ denoting the complex conjugate of $f$. Using this definition we make a few observations: $(i)$ quantum states $\ket{\psi}$ produced by unitaries containing gates up to the $d$-th level of the diagonal Clifford hierarchy satisfy  $\gowers{\ket{\psi}}{d+1}^{2^{d+1}}=1$,  and $(ii)$ more importantly for our main result, we observe that $\gowers{\ket{\psi}}{3}$ is very closely related to the output of Bell sampling. For a quantum state $\ket{\psi}$, the characteristic distribution is defined as $p_\psi(x)=2^{-n}\cdot |\langle \psi|W_x|\psi\rangle|^2$, and the Weyl distribution is defined as its convolution $q_\psi=4^n (p_\psi\star p_\psi)$. It is well-known that Bell-difference sampling~\cite{gross2021schur} allows us to estimate the expression $\Exp_{x\sim q_\psi}[\langle \psi|W_x|\psi\rangle^2]$. Our main observation is that, 
\begin{align}
\label{eq:gowerswithbellsampling}
\gowers{\ket{\psi}}{3}^{16}\leq \Exp_{x\sim q_\psi}[\langle \psi|W_x|\psi\rangle^2]\leq \gowers{\ket{\psi}}{3}^{8},
\end{align}
in particular at least for $k=3$, the Gowers-$3$ norm serves as a good-proxy for the output of Bell sampling. Although, the Gowers-$3$ norm is essentially a ``proxy" for the quantity $\Exp_{x\sim q_\psi}[\langle \psi|W_x|\psi\rangle^2]$, which was the basis of all previous works on Bell sampling~\cite{grewal2022low,grewal2023improved}, we believe that the introduction of this is new for quantum computing and might motivate further connections to other areas (like we mention as part of the open questions).\footnote{We remark that all the results in this paper were motivated by Gowers norm and its utility for testing Boolean functions before we realized a close connection between Gowers-$3$ norm of states, and outputs of well known subroutines such as Bell sampling and Bell difference sampling. The perspective of Gowers norm is thus mainly helpful as a ``ladder" to obtain our results for \emph{stabilizer state testing}.}

The reason why prior work analyzed the expression $\Exp_{x\sim q_\psi}[\langle \psi|W_x|\psi\rangle^2]$ is because it was shown in~\cite{gross2021schur,grewal2023improved} that this expression serves as a \emph{lower bound} for stabilizer fidelity,~i.e.,
\begin{align}
\label{eq:GIKLlowerboundstabfidelity}
\max_{\ket{\phi}\in \Sh}|\langle \phi|\psi\rangle|^2 \geq  \big(4\cdot \Exp_{x\sim q_\psi}[\langle \psi|W_x|\psi\rangle^2]-1\big)/3\geq  \big(4\cdot \gowers{\ket{\psi}}{3}^{16}-1\big)/3,
\end{align}
but this lower bound only makes sense as long as $\gowers{\ket{\psi}}{3}^{16}\geq 1/4$, and this is why these results only gave tolerant testing algorithms for $\varepsilon_1^6\geq 1/4$. In other words, prior works implied that stabilizer fidelity of $\ket{\psi}$ being $\geq \gamma$ implied that $\gowers{\ket{\psi}}{3}^8 \geq \gamma^6$. However, what if Gowers-3 norm of $\psi$ is $\gowers{\ket{\psi}}{3}^8 \geq \gamma$? Does it imply a lower bound on stabilizer fidelity?  Our second main contribution is a positive answer to this question and improving the lower bound in Eq.~\eqref{eq:GIKLlowerboundstabfidelity}; so one need not assume $\varepsilon_1\geq 1/4$, and we call this the \emph{inverse Gowers-$3$ theorem} for quantum states.

\paragraph{Inverse Gowers theorem.} Like how, large Gowers $U^3$ norm for functions, implies high correlation with degree-$2$ Boolean functions, our inverse Gowers-$3$ theorem for quantum states shows~that 
$$
\max_{\ket{\phi}\in \Sh}|\langle \phi|\psi\rangle|^2 \geq \poly(\gowers{\ket{\psi}}{3}),
$$
i.e., high $\gowers{\ket{\psi}}{3}$ implies that the unknown $\ket{\psi}$ is close to a stabilizer state. In order to prove this, we put together a few structural results that we discuss below. Below, we will assume that  $\gowers{\ket{\psi}}{3}^{8}\geq \gamma$. Our overall proof strategy and the different sets obtained obtained as a consequence of high Gowers-$3$ norm is summarized in Figure~\ref{fig:gowers_sets}.

\begin{figure}
    \centering
    \includegraphics[width=0.95\linewidth]{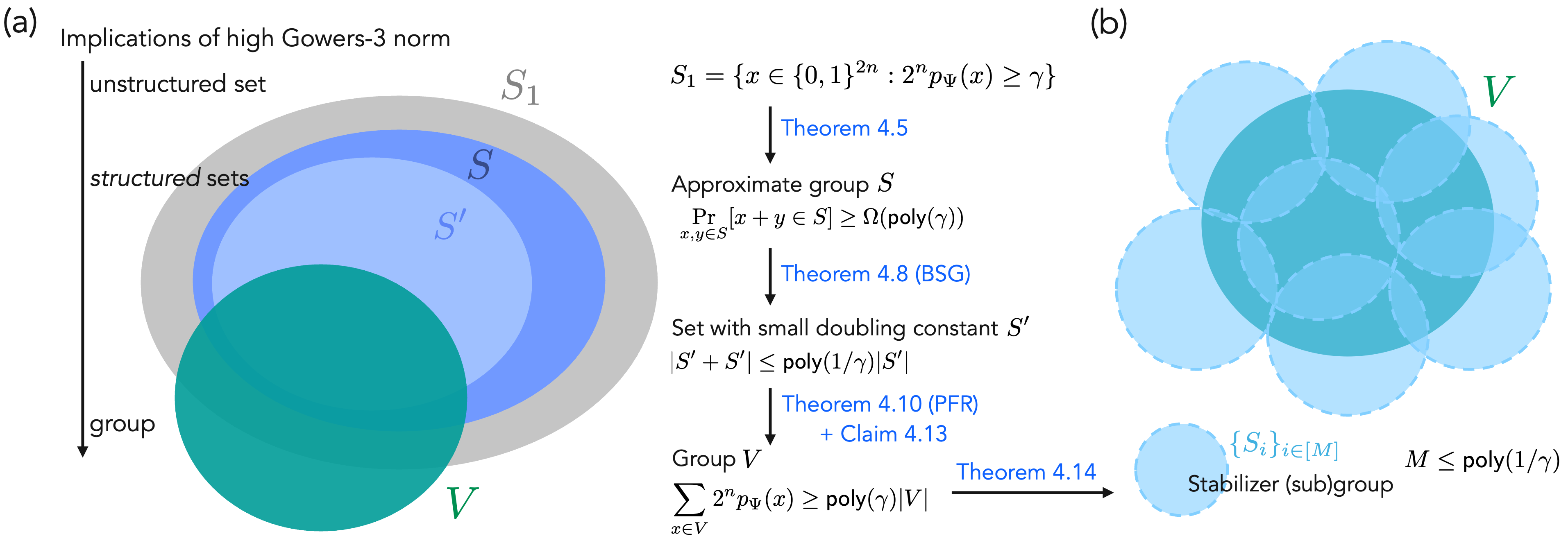}
    \caption{Illustration of sets, their properties and theorem statements obtained en route to show an inverse theorem of the Gowers-$3$ norm of quantum states. (a) Sets obtained starting from $S_1=\{x \in \mathbb{F}_2^{2n}:2^n p_\Psi(x) \geq \gamma \}$ as a consequence of high Gowers-$3$ norm of an $n$-qubit quantum state $\ket{\psi}$ i.e., $\gowers{\ket{\psi}}{3}^8 \geq \gamma$, are depicted. Each of the sets $S_1$, $S$, $S'$, and group $V$ shown are large with sizes being greater than $\poly(\gamma) \cdot 2^n$. Proof strategy of the inverse theorem is indicated by the order of theorems and their implications. (b) Illustration of a stabilizer covering of the group $V$ obtained in (a) is shown along with the theorem commenting on the size of the stabilizer covering.}
     \label{fig:gowers_sets}
\end{figure}
\subsubsection{Large Gowers norm to approximate subgroup}
\label{sec:largegowersstart}
Our starting point is the subset
\begin{align}
\label{eq:defnofS1}
 S_1 = \{ (x,\alpha) : x, \alpha \in \mathbb{F}_2^{n} \, \text{ and } |\langle \psi|W_{x,\alpha}|\psi\rangle|^2\geq \gamma/4 \}.
\end{align}
It is not hard to see that $|S_1|\geq \gamma/2\cdot 2^n$.  Recall that in~\cite{samorodnitsky2007low}, for every $x$, there exists an $\alpha$ such that $(x,\alpha)\in S_1$, so $\alpha$ was the associated label for the choice function. In our setting, we probabilistically define a \emph{choice set}  $S$ as follows: for every $(x,\alpha)\in S_1$, include $(x,\alpha)\in S$ with some carefully chosen probability. We first show that with high probability $(i)$ $|S|$ is large, i.e., $|S| \in [\gamma^2,1/\gamma]\cdot 2^n$; $(ii)$ next we  analyze what is the probability that such an $S$ is closed under addition on $\FF_2$, i.e., for a uniformly random choice of $(x,\alpha)$ and $(y,\beta)$ in $S$, what is the probability of $(x+y,\alpha+\beta)\in S$? To this end, for every  $Z\subseteq \01^{2n}$ we first define 
$$
L(Z)=\Pr_{(x,\alpha),(y,\beta)\in S}[x+y,\alpha+\beta\in S].
$$ 
With some technical work, we are able to show that $$
\gowers{\ket{\psi}}{3}^{16}\leq \poly(1/\gamma)\cdot \Exp_{S}[L(S)]+O(\gamma^2).
$$
Since the Gowers norm was assumed to be large, the LHS of the quantity above is at least $\gamma^2$, we have that $\Exp_{S}[L(S)]\geq \poly(\gamma)$. Furthermore, we are also able to show the existence of $S \subseteq S_1$ for which $(i)$ $|S| \in [\gamma^2,1/\gamma]\cdot  2^n$,  $(ii)$ $S$ consists of $(x,\alpha)$s for which $|\langle \psi|W_{x,\alpha}|\psi\rangle|^2\geq \gamma/4$, $(iii)$ $S$ contains $\id$ and $(iv)$  $S$ is \emph{approximately} a subgroup, i.e.,
\begin{align}
    \label{eq:approximatelyadditive}
\Pr_{(x,\alpha), (y,\beta) \in S}[ (x+y,\alpha+\beta) \in S]\geq \poly(\gamma).
\end{align}

\subsubsection{Approximate subgroup to small doubling sets}
In order to go from the subset $S$ that is \emph{approximately} closed under addition to an Abelian subgroup (and eventually a stabilizer subgroup), we carry out the following three step process.
\begin{enumerate}
\item We first use the seminal BSG~\cite{balog1994statistical} theorem that states that if a set $S$ satisfies Eq.~\eqref{eq:approximatelyadditive}, then there exists $S' \subseteq S$ of large size that is nearly closed under addition (in other words,  if many $x,y\in S$ also satisfy that $x+y\in S$, then there is a large subset $S'\subseteq S$ with not much larger $|S'+S'|$). In particular, there exists $S'$ of size $|S'| \geq \gamma\cdot  |S|$ with $|S' + S'| \leq \poly(1/\gamma) \cdot |S|$.
\item  So far, we have only shown that $S'$  is a subset of $S$ and satisfies $|S'+S'| \leq \poly(1/\gamma) |S'|$. To  show that the set $S'$ identified in step $(1)$ can be covered by a \emph{group}, we then use the recent breakthrough result of Gowers, Green, Manners and Tao~\cite{gowers2023conjecture}, which resolved the polynomial Freiman–Ruzsa conjecture and showed that $S'$ can be covered by at most $\poly(1/\gamma)$ many cosets of some subgroup $V \subset \FF_2^{2n}$ with size $|V| \leq |S'|$. This allows us to conclude that there exists a coset of $V\subset \FF_2^{2n}$, which we will denote by $z + V$ such that
\begin{equation}
    \left| S' \cap (z+V) \right| \geq \eta |S'| \geq \poly(\gamma) 2^n,
\end{equation}
where $\eta = c \cdot \poly(1/\gamma)$ for some constant $c \geq 1$.
\item We first remark that although the result of~\cite{gowers2024marton} might already seem sufficient for our purposes, we are not done yet. Note that an \emph{affine} shift of a subgroup need not be a stabilizer subgroup and neither is it clear that it can be covered by stabilizer group (potentially one needs $\poly(n)$ many stabilizer groups to cover). As far as we are aware, the affine shift seems \emph{necessary} for results with the flavour of~\cite{gowers2023conjecture}. Furthermore, an application of uncertainty principle (like used in the recent works~\cite{king2024triply,chen2024optimal}) would imply a covering of $n^{1/\gamma^2}$. Our main observation is that, one can get rid of the affine shift for our purposes in the following sense. We observe that
$$
\Exp_{y \in V}\left[ 2^n p_\Psi(y+z) \right]\geq \frac{|S' \cap (z+V)|}{|S'|}  \cdot {\gamma}/{4} \geq  \eta\cdot \gamma/4,
$$
and then using calculations involving the symplectic Fourier transform of $p_\Psi$, we show that 
$$
\max_{z\in \01^{2n}}\Exp_{y \in V}\left[ 2^n p_\Psi(y+z) \right]= \max_{z\in V}\Exp_{y \in V}\left[ 2^n p_\Psi(y+z) \right].
$$
This implies that the probability mass over $V$ is large i.e., $\Exp_{y \in V}\left[ 2^n p_\Psi(y) \right] \geq \eta \cdot \gamma/4 \geq \poly(\gamma)$.
\end{enumerate}

\subsubsection{Approximate group covering to stabilizer covering}\label{sec:intro_stab_covering}
Given our previous observation, we now shift gears from trying to produce a stabilizer covering of the Paulis corresponding to $S'$, and instead attempt to cover the Paulis of the subgroup $V$. In order to comment on the size of the stabilizer covering required, one needs to understand the maximal size of a set containing pairwise anti-commuting Paulis in the subgroup $V$, denoted by $\nac(V)$. Particularly, the stabilizer covering of $V$ would then be at most $2^{\nac(V)}$. We first note that there exists a Clifford unitary $U$ and a pair of integers $(k,m)$ (with $k+m \leq n$) such that $V$ can be equivalently expressed as $UVU^\dagger = \calP^{k} \times \la Z_{k+1}, \ldots, Z_{k+m}\ra$ with $2k + 1 = \nac(V)$, where $\calP^k$ is the set of all $k$-qubit Paulis, and $Z_i$ is the $n$-qubit Pauli with $Z$ acting on the $i$th qubit and trivially elsewhere. Our main contribution is to then obtain an upper bound on $\sum_{W_x \in UVU^\dagger} 2^n p_{\tilde{\Psi}}(x)$ where $\tilde{\Psi} = U \Psi U^\dagger$, by analyzing the contribution from expectations of the $k$-qubit reduced density matrix of $\tilde{\Psi}$ on the subsystem over the first $k$ qubits against Paulis in $UVU^\dagger$. This allows us to show that
$$
\poly(\gamma) 2^{2k + m} = \poly(\gamma) |V| \leq \sum_{x \in V} 2^n p_\Psi(x) = \sum_{W_x \in \calP^k \times \la Z_{k+1},\ldots,Z_{k+m}\ra} 2^n p_{\tilde{\Psi}}(x) \leq 2^{k+m},
$$
where we have indicated the size of $V$ being $|V| = 2^{2k+m}$ on the left-hand size. Given our bound on $ \sum_{x \in V} 2^n p_\Psi(x)$, we immediately observe that  $k = O(\log(1/\gamma))$ and thereby giving us that the stabilizer covering of $V$ is at most $4^{k}=\poly(1/\gamma)$. Let us denote the stabilizer covering as $\{S_{j}\}_{j \in [4^{k}]}$. We now choose $S_i$ which maximizes $\sum_{y \in V \cap S_i}\left[ 2^n p_\Psi(y) \right]$. For this choice of $S_i$, we get $\sum_{y \in V \cap S_i} p_\Psi(y) \geq \poly(\gamma) \cdot \Omega(1/\poly(1/\gamma)) = \poly(\gamma)$. Using a well-known analysis, one can show that: since $\sum_{y \in S_i} p_\psi(y)$ is large for a Lagrangian subspace $S_i$, the stabilizer fidelity of $\ket{\psi}$ is at least $\poly(\gamma)$. This concludes the proof of our inverse Gowers theorem for arbitrary quantum states.

\subsubsection{Tolerant testing algorithm}
As we mentioned in our warm-up discussion, one can compute the Gowers-$3$ norm of an arbitrary quantum state $\ket{\psi}$ via Bell sampling on $\ket{\psi} \otimes \ket{\psi^\star}$. We can also compute $\Exp_{x\sim q_\psi}[\langle \psi|W_x|\psi\rangle^2]$ via Bell difference sampling using only copies of $\ket{\psi}$. In the case where $\max_{\phi\in \Sh}|\langle\phi |\psi\rangle|^2 \leq  \varepsilon_2$,  our inverse Gowers theorem implies that $\gowers{\ket{\phi}}{3}^8\leq \poly(\varepsilon_2)$ and in the  case where  $\max_{\phi\in \Sh}|\langle\phi |\psi\rangle|^2 \geq  \varepsilon_1$, recent works~\cite{grewal2023improved,gross2021schur} imply that $\gowers{\ket{\phi}}{3}^8\geq  \poly(\varepsilon_1)$. By our choice of $\varepsilon_2\leq {\poly(\varepsilon_1)}$, we ensure that the promise gap is $\poly(\varepsilon_1)$. We can also instead estimate $\Exp_{x\sim q_\psi}[\langle \psi|W_x|\psi\rangle^2]$, which produces a similar inverse theorem. So the testing algorithm takes $\delta=\poly(1/\varepsilon_1)$ many copies of $\ket{\psi}$ to estimate $\Exp_{x\sim q_\psi}[\langle \psi|W_x|\psi\rangle^2]$  well-enough. By an appropriate choice of constants, our main theorem follows and the overall gate-complexity involves $n$ many Pauli and CNOT gates.

\subsubsection{Quantum-inspired conjecture in additive combinatorics} 
Another contribution of our work is, we propose a conjecture in additive combinatorics, motivated by the result and proof strategy of our inverse theorem for the Gowers-$3$ norm of quantum states. In our proof of the inverse theorem, after identifying the approximate subgroup $S$, we used results from combinatorics 
to obtain the set with small doubling $S'$ and then the subgroup $V$. Then, to comment on the stabilizer covering of $V$, we analyzed the expectation of different Weyl operators against the state $\ket{\psi}$. But, what if we had attempted to proceed in a more \emph{combinatorial} fashion? This is the gap that the conjecture attempts to fill in.

To discuss this conjecture, we first need the following notation. For an arbitrary subset $A\subseteq \01^{2n}$ and integer $t\geq 1$,  define $tA=\{a_1+a_2+\cdots +a_t:\{a_i\}_{i\in [t]}\in A\}$). Then the conjecture is
\begin{conjecture}
\label{conjecture:doubling}
Let $A \subseteq \01^{2n}$ be s.t.~$|A|\geq 2^n/K$ and $|2A|\leq K |A|$. Then, $\nac(2A)\leq \poly(K,\nac(A))$.
\end{conjecture}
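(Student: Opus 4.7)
The plan is to combine the polynomial Freiman--Ruzsa (PFR) theorem of Gowers--Green--Manners--Tao (already invoked elsewhere in the paper) with a pigeonhole argument on cosets and a symplectic translation trick. Since $|2A|\leq K|A|$, PFR yields a subgroup $H\subseteq \01^{2n}$ with $|H|\leq |A|$ and a set $C\subseteq \01^{2n}$ of size $|C|\leq \poly(K)$ such that $A \subseteq \bigcup_{v\in C}(v+H)$. Consequently $2A \subseteq \bigcup_{v,v'\in C}(v+v'+H)$ is covered by at most $\poly(K)$ cosets of $H$.

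If $T\subseteq 2A$ is pairwise anti-commuting with $|T|=\nac(2A)=m$, pigeonhole produces $T'\subseteq T$ of size $|T'|\geq m/\poly(K)$ lying in a single coset $z+H$. Fix $x_0\in T'$ and write the other elements of $T'$ as $x_0+h_i$ with $h_i\in H$. The anti-commutation $\langle x_0,x_0+h_i\rangle_s=1$ forces $\langle x_0,h_i\rangle_s=1$ for every $i$, and using the alternating bilinearity of the symplectic form over $\01$ one computes $\langle x_0+h_i,x_0+h_j\rangle_s=\langle x_0,h_i\rangle_s+\langle x_0,h_j\rangle_s+\langle h_i,h_j\rangle_s=\langle h_i,h_j\rangle_s$, which must equal $1$ for $i\ne j$. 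Hence $\{h_i\}$ is pairwise anti-commuting inside $H$, so $\nac(H)\geq |T'|-1$, and therefore $\nac(2A)\leq \poly(K)\cdot(\nac(H)+1)$.

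The remaining and, I expect, hardest step is to establish $\nac(H)\leq \poly(K,\nac(A))$. By averaging over the PFR covering, some coset $c^*+H$ has $|A\cap(c^*+H)|/|H|\geq 1/\poly(K)$; the goal is to lift any pairwise anti-commuting $\{y_1,\ldots,y_s\}\subseteq H$ with $s=\nac(H)$ to a pairwise anti-commuting subset of $A\cap(c^*+H)$ of size $\geq s/\poly(K)$. The naive approach is to seek $x^*\in A\cap(c^*+H)$ with $x^*+y_i\in A\cap(c^*+H)$ for many indices $i$ and $\langle x^*,y_i\rangle_s=1$ for those $i$; however, even ignoring the membership requirement, the linear constraints $\langle x^*,y_i\rangle_s=1$ already cut out only a $2^{-s}$ fraction of the coset, which is far below the density $1/\poly(K)$ once $s$ exceeds $O(\log K)$.

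Overcoming this obstruction will likely require a mechanism that interleaves additive and symplectic structure, since standard sumset tools (PFR, Pl\"unnecke--Ruzsa, BSG) are symplectic-agnostic and do not see $\nac$. One plausible route is a Bogolyubov-style argument showing that $A-A$ contains a large subgroup on which both structures can be jointly controlled, combined with a character-sum or second-moment argument to locate many anti-commuting ``lines'' $\{x^*,x^*+y_{i_1},\ldots,x^*+y_{i_t}\}\subseteq A$. A more ambitious alternative is a symplectic refinement of PFR that directly produces an approximating subgroup with $\nac(H)$ controlled by $\nac(A)$ up to polynomial losses; such a statement, were it true, would be of independent interest in additive combinatorics and would immediately imply the conjecture.
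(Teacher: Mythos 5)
The statement you were asked to prove is explicitly presented in the paper as an open \emph{conjecture}, not a theorem: the authors report that experts in additive combinatorics told them it is open and appears stronger than the polynomial Freiman--Ruzsa theorem, closer in spirit to the polynomial Bogolyubov conjecture. There is therefore no proof in the paper to compare against. The paper instead (a) shows in Appendix~\ref{app_sec:conjecture} that the conjecture, if true, would yield a (weaker, exponential-in-$1/\gamma$) stabilizer-covering bound, and (b) sidesteps the conjecture entirely in Section~\ref{sec:item3} by a non-combinatorial argument involving reduced density matrices (Fact~\ref{eq:ub_sum1} and Claim~\ref{claim:small_k}).

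Your reduction is correct as far as it goes and parallels the combinatorial machinery the paper develops in the appendix. Applying PFR to cover $A$ (hence $2A$) by $\poly(K)$ cosets of a subgroup $H$, pigeonholing a pairwise-anticommuting $T \subseteq 2A$ into a single coset $z+H$, and translating by a fixed $x_0 \in T'$ to show the differences $\{h_i\}_i \subseteq H$ are pairwise anticommuting gives $\nac(2A) \leq \poly(K)\,(\nac(H)+1)$. This is essentially Theorem~\ref{thm:max_AC_set_with_small_covering} of the paper, which shows $\nac(A) \leq 2M\,\nac(B)$ whenever $A$ is covered by $M$ translates of $B$; the paper translates by the coset representative and splits into commuting and anticommuting halves, losing a factor $2$, whereas your trick of translating by an element of the anticommuting set itself is marginally tighter. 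But you correctly isolate the crux: the step $\nac(H) \leq \poly(K, \nac(A))$ is missing, and your analysis of why the naive lifting fails --- the intersection of $s$ symplectic half-spaces inside a coset has density $2^{-s}$, which loses to the coset density $1/\poly(K)$ as soon as $s = \omega(\log K)$ --- is exactly the obstruction. A symplectic-aware refinement of PFR or Bogolyubov, which you gesture at, is precisely what the authors (and the experts they consulted) anticipate would be needed; no such tool is currently known. Your proposal is an honest partial reduction with the hard part left open, which accurately reflects the status of the conjecture.
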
 

We asked experts in additive combinatorics about this conjecture who said it was open and an interesting \emph{structural question} about small doubling sets.\footnote{We asked authors of~\cite{gowers2023conjecture} who suggested this conjecture was stronger than polynomial Frieman-Ruzsa theorem and  closer in spirit to the polynomial Bogulybov conjecture (which remains an  open question).} Using this conjecture, we show below how to $(i)$ obtain a stabilizer covering of $V$ of size $2^{\poly(1/\gamma)}$, and $(ii)$ show that $\ket{\psi}$ has a stabilizer fidelity of at least $2^{-\poly(1/\gamma)}$ (note that the result in Section~\ref{sec:intro_stab_covering} already showed $\ket{\psi}$ has a stabilizer fidelity of at least $\poly(\gamma)$, so this is exponentially weaker). Given that we have an \emph{alternate} technique to show a better stabilizer covering of $V$, it then gives evidence that the conjecture is true. So we feel it is worthwhile discussing this conjecture. 
 We now sketch how we arrived at this conjecture and why it implies a stabilizer covering (with a detailed proof in Appendix~\ref{app_sec:conjecture}). 

As in Section~\ref{sec:intro_stab_covering}, we comment on the stabilizer covering of $V$ by looking at the number of anti-commuting Paulis in $V$. Our main contribution is obtaining better bounds on the size of anti-commuting set of Paulis in structured subsets. We show that, if  an arbitrary subset $A\subseteq \01^{2n}$ can be covered by $M$ cosets (or translates) of another set $B \subseteq \mathbb{F}_2^{2n}$, then one can upper bound $\nac(A)$~as
$$
\nac(A)\leq 2M\cdot \nac(B).
$$ 
Courtesy the result of \cite{gowers2023conjecture} and the Ruzsa's covering lemma~\cite{taovu2006comb}, we can show that $V$ itself can be covered by $\poly(1/\gamma)$ translates of $2S'$. This then allows us to show that 
$$
\nac(V)\leq \poly(1/\gamma)\cdot \nac(2S').
$$
\emph{Assuming Conjecture~\ref{conjecture:doubling}}, we have that $\nac(2S')\leq \poly(1/\gamma,\nac(S'))$. Additionally, we have that $\nac(S')=O(1/\gamma)$ using an uncertainty principle. This then yields $\nac(V)=\poly(1/\gamma)$, and allows us to show that one can cover the Paulis corresponding to $V$ by a disjoint union of $M = 2^{\poly(1/\gamma)}$ many stabilizer groups $\{S_a\}_{a \in [M]}$. We now choose $S_i$ which maximizes $\sum_{y \in V \cap S_i}\left[ 2^n p_\Psi(y) \right]$. For this choice of $S_i$, we get $\sum_{y \in V \cap S_i} p_\Psi(y) \geq \poly(\gamma) \cdot 2^{-\poly(1/\gamma)}$. One can then show that: since $\sum_{y \in S_i} p_\psi(y)$ is large for a Lagrangian subspace $S_i$, the stabilizer fidelity of $\ket{\psi}$ is at least $2^{-\poly(1/\gamma)}$.

\subsection{Discussion and open questions}
\textbf{Outlook.} In this work, we settled the question of tolerant testing stabilizer states. As mentioned in the introduction, this is the first step towards discovering which quantum states are testable in a tolerant manner. We remark we are only aware of few works~\cite{asadi2024quantum} that have combined the rich field of additive combinatorics and quantum algorithms. To this end, we highlight two things.\\[1mm]
$(i)$ \emph{Choice sets.} As far as we are aware, the notion of choice sets has not appeared before in literature. Often it is required~\cite{sam2006gowers,samorodnitsky2007low,tulsiani2014quadratic} that there is a ``product distribution" between the first and second argument of the set $S_1$ (defined in Section~\ref{sec:largegowersstart}). Particularly, Samorodnitsky's proof requires that the marginal distribution on the first $n$ bits is uniform and independent of the second $n$ bits. Additionally, for every $x$, there is \emph{at least} one $\alpha$ which would be sampled. Our work shows that even with a joint distribution over the two arguments, there exists an approximate subgroup if Gowers-$3$ norm is high. 
We suspect this could be potentially useful in testing distribution classes, linearity over unknown subspaces or even higher-order inverse Gowers~theorems.\\[1mm] $(ii)$ \emph{Quantum-inspired conjecture} in combinatorics might be of independent interest to the fields of theoretical computer science and additive combinatorics.

Both of these could inspire further research in the intersection of these two~areas.

\textbf{Further questions.} Our work opens up several interesting directions for future work.
\begin{enumerate}
\item \textbf{Handling additive gaps.} Our tolerant testing algorithm requires  $\varepsilon_2\leq \poly(\varepsilon_1)$, but what if $\varepsilon_1-\varepsilon_2$ was inverse polynomial in $n$? Proving a tolerant tester in this regime is open.

\item \textbf{Constants in our protocol.}  We did not make any attempt to optimize the constants in our testing protocol, it would be good to make these constants as small as possible.

\item \textbf{Pseudorandomness.} Classically, there is an intricate connection between a function that has small Gowers norm and it being pseudorandom. It is an interesting direction if one could show quantum states with low Gowers norms are pseudorandom quantum states. 

\item \textbf{Going beyond stabilizer states.} The starting point of our work was~\cite{samorodnitsky2007low} which showed that if a function has high $\|f\|_{U^3}$ norm then there exists a quadratic Boolean function $q$ such that $(-1)^{q(x)}$  has high correlation with $f$. In our setting, high Gowers-$3$ norm of a quantum state $\ket{\psi}$ implied $\ket{\psi}$ has high stabilizer fidelity. Subsequently, Kim, Li and Tidor~\cite{kim2023cubic} showed an inverse Gowers-$4$ theorem, i.e., if $f$ has high $\|f\|_{U^4}$ then there is a cubic phase polynomial $(-1)^{c(x)}$ (where $c$ is a degree-$3$ Boolean function) which has high correlation with~$f$. In the quantum setting, an interesting direction is to show that if a state has high Gowers-$4$ norm then it has high fidelity with a state which is the output of an IQP~circuit.
\item \textbf{Computing larger Gowers norms.} We showed how to compute the $U^3$ norm of a state given access to copies of it. What is the sample complexity for computing the Gowers-$U^k$ norm for $k\geq 4$? This could potentially be useful for testing low-degree phase states.
\item \textbf{Better shadow tomography?} Like we mentioned above, recent works of~\cite{king2024triply,chen2024optimal} gave algorithms for shadow tomography on $n$ qubits for arbitrary subsets of observables (with running time scaling as $2^n$). We suspect that our  inverse Gowers theorem, its extension to mixed states in Theorem~\ref{thm:large_set_gamma_mixed_states} and better covering bounds could improve their time complexity to $\poly(n)$, albeit in the ``heuristic" model of learning. We leave this open.\footnote{Here, we are referring to the recently-introduced model by Nanashima~\cite{nanashima2021theory} wherein the goal is for the algorithm to be correct on a $(1-\delta)$-fraction of the inputs, allowing the algorithm to have run time dependent on~$\poly(1/\delta)$.}
\end{enumerate}

\paragraph{Note added.} The first version of our paper showed how results in additive combinatorics could be used to obtain a tolerant tester for stabilizer states, but could only handle exponential gaps and required a conjecture in additive combinatorics. Subsequently, along with Sergey Bravyi we showed~\cite{arunachalam2024note}  how to circumvent this conjecture and gave a  tolerant tester that could handle polynomial gaps. This was concurrent with~\cite{helsen2024} that showed a similar tolerant tester directly building on~\cite[Version~2]{ad2024tolerant} by studying the Lovasz-theta number of the graph corresponding to the subgroup $V$ and their graph products. Soon after our note~\cite{arunachalam2024note}, there was another work~\cite{mehraban2024improved} which also showed  a tolerant tester using the toolset from~\cite[Version~1]{ad2024tolerant} but with a different approach.

\paragraph{Acknowledgements.}  Recently, Jop Briet and Jonas Helsen reached out with a proof sketch of~how they might improve~\cite[Version~1]{ad2024tolerant} by studying the Lovasz-theta number of the graph corresponding to the subgroup $V$ and its graph products. Upon this discussion,  we realized that we could achieve a tolerant tester using a note by Sergey Bravyi from much earlier. We are very grateful for this email exchange which motivated us to improve~\cite[Version~1]{ad2024tolerant} to obtain this~version. 

S.A. and A.D. thank the Institute for Pure and Applied Mathematics (IPAM) for its hospitality throughout the long program “Mathematical and Computational Challenges in Quantum Computing” in Fall 2023, during which part of this work was initiated. This work was done (in part) while S.A.~was visiting the Simons Institute for the Theory of Computing, supported by DOE QSA grant \#FP00010905. We thank Sergey Bravyi, Ewout Van den Berg for several discussions on anti-commuting subsets of Paulis, Terence Tao, Jyun-Jie Liao, Kaave Hosseini for clarifications on the recent results~\cite{gowers2023conjecture,gowers2024marton,liao2024improved}, Isaac L. Chuang, Sabee Grewal, and Theodore (Ted) J. Yoder for useful discussions.

\section{Preliminaries}
\subsection{Notation and lemmas} 
Let $[n]$ denote the set $\{1,\ldots,n\}$. For $a\leq b$ and $k\geq 1$, we denote $[a,b]\cdot k$ to be the set $\{ak,\ldots,bk\}$. We denote the finite field with the elements $\mathbb{F}_2$ as $\FF_2$. We will commonly use the notation $x = (v,w) \in \FF_2^{2n}$ where $v$ will always denote the first $n$ bits of $x$ and $w$ the second $n$ bits of $x$. We now state a few facts that we will use in this paper.
\begin{fact}
\label{fact:lowerboundexpectation}
    Let $Y$ be a random variable with $|Y|\leq 1$. If $\Exp[Y]\geq \varepsilon$, then $\Pr[Y\geq \delta]\geq (\varepsilon-\delta)/(1-\delta)$
\end{fact}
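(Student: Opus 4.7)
The plan is to give a short reverse-Markov-style argument by splitting the expectation of $Y$ according to whether the event $\{Y \geq \delta\}$ occurs. Let $p := \Pr[Y \geq \delta]$; the goal is to derive the linear inequality $\varepsilon \leq \delta + (1-\delta)p$ and rearrange.

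First I would write $\Exp[Y] = \Exp[Y \cdot \mathbf{1}[Y \geq \delta]] + \Exp[Y \cdot \mathbf{1}[Y < \delta]]$. On the first piece I use the upper bound $Y \leq 1$ (which is part of the hypothesis $|Y|\leq 1$) to conclude that $\Exp[Y \cdot \mathbf{1}[Y \geq \delta]] \leq \Pr[Y \geq \delta] = p$. On the second piece I use the pointwise bound $Y < \delta$ on the event $\{Y < \delta\}$ to get $\Exp[Y \cdot \mathbf{1}[Y < \delta]] \leq \delta(1-p)$. (Here one should note that this pointwise bound is valid for negative $Y$ too, since we are just using $Y < \delta$; the other hypothesis $Y \geq -1$ is not needed in this direction, but $Y \leq 1$ is essential for the first piece.)

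Combining the two bounds yields $\varepsilon \leq \Exp[Y] \leq p + \delta(1-p) = \delta + (1-\delta)p$. Solving for $p$ gives $p \geq (\varepsilon - \delta)/(1-\delta)$, which is precisely the claimed inequality. There is no real obstacle — the only subtlety is being careful that the $|Y|\leq 1$ hypothesis is used in the correct direction (as an upper bound on $Y$, not as the nonnegativity assumption that one tacitly assumes in the textbook Markov inequality). One may also implicitly assume $\delta < 1$ so that division by $1-\delta$ is well-defined; when $\delta = 1$ the statement is vacuous since $\varepsilon \leq 1$ always.
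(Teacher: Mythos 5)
Your proof is correct and is essentially the same argument as the paper's: both split $\Exp[Y]$ according to the event $\{Y \geq \delta\}$, bound each piece using $Y \leq 1$ and $Y < \delta$ respectively, and rearrange the resulting linear inequality. The only cosmetic difference is that you write the split via indicator functions while the paper phrases it via conditional expectations.
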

\begin{proof}
Let $B$ be a binary random variable. By definition of conditional expectation, we know that $\Exp[Y]=\sum_{b\in \01}\Exp[Y|B=b]\cdot \Pr[B=b]$. Now consider the indicator function $B=[Y\geq \delta]$. So we have that
\begin{align*}
\varepsilon\leq \Exp[Y]&=\Pr[Y\geq \delta]\cdot \Exp[Y|Y\geq \delta]+\Pr[Y< \delta]\cdot \Exp[Y|Y< \delta] \\
&\leq \Pr[Y\geq \delta]+(1-\Pr[Y\geq \delta])\cdot \delta\\
&= \delta+\Pr[Y\geq \delta] (1-\delta),
\end{align*}
where the inequality used that $ \Exp[Y|Y\geq \delta]\leq 1$ and $\Exp[Y|Y< \delta] \leq \delta$ and the fact that $\Exp[|Y|]\leq 1$. Hence we have that
 $\Pr[Y\geq \delta]\geq (\varepsilon-\delta)/(1-\delta)$
\end{proof}

\begin{lemma}[Chernoff bound]
\label{lem:chernoff}
Suppose we flip $T$ 0/1-valued coins, each taking value 1 with probability~$p$. Let $X$ be their sum (i.e., the number of 1s), which has expectation $\mu=pT$. The Chernoff bound implies $\Pr[X\leq(1-\delta)\mu]\leq \exp(-\delta^2\mu/2)$. 
\end{lemma}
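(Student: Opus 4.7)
The plan is to prove the one-sided Chernoff bound via the standard exponential moment method (a.k.a.\ the Cram\'er--Chernoff argument). Let $X_1,\ldots,X_T$ be independent $\{0,1\}$-valued random variables with $\Pr[X_i=1]=p$, so $X=\sum_i X_i$ and $\mu=pT$. For any $t>0$, the event $\{X\leq (1-\delta)\mu\}$ is the same as $\{e^{-tX}\geq e^{-t(1-\delta)\mu}\}$, so Markov's inequality gives
\[
\Pr[X\leq (1-\delta)\mu]\;\leq\; e^{t(1-\delta)\mu}\cdot \Exp[e^{-tX}].
\]
I would then use independence to factor $\Exp[e^{-tX}]=\prod_{i=1}^T \Exp[e^{-tX_i}] = (1-p+pe^{-t})^T$.

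Next, I would apply the elementary inequality $1+z\leq e^z$ to $z=p(e^{-t}-1)$, yielding $1-p+pe^{-t}\leq \exp\!\bigl(p(e^{-t}-1)\bigr)$, and hence
\[
\Exp[e^{-tX}]\;\leq\;\exp\!\bigl(\mu(e^{-t}-1)\bigr).
\]
Combining with the Markov step, the upper bound becomes $\exp\!\bigl(\mu(e^{-t}-1)+t(1-\delta)\mu\bigr)$, which I would optimize by choosing $t=-\ln(1-\delta)$ (valid because $\delta\in (0,1]$). Plugging back gives
\[
\Pr[X\leq (1-\delta)\mu]\;\leq\;\exp\!\Bigl(-\mu\bigl[(1-\delta)\ln(1-\delta)+\delta\bigr]\Bigr).
\]

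The final step, and the only mildly nontrivial one, is to compare the exponent to $-\delta^2\mu/2$. Let $g(\delta)=(1-\delta)\ln(1-\delta)+\delta$ on $[0,1)$. Then $g(0)=0$, $g'(\delta)=-\ln(1-\delta)$ so $g'(0)=0$, and $g''(\delta)=1/(1-\delta)\geq 1$. By Taylor's theorem with remainder, $g(\delta)\geq \delta^2/2$, which yields $\Pr[X\leq (1-\delta)\mu]\leq \exp(-\delta^2\mu/2)$, as claimed. The main (minor) obstacle is choosing the optimizer $t$ and verifying the convexity inequality $g(\delta)\geq \delta^2/2$; everything else is routine application of Markov and independence. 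Since this is a textbook result, I would keep the write-up compact and simply cite a standard reference if page budget is tight.
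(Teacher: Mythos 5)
Your proof is correct and is the standard Cram\'er--Chernoff (exponential moment) derivation of the lower-tail bound. The paper does not actually prove this lemma --- it simply cites the Chernoff bound as a known fact --- so there is no paper proof to compare against; your write-up is a valid and complete justification, and the final inequality $(1-\delta)\ln(1-\delta)+\delta\geq\delta^2/2$ is handled correctly via $g''(\delta)=1/(1-\delta)\geq 1$ and Taylor's theorem.
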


\subsection{Fourier analysis}
We will work with complex-valued Boolean functions $f: \mathbb{F}_2^n \rightarrow \mathbb{C}$. The inner product (or correlation) of two functions $f, g: \mathbb{F}_2^n \rightarrow \mathbb{C}$ is given by
\begin{equation}
    \la f , g \ra = \Exp_x [f(x) \overline{g(x)}].
\end{equation}
Define the Fourier decomposition of $f$ as
$$
f(x)=\sum_S \widehat{f}(S)\chi_S(x),
$$
where $\chi_S(x)=(-1)^{S\cdot x}$ and the \emph{Fourier coefficients} $\widehat{f}(S)\in \CC$ are defined as
$$
\widehat{f}(S)=\Exp_x[f(x)\chi_S(x)].
$$
For a function $f:\01^n\rightarrow \CC$ and $v\in \01^n$, define the derivative operator as
$$
f_v(x)=f(x)\overline{f(v+x)}.
$$
We define the convolutions of two functions $f, g: \mathbb{F}_2^n \rightarrow \mathbb{C}$ as
\begin{equation}
    (f \star g)(x) = \Exp_{z \in \mathbb{F}_2^n} \left[f(x) g(x+z) \right].
\end{equation}
\paragraph{Symplectic Fourier transform.}
We will also work with the symplectic Fourier transform in addition to standard Fourier transform. Firstly, we introduce the symplectic inner product. For $x,y \in \mathbb{F}_2^{2n}$, where we write $x=(x_1, x_2)$ with $x_1$ denoting the first $n$ bits of $x$ and $x_2$ denoting the last $n$ bits (similarly for $y=(y_1,y_2)$), we have
\begin{equation}
    [x,y] = \la x_1, y_2 \ra + \la x_2, y_1 \ra \mod 2.
    \label{eq:symplectic_inner_product}
\end{equation}
The symplectic Fourier transform is defined as follows for functions $f: \FF_2^{2n} \rightarrow \R$:
\begin{align*}
    \widebreve{f}(a) = \frac{1}{4^n} \sum_{x \in \FF_2^{2n}} (-1)^{[a,x]} f(x), \quad
    f(x) = \sum_{a \in \FF_2^{2n}} (-1)^{[a,x]} \widebreve{f}(a)
\end{align*}    
Note that the standard Fourier coefficients are denoted by $\widehat{f}$ and symplectic Fourier coefficients by~$\widebreve{f}$. Usage will also be clear from context.

\subsection{Weyl operators, Lagrangian subspaces and key subroutines}\label{sec:weyl_ops}
The $2$-qubit Pauli matrices matrices are defined as follows
$$\id=\begin{pmatrix}
1 & 0\\
0 & 1
\end{pmatrix}, X=\begin{pmatrix}
0 & 1\\
1 & 0
\end{pmatrix}, Y=\begin{pmatrix}
0 & -i\\
i & 0
\end{pmatrix},Z=\begin{pmatrix}
1 & 0\\
0 & -1
\end{pmatrix}
$$
It is well-known that the $n$-qubit Pauli matrices $\{\id,X,Y,Z\}^n$ form an  {orthonormal basis} for $\mathbb{C}^n$.    In particular, for every $x=(a,b)\in \mathbb{F}_2^{2n}$, one can define the \emph{Weyl operator}
$$
W_x=i^{a\cdot b} (X^{a_1}Z^{b_1}\otimes X^{a_2}Z^{b_2} \otimes \cdots \otimes X^{a_n}Z^{b_n}).
$$  
and these operators $\{W_x\}_{x \in \FF_2^{2n}}$ are orthonormal. Note that each Weyl operator is a Pauli operator and indeed, every Pauli operator is a Weyl operator up to a phase. We define the commutation relations of Paulis $W_x,W_y$ for all $x,y \in \FF_2^{2n}$ using the symplectic product of their corresponding strings $x,y$. In particular, we say that two Paulis $W_x, W_y$ commute if $[x,y]=0$ and $W_x, W_y$ anti-commute if $[x,y]=0$. It is evident that $W_x W_y= (-1)^{[x,y]} W_y W_x$. Furthermore every   $\ket{\psi}$ can be written~as
$$
\ketbra{\psi}{\psi}=\frac{1}{2^n}\sum_{x\in \mathbb{F}_2^n} \alpha_x \cdot W_x,
$$
where   
$$
\alpha_x=\Tr(W_x \ketbra{\psi}{\psi}), \qquad \frac{1}{2^n}\sum_x \alpha_x^2=1.
$$   
Below we will use  {$p_\psi(x)=\alpha_x^2/2^n$}, so that $\sum_x p_\psi(x)=1$.    Since $n$-qubit Paulis can be associated with $2n$ bit strings, we will often refer to a Pauli $P$ by $x\in \01^{2n}$ by which we mean $P=W_x$.

\paragraph{Lagrangian subspace.} We say a subspace $S \subset \FF_2^{2n}$ is isotropic when $[x,y]=0$ for all $x,y \in S$ i.e., all the Weyl operators corresponding to the strings in $S$ commute with each other. We say that an isotropic subspace $S$ is a Lagrangian subspace when it is of maximal size $2^n$ i.e., $|S| = 2^n$. This ties in with the fact that a maximal set of $n$-qubit commuting Paulis is of size $2^n$.

\paragraph{Key subroutines.} We will feature two key routines in our work, namely, \emph{Bell sampling} and \emph{Bell difference sampling}. As its name suggests, Bell sampling $\ket{\psi} \otimes \ket{\phi}$ corresponds to measuring $\ket{\psi} \otimes \ket{\phi}$ in the Bell basis i.e., the orthonormal basis of $(W_x \otimes I) \ket{\Phi^{+}}$ with $\ket{\Phi^+}$ being the state of $n$ EPR pairs (over $2n$ qubits) $\ket{\Phi^+} := 2^{-n/2} \sum_{x \in \FF_2^n} \ket{x}\ket{x}$. The measurement outcome from Bell sampling is thus a $2n$ bit string $x \in \FF_2^{2n}$ that corresponds to a Weyl operator $W_x$. Notably, Montanaro~\cite{montanaro2017learning} showed that you could learn $n$-qubit stabilizer states using $O(n)$ samples from Bell sampling on $\ket{\psi}^{\otimes 2}$. Bell difference sampling a state $\ket{\psi}$ corresponds to Bell sampling on $\ket{\psi}^{\otimes 2}$ twice to produce outcomes $x,y \in \FF_2^{2n}$ and then returning $z = x + y$. Bell difference sampling was proposed in~\cite{gross2021schur} for intolerant testing stabilizer states.

\subsection{Characteristic and Weyl distributions}
We define the \emph{characteristic distribution}~as
\begin{equation}
    p_\Psi(x) = \frac{|\la \psi | W_x | \psi \ra|^2}{2^n},
\end{equation}
which satisfies $\sum_{x \in \FF_2^{2n}} p_\Psi(x)=1$. It can be observed that one can sample from the characteristic distribution by carrying out Bell sampling on $\ket{\psi} \otimes \ket{\psi^\star}$, where $\ket{\psi^\star}$ is the state's conjugate. The \emph{Weyl distribution}~\cite{gross2021schur} is defined as $q_\Psi$ as
\begin{equation}
    q_\Psi(x) = \sum_{y \in \mathbb{F}_2^{2n}} p_\Psi(y) p_\Psi(x+y).
    \label{eq:weyl_distribution}
\end{equation}
We can in fact sample from the Weyl distribution by simply carrying out Bell difference sampling on $4$ copies of $\ket{\psi}$ as was shown in~\cite{gross2021schur}, without requiring any access to $\ket{\psi^\star}$.

We now state some useful statements regarding the Weyl and characteristic distributions.

\begin{lemma}[Fact 3.2, \cite{grewal2022low}]
\label{lemma:expectation_paulis_qPsi}
    $\Exp_{x \sim q_\Psi}\left[|\la \psi | W_x | \psi \ra|^2 \right] = 2^{5n} \sum_{x \in \mathbb{F}_2^{2n}} \widehat{p}_\Psi^3(x)$.
\end{lemma}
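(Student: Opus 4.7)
The statement is a straightforward Fourier-analytic identity once one expands the expectation, so the plan is a direct computation with no combinatorial surprises. First, rewrite the expectation in terms of $p_\Psi$: by definition $|\langle\psi|W_x|\psi\rangle|^2 = 2^n p_\Psi(x)$, and by the definition of the Weyl distribution $q_\Psi(x)=\sum_{y \in \FF_2^{2n}} p_\Psi(y)p_\Psi(x+y)$, so
\begin{align*}
\Exp_{x\sim q_\Psi}\bigl[|\langle \psi|W_x|\psi\rangle|^2\bigr]
= \sum_{x\in \FF_2^{2n}} q_\Psi(x)\cdot 2^n p_\Psi(x)
= 2^n \sum_{x,y \in \FF_2^{2n}} p_\Psi(x)\,p_\Psi(y)\,p_\Psi(x+y).
\end{align*}
The remaining task is to show that the triple sum equals $2^{4n}\sum_c \widehat{p}_\Psi(c)^3$.

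Second, expand each of the three factors $p_\Psi$ using symplectic Fourier inversion, $p_\Psi(u) = \sum_{a\in \FF_2^{2n}} (-1)^{[a,u]}\widebreve{p}_\Psi(a)$. This gives a triple Fourier sum
\begin{align*}
\sum_{x,y} p_\Psi(x)p_\Psi(y)p_\Psi(x+y)
= \sum_{a,b,c} \widebreve{p}_\Psi(a)\widebreve{p}_\Psi(b)\widebreve{p}_\Psi(c)\sum_{x}(-1)^{[a+c,x]}\sum_{y}(-1)^{[b+c,y]}.
\end{align*}
Third, execute the $x$ and $y$ sums. Because the symplectic form is nondegenerate, $\sum_{x\in\FF_2^{2n}}(-1)^{[u,x]} = 4^n\,\mathbf{1}[u=0]$, so the two inner sums collapse to $4^n\,\mathbf{1}[a=c]\cdot 4^n\,\mathbf{1}[b=c]$, leaving
\begin{align*}
\sum_{x,y} p_\Psi(x)p_\Psi(y)p_\Psi(x+y) = 16^n \sum_{c\in \FF_2^{2n}} \widebreve{p}_\Psi(c)^3.
\end{align*}
Combining with the prefactor $2^n$ yields $2^{5n}\sum_c \widebreve{p}_\Psi(c)^3$.

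Finally, to match the hat notation of the statement, observe that the symplectic and standard Fourier coefficients are related by the involution $(a_1,a_2)\mapsto (a_2,a_1)$ on $\FF_2^{2n}$, which is a bijection; hence $\sum_c \widebreve{p}_\Psi(c)^3 = \sum_c \widehat{p}_\Psi(c)^3$, completing the identification with the right-hand side of the lemma. There is no genuine obstacle here; the only point that requires minor care is keeping the factors of $2^n$ straight across the definitions of $p_\Psi$, $q_\Psi$, and the Fourier transform, and confirming that the hat/breve distinction does not change the final value of the cubic sum.
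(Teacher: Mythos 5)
Your proof is correct. The paper itself does not prove this lemma; it cites it as Fact 3.2 of \cite{grewal2022low}, so there is no in-text argument to compare against. Your direct computation --- expanding the expectation via the definition of $q_\Psi$ into a triple sum, applying symplectic Fourier inversion to each factor, collapsing the $x,y$ sums via nondegeneracy of the symplectic form, and then passing from $\widebreve{p}_\Psi$ to $\widehat{p}_\Psi$ via the coordinate-swap identity $\widebreve{f}(a_1,a_2)=\widehat{f}(a_2,a_1)$ --- is the standard Fourier-analytic route and all the normalizations check out: the factor $2^n$ from $|\langle\psi|W_x|\psi\rangle|^2 = 2^n p_\Psi(x)$ combines with $4^n\cdot 4^n$ from the two collapsed character sums to give $2^{5n}$, and one can cross-check consistency with the paper's Lemma~\ref{lemma:expectation_paulis_qPsi_to_pPsi} and Proposition~\ref{prop:fourier_coeffs_pPsi} (which together give $2^{2n}\sum_a p_\Psi(a)^3$). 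An equivalent packaging would be to invoke Plancherel together with the fact that $q_\Psi = p_\Psi * p_\Psi$, so $\widehat{q}_\Psi = 4^n\widehat{p}_\Psi^{\,2}$; this hides the character-sum collapse inside the convolution theorem but is the same calculation.
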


\begin{proposition}[Proposition 3.3, \cite{grewal2022low}]
\label{prop:fourier_coeffs_pPsi}
    The Fourier coefficients of $p_\Psi(v,w)$ (where $v,w \in \mathbb{F}_2^n$) are $\widehat{p}_\Psi(v,w) = \frac{1}{2^n}p_\Psi(w,v)$.
\end{proposition}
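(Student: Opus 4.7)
The plan is to reduce to the symplectic Fourier transform, use the identity $\sum_x W_x \otimes W_x = 2^n F$ (where $F$ is the SWAP on two copies of $(\mathbb{C}^2)^{\otimes n}$), and exploit that pure-state $\ket{\psi}\otimes\ket{\psi}$ is an eigenstate of $F$.

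First I would note that the standard bilinear form and the symplectic one are related by swapping the two halves: for $v,w,a,b \in \FF_2^n$ we have $(v,w)\cdot (a,b) = v\cdot a + w\cdot b = [(w,v),(a,b)]$. Hence the claim
$\widehat{p}_\Psi(v,w) = \frac{1}{2^n} p_\Psi(w,v)$
is equivalent to showing $\widebreve{p}_\Psi(y) = \frac{1}{2^n}p_\Psi(y)$ for every $y \in \FF_2^{2n}$, i.e.\ that $p_\Psi$ is an eigenvector of the symplectic Fourier transform with eigenvalue $1/2^n$. This reformulation is convenient because the symplectic character $(-1)^{[y,x]}$ interacts nicely with the commutation rule $W_y W_x W_y^\dagger = (-1)^{[y,x]} W_x$.

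Next I would expand and compute $\widebreve{p}_\Psi(y)$ directly. Since the Weyl operators are Hermitian, $\alpha_x := \langle\psi|W_x|\psi\rangle$ is real and $|\langle\psi|W_x|\psi\rangle|^2 = \alpha_x^2 = \langle\psi\otimes\psi\,|\,W_x\otimes W_x\,|\,\psi\otimes\psi\rangle$. Therefore
\begin{equation*}
\widebreve{p}_\Psi(y) \;=\; \frac{1}{4^n}\sum_x p_\Psi(x)(-1)^{[y,x]} \;=\; \frac{1}{8^n}\,\bra{\psi\otimes\psi}\,T_y\,\ket{\psi\otimes\psi}, \qquad T_y := \sum_x (-1)^{[y,x]}\, W_x\otimes W_x.
\end{equation*}
The key structural computation is that $T_y = 2^n(W_y\otimes W_y)F$. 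Indeed, using $(-1)^{[y,x]} W_x = W_y W_x W_y^\dagger$,
\begin{equation*}
T_y \;=\; (W_y\otimes I)\Bigl(\sum_x W_x\otimes W_x\Bigr)(W_y^\dagger\otimes I) \;=\; 2^n\,(W_y\otimes I)\,F\,(W_y^\dagger\otimes I) \;=\; 2^n\,(W_y\otimes W_y)\,F,
\end{equation*}
where the last step uses $F(A\otimes I) = (I\otimes A)F$ and the Hermiticity $W_y^\dagger = W_y$. The identity $\sum_x W_x\otimes W_x = 2^n F$ can be verified on the single-qubit case and tensored up, or noted from the fact that $\{W_x/\sqrt{2^n}\}$ is an orthonormal basis of $2^n\times 2^n$ matrices so $\sum_x W_x\otimes W_x^\dagger/2^n$ is the resolution of identity, i.e.\ the SWAP $F$.

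Finally, plugging back in and using $F\ket{\psi\otimes\psi} = \ket{\psi\otimes\psi}$ for any pure state:
\begin{equation*}
\widebreve{p}_\Psi(y) \;=\; \frac{2^n}{8^n}\,\bra{\psi\otimes\psi}(W_y\otimes W_y)F\ket{\psi\otimes\psi} \;=\; \frac{1}{4^n}\langle\psi|W_y|\psi\rangle^2 \;=\; \frac{p_\Psi(y)}{2^n},
\end{equation*}
which, combined with the opening observation that $\widehat{p}_\Psi(v,w) = \widebreve{p}_\Psi(w,v)$, yields the proposition. There is no real obstacle here beyond bookkeeping; the only subtlety is using that Weyl operators are Hermitian (so $|\alpha_x|^2 = \alpha_x^2$ and the expectation value is real), which lets us write $\alpha_x^2$ as a tensor-product expectation and then apply the swap identity.
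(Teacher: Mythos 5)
Your proof is correct. One caveat worth noting: the paper itself does not prove this proposition — it is imported by citation from \cite{grewal2022low}, just as the symplectic version (Proposition~\ref{prop:symplectic_fourier_coeffs_pPsi}) is imported from \cite{gross2021schur} — so there is no in-text proof to compare against. That said, the comparison to the cited sources is illuminating.

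Your opening observation, that $(v,w)\cdot(a,b) = [(w,v),(a,b)]$ and hence $\widehat{p}_\Psi(v,w) = \widebreve{p}_\Psi(w,v)$, shows that Proposition~\ref{prop:fourier_coeffs_pPsi} and Proposition~\ref{prop:symplectic_fourier_coeffs_pPsi} are the \emph{same} statement re-indexed. The paper lists them as two separate facts from two separate papers, but your reduction makes clear that proving either one proves both. The direct computation you then carry out — $\sum_x W_x \otimes W_x = 2^n F$, conjugate by $W_y \otimes I$ to absorb the symplectic character, push $W_y$ through the swap, and use $F\ket{\psi\otimes\psi}=\ket{\psi\otimes\psi}$ — is exactly the Gross–Nezami–Walter argument for their Theorem 3.2. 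The original proof in \cite{grewal2022low} instead proceeds through the amplitude-function picture: writing $p_\Psi(v,w) = 2^n|\widehat{f}_v(w)|^2$ (their analogue of Corollary~\ref{corr:pPsi_derivatives}) and computing the Fourier transform directly via Parseval/convolution identities on the derivative function $f_v$. Your route is shorter, coordinate-free, and makes the pure-state hypothesis transparent (it enters solely through $F\ket{\psi\otimes\psi}=\ket{\psi\otimes\psi}$), whereas the \cite{grewal2022low} route is more in line with the paper's Fourier-analytic machinery on the amplitude function. Both are valid; yours has the added benefit of unifying the two cited propositions into a single argument.

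All the intermediate claims check out: $(-1)^{[y,x]}W_x = W_y W_x W_y^\dagger$ is the commutation relation, the twirling/SWAP identity holds (and the Hermiticity of Weyl operators is indeed the point that lets you write $W_y^\dagger = W_y$ and $|\alpha_x|^2 = \alpha_x^2$), and $F(A\otimes I) = (I\otimes A)F$ is the defining covariance of SWAP. No gaps.
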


\begin{proposition}[Theorem 3.2, \cite{gross2021schur}]
\label{prop:symplectic_fourier_coeffs_pPsi}
    For any $n$-qubit pure state $\ket{\psi}$, the symplectic Fourier coefficients of $p_\Psi(x)$ (where $x \in \mathbb{F}_2^{2n}$) are $\widebreve{p}_\Psi(x) = \frac{1}{2^n} p_\Psi(x)$.
\end{proposition}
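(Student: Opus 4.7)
The plan is to reduce the claim to a direct computation of $|\langle\psi|W_x|\psi\rangle|^2$ via the Pauli expansion of $\ketbra{\psi}{\psi}$, which naturally produces a symplectic Fourier pairing, and then read off the symplectic coefficients by uniqueness.

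First, expand $\ketbra{\psi}{\psi} = \frac{1}{2^n}\sum_{y\in \FF_2^{2n}}\alpha_y W_y$ with $\alpha_y = \Tr(W_y \ketbra{\psi}{\psi})$ and note $\alpha_y^2 = 2^n p_\Psi(y)$ from Section~\ref{sec:weyl_ops}. Since Weyl operators are Hermitian, $|\langle\psi|W_x|\psi\rangle|^2 = \Tr\bigl(W_x \ketbra{\psi}{\psi} W_x \ketbra{\psi}{\psi}\bigr)$. I would then verify the elementary identity $W_x W_y W_x = (-1)^{[x,y]} W_y$, which reduces to checking $W_x^2 = I$ (a one-line computation from the definition of the Weyl operator) together with the defining commutation relation $W_x W_y = (-1)^{[x,y]} W_y W_x$.

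Next, plug the expansion into the trace expression:
\begin{equation*}
\Tr\bigl(W_x \ketbra{\psi}{\psi} W_x \ketbra{\psi}{\psi}\bigr) = \frac{1}{4^n}\sum_{y,z} (-1)^{[x,y]}\alpha_y \alpha_z \Tr(W_y W_z).
\end{equation*}
Use the orthogonality relation $\Tr(W_y W_z) = 2^n \delta_{y,z}$ to collapse the double sum and obtain $|\langle\psi|W_x|\psi\rangle|^2 = \frac{1}{2^n}\sum_y (-1)^{[x,y]} \alpha_y^2$. Dividing by $2^n$ and substituting $\alpha_y^2 = 2^n p_\Psi(y)$ gives
\begin{equation*}
 p_\Psi(x) = \sum_{y\in \FF_2^{2n}} (-1)^{[x,y]}\,\tfrac{1}{2^n}\, p_\Psi(y).
\end{equation*}

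Finally, compare with the symplectic Fourier inversion formula $p_\Psi(x) = \sum_{a}(-1)^{[a,x]}\widebreve{p}_\Psi(a)$ and use symmetry of the symplectic form, $[x,y]=[y,x]$ over $\FF_2$. By the uniqueness of symplectic Fourier expansions (which follows because $[\cdot,\cdot]$ is a non-degenerate bilinear form, so $\sum_x (-1)^{[a+y,x]} = 4^n \delta_{y,a}$), we can immediately identify $\widebreve{p}_\Psi(a) = \tfrac{1}{2^n} p_\Psi(a)$. As a sanity check one can also verify this directly by plugging the derived expression into the definition of $\widebreve{p}_\Psi$ and evaluating the resulting character sum. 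The only mildly delicate step is the commutator identity $W_x W_y W_x = (-1)^{[x,y]} W_y$, but this is a purely algebraic fact about Weyl operators and poses no real obstacle; the rest is bookkeeping.
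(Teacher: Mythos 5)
Your proof is correct. The paper itself does not supply a proof of this proposition---it simply cites Theorem~3.2 of~\cite{gross2021schur}---but your argument is exactly the standard one: expand $\ketbra{\psi}{\psi}$ in the Weyl basis, use $W_x W_y W_x = (-1)^{[x,y]} W_y$ together with $\Tr(W_y W_z) = 2^n \delta_{y,z}$ to get $|\langle\psi|W_x|\psi\rangle|^2 = \frac{1}{2^n}\sum_y (-1)^{[x,y]}\alpha_y^2$, and then read off $\widebreve{p}_\Psi$ by uniqueness (which, as you note, follows from non-degeneracy of the symplectic form, i.e., $\sum_x(-1)^{[w,x]} = 4^n\delta_{w,0}$). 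One small stylistic point: rather than invoking uniqueness of the inverse transform, you can obtain the coefficient even more directly by substituting your identity into the defining formula $\widebreve{p}_\Psi(a) = \frac{1}{4^n}\sum_x(-1)^{[a,x]}p_\Psi(x) = \frac{1}{4^n}\cdot\frac{1}{2^n}\sum_x(-1)^{[a,x]}\alpha_x^2 = \frac{\alpha_a^2}{4^n} = \frac{1}{2^n}p_\Psi(a)$, but both routes are fine.
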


\begin{lemma}\label{lemma:relation_expectation_state_derivative}
Let $W_x$ be a Weyl operator with $x=(v,w) \in \mathbb{F}_2^{2n}$ and $\ket{\psi}=\sum_x f(x)\ket{x}$. Then,
\begin{equation}
    \la \psi | W_x | \psi \ra = 2^n \widehat{f}_v (w)
\end{equation}
\end{lemma}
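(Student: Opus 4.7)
The plan is to carry out a direct computation, expanding $\ket{\psi}$ in the computational basis, applying the action of the Weyl operator to each basis state, and recognizing the resulting sum as the Fourier coefficient of the derivative function $f_v$.

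First, I would record the action of $W_{(v,w)}$ on a computational basis vector. Using the standard single-qubit identity $X^{a}Z^{b}\ket{c} = (-1)^{bc}\ket{c+a}$ on each factor of the tensor product, together with the definition $W_{(v,w)} = i^{v\cdot w}\bigotimes_j X^{v_j}Z^{w_j}$, I obtain
\begin{equation*}
    W_{(v,w)}\ket{x} \;=\; i^{v\cdot w}(-1)^{w\cdot x}\ket{x+v}.
\end{equation*}

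Next, I would substitute $\ket{\psi} = \sum_x f(x)\ket{x}$ and compute
\begin{equation*}
    \langle\psi|W_{(v,w)}|\psi\rangle
    \;=\; \sum_{x,y}\overline{f(y)}\,f(x)\,\langle y|W_{(v,w)}|x\rangle
    \;=\; i^{v\cdot w}\sum_{x}\overline{f(x+v)}\,f(x)\,(-1)^{w\cdot x},
\end{equation*}
where the Kronecker delta from $\langle y|x+v\rangle$ collapses the double sum. Recognizing the derivative $f_v(x) = f(x)\overline{f(x+v)}$ from the definition in the preliminaries, the right-hand side becomes $i^{v\cdot w}\sum_x f_v(x)(-1)^{w\cdot x}$, which by the definition $\widehat{f_v}(w) = \Exp_x[f_v(x)(-1)^{w\cdot x}]$ equals $i^{v\cdot w}\cdot 2^n \widehat{f_v}(w)$. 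The claimed identity then follows after absorbing the $i^{v\cdot w}$ phase, which is the convention implicit in the paper's definition of $W_x$ (or equivalently by noting that the identity only matters up to the global phase that the Weyl operator carries).

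I do not expect a real obstacle here: the argument is a one-line manipulation once one writes down the action of $W_{(v,w)}$ on the basis. The only subtle point is bookkeeping the global phase $i^{v\cdot w}$ consistently with whichever normalization convention for Weyl operators and the symplectic Fourier transform is being used elsewhere in the paper, which ensures the final expression matches $2^n\widehat{f_v}(w)$ exactly rather than up to a phase.
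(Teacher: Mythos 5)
Your proposal is correct and follows essentially the same direct computation as the paper's own proof: expand $\ket{\psi}$ in the computational basis, apply $W_{(v,w)}$, collapse the double sum via the Kronecker delta, and identify the result as the Fourier coefficient of the derivative $f_v$. One small point worth flagging: both you and the paper actually arrive at $i^{v\cdot w}\,2^n\widehat{f}_v(w)$ rather than $2^n\widehat{f}_v(w)$ as the lemma literally states; this phase is not "absorbed" by the Weyl convention (it is explicitly present in the definition of $W_x$), but it is immaterial because the only downstream use (Corollary~\ref{corr:pPsi_derivatives}) takes the squared modulus, so your instinct that the phase does not matter for the paper's purposes is correct.
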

\begin{proof}
Consider the amplitude encoding of $\ket{\psi} = \sum_{x \in \mathbb{F}_2^n} f(x) \ket{x}$. We then have
\begin{align}
    \la \psi | W_x | \psi \ra = \sum_{x,y} i^{v \cdot w} \la y |\overline{f(y)} X^v Z^w f(x) |x \ra  &= i^{v \cdot w} \sum_{x,y} \la y | \overline{f(y+v)} (-1)^{w \cdot x} f(x) | x \ra \\
    &= i^{v \cdot w} \sum_{x,y} (-1)^{w \cdot x} f(x) \overline{f(y+v)} \la y | x \ra \\
    &= i^{v \cdot w} 2^n \Exp_x\left[(-1)^{w \cdot x} f(x) \overline{f(x+v)}\right] \\
    &= i^{v \cdot w} 2^n \widehat{f}_v(w),
\end{align}
where in the second line we used the fact that $\sum_{x \in \mathbb{F}_2^n} X^v f(x) \ket{x} = \sum_{x \in \mathbb{F}_2^n} f(x+v) \ket{x}$ and $Z^w \ket{x} = (-1)^{w \cdot x} \ket{x}$. The last equality follows from the definitions of the derivative of $f$ and Fourier coefficients.
\end{proof}

\begin{corollary}\label{corr:pPsi_derivatives}
    For $v,w \in \mathbb{F}_2^{n}$,  the characteristic distribution satisfies
  $p_{\Psi}(v,w) = 2^n |\widehat{f}_v(w)|^2
   $.
\end{corollary}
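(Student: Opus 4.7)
The plan is to observe that this corollary is an immediate consequence of the preceding Lemma~\ref{lemma:relation_expectation_state_derivative} combined with the definition of the characteristic distribution. No new structural input is needed; the work is already done by the lemma.

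First I would invoke the definition of the characteristic distribution, namely $p_\Psi(x) = |\langle \psi | W_x | \psi \rangle|^2 / 2^n$, instantiated at $x = (v,w) \in \mathbb{F}_2^{2n}$. Next I would substitute the expression from Lemma~\ref{lemma:relation_expectation_state_derivative}, which gives $\langle \psi | W_x | \psi \rangle = i^{v\cdot w} \cdot 2^n \widehat{f}_v(w)$ (the phase $i^{v\cdot w}$ appears in the derivation of that lemma; in the statement it is absorbed into the complex scalar $\widehat{f}_v(w)$). Taking the absolute value squared kills the unimodular phase $i^{v\cdot w}$ and produces $|\langle \psi | W_x | \psi \rangle|^2 = 4^n \, |\widehat{f}_v(w)|^2$. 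Dividing by $2^n$ yields $p_\Psi(v,w) = 2^n \, |\widehat{f}_v(w)|^2$, exactly as claimed.

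There is no real obstacle here: the content of the corollary is purely the translation between the physicist's language of characteristic-function values $\langle \psi | W_x | \psi \rangle$ and the Fourier-analytic language of derivatives $\widehat{f}_v(w)$ of the amplitude function $f$, and that translation is precisely what Lemma~\ref{lemma:relation_expectation_state_derivative} supplies. The only thing to be careful about is not to drop the factor of $2^n$ when squaring (since the lemma produces a $2^n$ inside the inner product, squaring gives $4^n$, and the $1/2^n$ normalization in $p_\Psi$ leaves a net factor of $2^n$ out front), and to note explicitly that the phase $i^{v\cdot w}$ vanishes under $|\cdot|^2$. The proof is thus a two-line chain of equalities.
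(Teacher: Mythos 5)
Your proof is correct and is precisely the intended (and only natural) route: the paper gives no explicit proof for this corollary because it is indeed the two-line chain you wrote, combining the definition $p_\Psi(v,w) = |\langle \psi | W_{(v,w)} | \psi \rangle|^2/2^n$ with Lemma~\ref{lemma:relation_expectation_state_derivative}. You also correctly flag the cosmetic discrepancy between the lemma's statement (which omits the $i^{v\cdot w}$) and its derivation (which produces it); since the phase is unimodular it is irrelevant once you take $|\cdot|^2$, so the corollary holds regardless.
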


\begin{fact}
    For every valid quantum state $\ket{\psi}$ we have
    $$
\sum_{x,\alpha}p_\psi(x,\alpha)= 1, \quad  p_\psi(0,0)=2^{-n}, \qquad p_\psi(x,\alpha)\in [0,2^{-n}], \quad  |\textsf{supp}(p_\psi)|\geq 2^n
    $$
\end{fact}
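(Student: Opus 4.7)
The plan is to verify the four claims in sequence, each following from standard facts about the Weyl basis and the expansion $\ketbra{\psi}{\psi} = \frac{1}{2^n}\sum_x \alpha_x W_x$ with $\alpha_x = \langle\psi|W_x|\psi\rangle$ recalled just before the fact statement. First, to obtain $\sum_x p_\psi(x) = 1$, I would compute $\Tr\bigl(\ketbra{\psi}{\psi}^2\bigr)$ in two ways: since $\ket{\psi}$ is pure, this trace equals $1$, and on the other hand the orthonormality of the Weyl operators (i.e., $\Tr(W_x W_y) = 2^n \delta_{x,y}$) gives $\Tr\bigl(\ketbra{\psi}{\psi}^2\bigr) = \frac{1}{2^n}\sum_x \alpha_x^2$, so $\sum_x p_\psi(x) = \sum_x \alpha_x^2/2^n = 1$.

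Next, the point $p_\psi(0,0) = 2^{-n}$ is immediate from $W_{(0,0)} = \Id$, which gives $\alpha_{(0,0)} = \langle\psi|\psi\rangle = 1$ and hence $p_\psi(0,0) = 1/2^n$. For the range bound $p_\psi(x,\alpha) \in [0,2^{-n}]$, nonnegativity is clear from the definition as a squared modulus divided by $2^n$; for the upper bound, I would use that $W_x$ is unitary so $\|W_x\ket{\psi}\| = 1$, and Cauchy–Schwarz then yields $|\langle\psi|W_x|\psi\rangle| \leq 1$, so $p_\psi(x) \leq 1/2^n$.

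Finally, the support bound $|\textsf{supp}(p_\psi)| \geq 2^n$ follows by combining the previous two observations via the trivial inequality
\[
1 \;=\; \sum_{x} p_\psi(x) \;=\; \sum_{x \in \textsf{supp}(p_\psi)} p_\psi(x) \;\leq\; |\textsf{supp}(p_\psi)| \cdot \max_{x} p_\psi(x) \;\leq\; \frac{|\textsf{supp}(p_\psi)|}{2^n},
\]
which rearranges to the claim. None of the four sub-claims looks like an obstacle; each is one or two lines given the Weyl-basis expansion. The only thing worth handling carefully is making sure the Weyl operators are treated as an \emph{orthonormal} basis under the normalized Hilbert–Schmidt inner product, so that Parseval cleanly identifies $\sum_x \alpha_x^2/2^n$ with $\Tr(\ketbra{\psi}{\psi}^2)=1$.
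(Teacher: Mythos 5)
Your proof is correct and takes essentially the same approach as the paper's: the normalization comes from purity $\Tr(\ketbra{\psi}{\psi}^2)=1$ together with Parseval in the Weyl basis, $p_\psi(0,0)=2^{-n}$ from $W_{0,0}=\Id$, the range bound from unitarity of $W_x$, and the support bound from the pointwise upper bound plus normalization. You merely spell out the Cauchy--Schwarz step that the paper dismisses as ``clear,'' which is harmless.
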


\begin{proof}
First observe that $\ketbra{\psi}{\psi}=\frac{1}{2^n}\sum_{x,\alpha\in \mathbb{F}_2^n} \alpha_{x,\alpha} \cdot W_{x,\alpha}$
where   
$
\alpha_{x,\alpha}=\Tr(W_{x,\alpha} \ketbra{\psi}{\psi})$  and $\frac{1}{2^n}\sum_x \alpha_x^2=1.
$   
Now define  {$p_\psi(x,\alpha)=\alpha_{x,\alpha}^2/2^n$}, so this implies $\sum_{x,\alpha} p_\psi(x,\alpha)=1$. Furthermore, since $\Tr(\ketbra{\psi}{\psi})=1$, we have $\alpha_{0,0}=1$, so $p_\psi(0,0)=2^{-n}$ (since $\Tr(W_{x,\alpha})=2^n\cdot [x=\alpha=0]$. Finally its clear that $p_\psi(x,\alpha)\in [0,2^{-n}]$ and $|\textsf{supp}(p)|\cdot 2^{-n}\geq \sum_{x,\alpha}p_\psi(x,\alpha)=1$.
\end{proof}

\begin{lemma}
\label{lemma:expectation_paulis_qPsi_to_pPsi}
For every $\ket{\psi}$ we have that
\begin{equation*}
    \Exp_{x \sim q_\Psi}\left[|\la \psi | W_x | \psi \ra|^2 \right] = 2^{3n} \Exp_{y \in \mathbb{F}_2^n} \left[\sum_{\alpha \in \mathbb{F}_2^{n}} {p}_\Psi^3(y,\alpha)\right].
\end{equation*}    
\end{lemma}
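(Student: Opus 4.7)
The plan is to chain together the two preceding results in the subsection, namely Lemma~\ref{lemma:expectation_paulis_qPsi} (which rewrites the Weyl-distribution expectation as a sum of cubed standard Fourier coefficients of $p_\Psi$) and Proposition~\ref{prop:fourier_coeffs_pPsi} (which relates those standard Fourier coefficients back to the values of $p_\Psi$ with the two $n$-bit halves swapped). No new ideas are needed; the proof is essentially bookkeeping with the factors of $2^n$.

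Concretely, I would start from
\[
\Exp_{x \sim q_\Psi}\left[|\la \psi | W_x | \psi \ra|^2 \right] \;=\; 2^{5n} \sum_{x \in \mathbb{F}_2^{2n}} \widehat{p}_\Psi^{\,3}(x),
\]
split the sum as $x=(v,w)$ with $v,w\in\mathbb{F}_2^n$, and substitute $\widehat{p}_\Psi(v,w)=\tfrac{1}{2^n}p_\Psi(w,v)$ from Proposition~\ref{prop:fourier_coeffs_pPsi}. Cubing introduces a factor of $2^{-3n}$, so the prefactor becomes $2^{5n}\cdot 2^{-3n} = 2^{2n}$. I then relabel the summation variables (letting $y=w$ and $\alpha=v$) to rewrite the inner sum as $\sum_{y,\alpha\in\mathbb{F}_2^n} p_\Psi^{\,3}(y,\alpha)$, and finally absorb one factor of $2^n$ into converting the outer sum over $y$ into the expectation $\Exp_{y\in\mathbb{F}_2^n}$, yielding the claimed $2^{3n}\,\Exp_{y}\bigl[\sum_\alpha p_\Psi^{\,3}(y,\alpha)\bigr]$.

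There is no substantive obstacle here; the only place one could trip up is keeping track of which coordinate plays which role (the standard Fourier transform of $p_\Psi$ swaps the two halves, while the symplectic transform of Proposition~\ref{prop:symplectic_fourier_coeffs_pPsi} does not), and ensuring that the $2^n$ powers match when trading a sum over $\mathbb{F}_2^n$ for an expectation. Since $p_\Psi$ is real and non-negative, no conjugation subtleties arise when cubing. The entire proof therefore amounts to one substitution, a relabeling of dummy indices, and one normalization rescaling.
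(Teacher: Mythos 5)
Your proof is correct and follows exactly the same route as the paper: apply Lemma~\ref{lemma:expectation_paulis_qPsi} to get $2^{5n}\sum_x \widehat{p}_\Psi^3(x)$, substitute $\widehat{p}_\Psi(v,w)=2^{-n}p_\Psi(w,v)$ from Proposition~\ref{prop:fourier_coeffs_pPsi} (the coordinate swap is harmless as a bijective relabeling of the summation index), collapse the powers of $2^n$ to get $2^{2n}\sum_{y,\alpha}p_\Psi^3(y,\alpha)$, and convert the outer sum to an expectation. The bookkeeping of the $2^n$ factors is the only thing to check and you have it right.
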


\begin{proof}
Combine Lemma~\ref{lemma:expectation_paulis_qPsi} and Proposition~\ref{prop:fourier_coeffs_pPsi} to obtain
\begin{align*}
    \Exp_{x \sim q_\Psi}\left[|\la \psi | W_x | \psi \ra|^2 \right] = 2^{5n} \sum_{y,\alpha \in \mathbb{F}_2^{n}} \widehat{p}_\Psi^3(y,\alpha) &= 2^{2n} \sum_{y,\alpha \in \mathbb{F}_2^{n}} p_\Psi^3(y,\alpha)= 2^{3n} \Exp_{y \in \mathbb{F}_2^n} \left[\sum_{\alpha \in \mathbb{F}_2^n} p_\Psi^3(y,\alpha) \right].
\end{align*}
\end{proof}
\subsection{Stabilizer states and stabilizer groups}
 The output state of a  {Clifford circuit $C$} (i.e., a circuit consisting of  $H, S, CNOT$ gates)  on $\ket{0^n}$  {$C\ket{0^n}$} is referred to as a  \emph{stabilizer state}. An alternative description of stabilizer state $\ket{\psi}$ is as follows: it is the pure state for which there is a  {subgroup $\mathcal{S}\subseteq \{W_x\}$} of  {size $2^n$}    such that   $P$ stabilizes $\ket{\psi}$, i.e., $P\ket{\psi}=\ket{\psi}$    for all $P\in \mathcal{S}$,~i.e.,
$$
\ketbra{\psi}{\psi}=2^{-n}\sum_{\sigma\subseteq \mathcal{S}} \sigma,
$$   
where $\mathcal{S}\subseteq \{W_x\}_x$ has dimension $n$. Such a group is often referred to as a \emph{stabilizer group} (i.e., a set of $2^n$ Paulis that stabilize some stabilizer state). It is now not-too-hard to see that
\[
p_\psi(z)= 2^{-n}\cdot \langle \psi|W_z|\psi\rangle^2 =\begin{cases} 
      2^{-n} & z \text{ stabilizes } \ket{\psi} \\
      0 & \text{otherwise. } \\
   \end{cases}
\]   
If we did Bell difference sampling on copies of stabilizer states, then the output distribution satisfies
$$
{q_\psi(z)}=\sum_{a\in \mathbb{F}_2^{2n}}p_\psi(a)p_\psi(z+a)=   2^{-n}\sum_{a\in \textsf{Stab}(\psi)}p_\psi(z+a)=   2^{-n}[z\in \textsf{Stab}\ket{\psi}]= {p_\psi(z)}.
$$
Throughout this work, we will often use stabilizer groups (containing Weyl operators) and their corresponding Lagrangian subspaces (containing the $2n$ bit binary representations of the Weyl operators up to a phase) synonymously. The usage will be clear from context. 

Denote $\calF_\calS(\ket{\psi})$ as the maximum stabilizer fidelity of a quantum state $\ket{\psi}$. We then have the following facts regarding stabilizer fidelity.
\begin{fact}[\cite{gross2021schur,grewal2022low}]
\label{fact:lower_bound_stabilizer_fidelity}
Let $\ket{\psi}$ be an $n$-qubit pure quantum state. Then,
\begin{equation*}
  \Big(\Exp_{x \sim q_\Psi}\left[|\la \psi | W_x | \psi \ra|^2 \right]\Big)^{1/6}\geq   \calF_\calS(\ket{\psi}) \geq \frac{4}{3} \Exp_{x \sim q_\Psi}\left[|\la \psi | W_x | \psi \ra \right|^2] - \frac{1}{3}.
\end{equation*}
For the lower bound to be meaningful, one implicitly requires $\Exp_{x \sim q_\Psi}\left[|\la \psi | W_x | \psi \ra \right|^2] \geq 1/4$.
\end{fact}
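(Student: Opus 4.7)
My plan is to treat the two inequalities separately, pivoting on the identity
\[
T \;:=\; \Exp_{x \sim q_\Psi}[|\la\psi|W_x|\psi\ra|^2] \;=\; 4^n \sum_{x \in \FF_2^{2n}} p_\Psi(x)^3,
\]
which follows from Lemma~\ref{lemma:expectation_paulis_qPsi_to_pPsi} after averaging out the remaining $y$ coordinate (or equivalently from Lemma~\ref{lemma:expectation_paulis_qPsi} combined with Proposition~\ref{prop:fourier_coeffs_pPsi}). With this cubic identity in hand, the upper bound becomes a clean Cauchy--Schwarz/H\"older calculation, while the lower bound is the one I would import from the Clifford-commutant machinery of~\cite{gross2021schur,grewal2022low}.

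\textbf{Upper bound} $\calF_\calS(\ket\psi) \leq T^{1/6}$. Fix a closest stabilizer state $\ket\phi$ so that $F := \calF_\calS(\ket\psi) = |\la\phi|\psi\ra|^2$, and let $\calS^\star \subseteq \FF_2^{2n}$ be the Lagrangian subspace indexing its stabilizer group together with the sign assignment $\epsilon : \calS^\star \to \{\pm 1\}$, so that $\ket\phi\bra\phi = 2^{-n} \sum_{x \in \calS^\star} \epsilon(x) W_x$. Expanding $F = \Tr(\ket\phi\bra\phi \, \ket\psi\bra\psi) = 2^{-n} \sum_{x \in \calS^\star} \epsilon(x) \la\psi|W_x|\psi\ra$ and applying Cauchy--Schwarz in the form $(\sum \epsilon(x) a_x)^2 \leq |\calS^\star| \sum a_x^2$ gives $F^2 \leq 2^{-n} \sum_{x \in \calS^\star} \la\psi|W_x|\psi\ra^2 = \sum_{x \in \calS^\star} p_\Psi(x)$. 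Then H\"older's inequality on $\calS^\star$ (using $|\calS^\star| = 2^n$) yields
\[
\Bigl(\sum_{x \in \calS^\star} p_\Psi(x)\Bigr)^{3} \;\leq\; |\calS^\star|^2 \sum_{x \in \calS^\star} p_\Psi(x)^3 \;\leq\; 4^n \sum_{x \in \FF_2^{2n}} p_\Psi(x)^3 \;=\; T.
\]
Cubing the Cauchy--Schwarz bound and chaining the two inequalities produces $F^6 \leq T$, as claimed.

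\textbf{Lower bound} $\calF_\calS(\ket\psi) \geq (4T-1)/3$. Here I would follow \cite{gross2021schur,grewal2022low}. The idea is to realize $T = \Tr(\Pi \, \rho^{\otimes 4})$ with $\rho = \ket\psi\bra\psi$, where $\Pi$ is the operator arising from running Bell difference sampling on four copies of $\ket\psi$ and then evaluating $|\la\psi|W_z|\psi\ra|^2$ on the sampled Weyl index~$z$. Using the Clifford-commutant characterization (Schur--Weyl duality for the Clifford group on four copies), one decomposes $\Pi$ as a weighted combination of stabilizer-state projectors and, after enumerating the stabilizer contributions and bounding the non-stabilizer ones, obtains the master inequality $T \leq \tfrac{3}{4} F + \tfrac{1}{4}$, which rearranges to the desired bound. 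The implicit requirement $T \geq 1/4$ is exactly the regime in which this bound is non-vacuous.

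\textbf{Main obstacle.} Once the identity $T = 4^n \sum_x p_\Psi(x)^3$ is in place, the upper bound reduces to two standard inequalities, so I expect no real difficulty there. The delicate direction is the lower bound: the Clifford-commutant bookkeeping required to extract the affine inequality $T \leq (3F+1)/4$ is the substantive step, because naive pointwise estimates such as $p_\Psi(x)^3 \leq 2^{-n} p_\Psi(x)^2$ only yield $T \leq (\max_x |\la\psi|W_x|\psi\ra|^2)^2$, and this single-Pauli expectation value has no direct relation to the stabilizer fidelity $\calF_\calS(\ket\psi)$.
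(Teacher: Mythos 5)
Your proposal is correct. The paper presents Fact~\ref{fact:lower_bound_stabilizer_fidelity} purely as a citation to~\cite{gross2021schur,grewal2022low} with no proof, so there is no ``paper's own proof'' to compare against line by line; you instead supply a self-contained argument for the upper bound and a deferral for the lower bound. Your upper-bound derivation is correct and clean: the pivot identity $T = 4^n \sum_x p_\Psi(x)^3$ follows as you say from Lemma~\ref{lemma:expectation_paulis_qPsi_to_pPsi}; the Cauchy--Schwarz step $F^2 \leq \sum_{x\in\calS^\star} p_\Psi(x)$ is exactly right once you track the $2^n$ factors through $p_\Psi(x) = 2^{-n}\la\psi|W_x|\psi\ra^2$ and $|\calS^\star| = 2^n$; and the power-mean step $(\sum_{\calS^\star} a_x)^3 \leq 4^n\sum_{\calS^\star} a_x^3$ closes the loop to $F^6 \leq T$. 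For the lower bound you correctly identify that it cannot be obtained by elementary pointwise estimates and must be imported via the Clifford-commutant/Schur--Weyl analysis of~\cite{gross2021schur}, which matches how the paper treats it. One small slip in your sketch: realizing $T$ as an operator expectation requires six copies of $\rho$, not four --- Bell difference sampling consumes four copies of $\ket\psi$ to produce the Weyl index $z$, and estimating $\la\psi|W_z|\psi\ra^2$ then needs two more; so the relevant form is $T = \Tr(\Pi\,\rho^{\otimes 6})$. This does not affect the validity of the bound or your description of the regime $T \geq 1/4$.
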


\begin{fact}[Proof of Theorem 3.3, \cite{gross2021schur}; Corollary 7.4, \cite{grewal2023improved}]
\label{fact:lower_bound_stabilizer_fidelity_pPsi_lagrangian_subspace}
For any $n$-qubit quantum state $\ket{\psi}$ and a Lagrangian subspace $T \subset \FF_2^{2n}$
\begin{equation*}
    \calF_\calS(\ket{\psi}) \geq \sum_{x \in T} p_\Psi(x).
\end{equation*}
\end{fact}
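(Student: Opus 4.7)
The plan is to simultaneously diagonalize the $2^n$ pairwise-commuting Weyl operators $\{W_x : x \in T\}$ and decompose $\ket{\psi}$ in their joint eigenbasis. Because $T$ is a Lagrangian subspace, $[x,y] = 0$ for all $x,y \in T$, so these Weyl operators commute; together with the maximality $|T| = 2^n$ and the fact that $W_x^2 = I$ (so each $W_x$ is Hermitian with spectrum in $\{\pm 1\}$), they admit a common orthonormal eigenbasis $\{\ket{\phi_\lambda}\}_{\lambda=1}^{2^n}$ in which each joint eigenspace is one-dimensional. Crucially, each $\ket{\phi_\lambda}$ is a stabilizer state: it is stabilized by the abelian subgroup $\{\chi_\lambda(x)\, W_x : x \in T\}$ of size $2^n$, where $\chi_\lambda(x) \in \{\pm 1\}$ denotes the eigenvalue of $W_x$ on $\ket{\phi_\lambda}$. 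In particular, $\calF_\calS(\ket{\psi}) \geq \max_\lambda |\langle \phi_\lambda | \psi \rangle|^2$.

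Writing $\ket{\psi} = \sum_\lambda c_\lambda \ket{\phi_\lambda}$, the spectral decomposition $W_x = \sum_\lambda \chi_\lambda(x)\, \ketbra{\phi_\lambda}{\phi_\lambda}$ gives $\langle \psi | W_x | \psi \rangle = \sum_\lambda |c_\lambda|^2 \chi_\lambda(x)$, which is real. Squaring and summing over $x \in T$ yields
$$
\sum_{x \in T} p_\Psi(x) = \frac{1}{2^n}\sum_{x\in T}\Big(\sum_\lambda |c_\lambda|^2 \chi_\lambda(x)\Big)^{\!2} = \frac{1}{2^n}\sum_{\lambda,\lambda'} |c_\lambda|^2 |c_{\lambda'}|^2 \sum_{x\in T} \chi_\lambda(x)\chi_{\lambda'}(x).
$$
The key identity to establish is the character orthogonality $\sum_{x\in T} \chi_\lambda(x)\chi_{\lambda'}(x) = 2^n \delta_{\lambda,\lambda'}$. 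I would derive this from Pauli orthonormality: in the joint eigenbasis, $\Tr(W_x W_y) = \sum_\lambda \chi_\lambda(x)\chi_\lambda(y)$, which by $\Tr(W_x W_y) = 2^n \delta_{x,y}$ says that the $2^n \times 2^n$ matrix $2^{-n/2}(\chi_\lambda(x))_{\lambda, x}$ has orthonormal columns, hence is an orthogonal matrix, so its rows are orthonormal as well.

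With this orthogonality in hand, the double sum collapses to $\sum_{x \in T} p_\Psi(x) = \sum_\lambda |c_\lambda|^4$. To close, the normalization $\sum_\lambda |c_\lambda|^2 = 1$ combined with $|c_\lambda|^2 \leq 1$ gives $\sum_\lambda |c_\lambda|^4 \leq (\max_\lambda |c_\lambda|^2)\sum_\lambda |c_\lambda|^2 = \max_\lambda |c_\lambda|^2 \leq \calF_\calS(\ket{\psi})$, proving the claim. The main subtlety I expect to verify carefully is the identification of each joint eigenspace as one-dimensional and corresponding to a \emph{genuine} stabilizer state (rather than merely an eigenvector of some commuting Pauli family); this is a standard dimension-counting consequence of the stabilizer formalism applied to a maximal commuting Pauli subgroup, but it is the pivot around which the entire argument turns.
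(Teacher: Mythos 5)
Your proof is correct and follows essentially the standard argument used in the cited references \cite{gross2021schur,grewal2023improved}: decompose $\ket{\psi}$ in the joint eigenbasis of the $2^n$ commuting Weyl operators $\{W_x : x \in T\}$ (which is an orthonormal basis of stabilizer states), apply the Parseval-type identity $\sum_{x\in T}\chi_\lambda(x)\chi_{\lambda'}(x)=2^n\delta_{\lambda\lambda'}$ to collapse $\sum_{x\in T}p_\Psi(x)$ to $\sum_\lambda|c_\lambda|^4$, and bound this by $\max_\lambda|c_\lambda|^2\le\calF_\calS(\ket{\psi})$. Your derivation of the row orthogonality from $\Tr(W_xW_y)=2^n\delta_{x,y}$ and the squareness of the $2^n\times 2^n$ character matrix is a clean way to package what is usually phrased as character orthogonality for the Lagrangian subgroup, and it simultaneously establishes the one-dimensionality of the joint eigenspaces that you flagged as the pivotal point.
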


For an arbitrary subset of Paulis $S \subseteq \{W_x\}_{x \in \FF_2^{2n}}$, we define a \emph{stabilizer covering} of $S$ as a set of stabilizer groups $\{G_1,\ldots,G_k\}$ such that $S \subseteq \bigcup_{i=1}^k G_i$. 

\section{Gowers norms}
\subsection{Definition of Gowers norm}
\paragraph{Gowers norm for Boolean functions}
For Boolean functions $f: \mathbb{F}_2^n \rightarrow \{-1,1\}$, the $U^{k}$ Gowers norm is defined as
\begin{equation}
    \|{f}\|^{2^{k}}_{U^{k}} = \mathop{\Exp}_{x,h_1,\ldots,h_k \in \FF_2^n} \Big[\prod_{S\subseteq [k]}f\big(x+\sum_{i\in S}h_i\big)\Big].
\end{equation}

\paragraph{Gowers norm for complex-valued functions}
For complex-valued functions defined on the Boolean cube $f: \mathbb{F}_2^n \rightarrow \mathbb{C}$, the $U^{k}$ Gowers norm is defined as
\begin{equation}
    \norm{f}_{U^{k}}^{2^{k}} = \Exp_{x,h_1,h_2,\ldots,h_k \in \mathbb{F}_2^n} \prod_{\omega \in \mathbb{F}_2^k} C^{|\omega|} f(x + \omega \cdot h),
\end{equation}
where $C^{|\omega|} f = f$ if $|\omega| := \sum_{j \in [k]} \omega_k$ is even and is $\overline{f}$ if $|\omega|$ is odd and $\omega\cdot h=(\omega_1h_1,\ldots,\omega_kh_k)$ .
A well-known result by Green and Tao~\cite{green2008inverse} is that the Gowers-$d$ norm measures how well a phase polynomial correlates with a degree-$d$ function. 
\begin{lemma}[\cite{green2008inverse}]
For every Boolean function $f: \mathbb{F}_2^n \rightarrow \mathbb{C}$ and a polynomial $p: \mathbb{F}_2^n \rightarrow \mathbb{F}_2^n$ of degree at most $d$, we have the following relation
\begin{equation}
    \left| \Exp_{x \in \mathbb{F}_2^n} \left[f(x) (-1)^{p(x)} \right] \right| \leq \norm{f}_{U^{d+1}}
\end{equation}
\end{lemma}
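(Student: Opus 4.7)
I would proceed by induction on $d$. For the base case $d = 0$, the polynomial $p$ is constant and $(-1)^{p(x)}$ reduces to a global sign, so $|\Exp_x[f(x)(-1)^{p(x)}]| = |\Exp_x[f(x)]|$. Unfolding the definition of the Gowers-$1$ norm gives $\|f\|_{U^1}^2 = \Exp_{x,h}[f(x)\bar{f}(x+h)] = |\Exp_x[f(x)]|^2$, so the claim holds with equality.

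For the inductive step, I would assume the statement for polynomials of degree $\leq d-1$ and write $g(x) = (-1)^{p(x)}$ with $\deg(p)\leq d$. The key move is a Cauchy--Schwarz-style squaring together with the substitution $y = x + h$:
$$|\Exp_x[f(x) g(x)]|^2 = \Exp_{x,y}[f(x)\bar{f}(y)\, g(x)\bar{g}(y)] = \Exp_h\!\left[\Exp_x[f_h(x)\,(-1)^{D_h p(x)}]\right],$$
where $f_h(x) := f(x)\bar{f}(x+h)$ and $D_h p(x) := p(x) + p(x+h)$. The crucial observation is that for every fixed $h$ the discrete derivative $D_h p$ has degree at most $d-1$ in $x$. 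Hence the inductive hypothesis applies to $f_h$ against the polynomial $D_h p$ and yields $|\Exp_x[f_h(x)(-1)^{D_h p(x)}]| \leq \|f_h\|_{U^d}$ pointwise in $h$.

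Taking absolute values outside the outer expectation and using Jensen's inequality (convexity of $t\mapsto t^{2^d}$ on $[0,\infty)$) gives
$$|\Exp_x[f g]|^2 \leq \Exp_h[\|f_h\|_{U^d}] \leq \bigl(\Exp_h[\|f_h\|_{U^d}^{2^d}]\bigr)^{1/2^d}.$$
To close the induction I would invoke the recursive identity $\|f\|_{U^{d+1}}^{2^{d+1}} = \Exp_h[\|f_h\|_{U^d}^{2^d}]$, which comes from splitting the index $\omega \in \mathbb{F}_2^{d+1}$ in the Gowers definition as $(\omega',\omega_{d+1})$ and pairing the two $\omega_{d+1}$ values, so that each pair collapses into either $f_h$ or $\bar{f}_h$ according to the parity of $|\omega'|$. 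Plugging this in gives $|\Exp_x[fg]|^2 \leq \|f\|_{U^{d+1}}^2$ and hence the lemma.

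The step I expect to require the most care is not the inductive argument itself but the bookkeeping required to verify that the conjugation pattern $C^{|\omega|}$ appearing in the definition of $\|\cdot\|_{U^{d+1}}$ matches exactly the pattern produced when one iterates the derivative $f \mapsto f_h$; getting the signs and parities right is where one must be attentive. Once that identity is in hand, the rest of the argument is essentially one round of Cauchy--Schwarz plus Jensen, applied to the classical identity that discrete differentiation lowers the degree of a polynomial by one.
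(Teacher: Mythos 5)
Your proof is correct and is the standard Green--Tao argument that the paper cites without reproducing: one Cauchy--Schwarz squaring to pass from $f$ to $f_h$ and from $p$ to its degree-lowering derivative $D_h p$, the inductive hypothesis inside the $h$-expectation, Jensen to pass to the $2^d$-th power, and the recursive identity $\|f\|_{U^{d+1}}^{2^{d+1}} = \Exp_h\|f_h\|_{U^d}^{2^d}$. All steps check, including the conjugation bookkeeping you flagged; nothing further is needed.
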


\paragraph{Gowers norm for quantum states.} For an arbitrary quantum state $\ket{\psi}=\sum_x f(x)\ket{x}$ where $f=(f(x))_x$ is an $\ell_2$-normed vector, we define its Gowers norm as follows  \begin{equation}
    \gowers{\ket{\psi}}{k}^{2^{k}} = 2^{n 2^{k-1}} \Exp_{x,h_1,h_2,\ldots,h_k \in \mathbb{F}_2^n} \prod_{\omega \in \mathbb{F}_2^k} C^{|\omega|} f(x + \omega \cdot h),
\end{equation}
where $C^{|\omega|} f = f$ if $|\omega| := \sum_{j \in [k]} \omega_k$ is even and is $\overline{f}$ if $|\omega|$ is odd. Note that the Gowers norm for quantum states also involves an extra pre-factor of $2^{n 2^{k-1}}$ in contrast to the usual definition for Boolean functions, to respect that $f$ is an $\ell_2$-normed vector.

\subsection{Properties of Gowers norm}
\begin{lemma} \label{lemma:gowers_3norm_boolean}
    For $f: \mathbb{F}_2^n \rightarrow \mathbb{C}$
    \begin{equation}
        \norm{f}_{U^3}^8 = \Exp_y \left[ \sum_{\alpha} |\widehat{f}_y(\alpha)|^4 \right].
    \end{equation}
\end{lemma}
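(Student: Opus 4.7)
The plan is to establish the standard recursion that expresses the $U^3$ norm as an expectation of $U^2$ norms of derivatives, and then evaluate the $U^2$ norm by a one-line Fourier calculation. Both steps are purely formal manipulations; the only care required is in tracking complex conjugates.

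First, I would unfold the definition
$\|f\|_{U^3}^8 = \Exp_{x,h_1,h_2,h_3} \prod_{\omega \in \mathbb{F}_2^3} C^{|\omega|} f(x + \omega \cdot h)$
and group the eight factors into four pairs according to the value of $(\omega_2,\omega_3)$, each pair having $\omega_1 \in \{0,1\}$. In each such pair the two parities $|\omega|$ differ by one, so one term is conjugated and the other is not; combining them produces either $f_{h_1}(x + \omega_2 h_2 + \omega_3 h_3) = f(\cdot)\overline{f(\cdot + h_1)}$ or its complex conjugate, depending on the parity of $\omega_2 + \omega_3$. The eight-fold product therefore collapses to
\[
f_{h_1}(x)\,\overline{f_{h_1}(x+h_2)}\,\overline{f_{h_1}(x+h_3)}\,f_{h_1}(x+h_2+h_3),
\]
which is precisely the integrand of $\|f_{h_1}\|_{U^2}^4$. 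Pulling out the average over $h_1$ yields the recursion $\|f\|_{U^3}^8 = \Exp_{h_1}\|f_{h_1}\|_{U^2}^4$.

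Second, I would evaluate $\|g\|_{U^2}^4$ for an arbitrary complex-valued $g \colon \mathbb{F}_2^n \to \mathbb{C}$ by expanding each occurrence of $g$ in its Fourier series $g = \sum_\alpha \widehat{g}(\alpha)\chi_\alpha$. The averages over $x$, $h_1$, $h_2$ produce indicator functions $[\alpha+\beta+\gamma+\delta=0]$, $[\beta=\delta]$, $[\gamma=\delta]$, which together force $\alpha=\beta=\gamma=\delta$ in $\mathbb{F}_2^n$. Only the diagonal survives, giving $\|g\|_{U^2}^4 = \sum_\alpha |\widehat{g}(\alpha)|^4$. Applying this identity with $g = f_y$ and substituting into the recursion from the previous step produces $\|f\|_{U^3}^8 = \Exp_y \sum_\alpha |\widehat{f_y}(\alpha)|^4$, as claimed.

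There is no genuine obstacle: the only potential pitfall is bookkeeping the $C^{|\omega|}$ conjugations when splitting $\omega$ along its last coordinate, which is handled by observing that $C^{|\omega'|+1} = \overline{C^{|\omega'|}}$ for any prefix $\omega'$.
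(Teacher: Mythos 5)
Your proposal is correct and follows essentially the same route as the paper's proof: both reduce $\|f\|_{U^3}^8$ to $\Exp_v \|f_v\|_{U^2}^4$ via the derivative structure, then evaluate $\|g\|_{U^2}^4 = \sum_\alpha |\widehat{g}(\alpha)|^4$ by Fourier expansion and diagonal extraction. The only cosmetic difference is that you present the recursion and the $U^2$ identity as two separate stated steps, whereas the paper carries out the same two calculations in a single display chain.
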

\begin{proof}
    We first claim that
    \begin{align*}
        \norm{f}_{U^2}^4&= \Exp_{x,y,z} f(x) \overline{f(x+y)} \overline{f(x+z)} f(x+y+z)\\
        &=\Exp_{x,y,z}\sum_{S,T,U,V} \widehat{f}(S)\overline{\widehat{f}(T)}\overline{\widehat{f}(U)}\widehat{f}(V) \chi_S(x)\chi_T(x+y)\chi_U(x+z)\chi_V(x+y+z)
        =\sum_S|\widehat{f}(S)|^4,
    \end{align*} 
    using $\Exp[\chi_{S+T}(x)]=[S=T]$. Similarly, one can show that
    \begin{align*}
         \norm{f}_{U^3}^8
         &= \Exp_{x,y,z,w} f(x) \overline{f(x+y)} \overline{f(x+z)} \overline{f(x+w)} f(x+y+w)f(x+y+z)f(x+z+w) \overline{f(x+y+z+w)}\\
         &=\Exp_v\Exp_{x,y,z}\sum_{S,T,U,V} \widehat{f_v}(S)\overline{\widehat{f_v}(T)}\overline{\widehat{f_v}(U)}\widehat{f_v}(V) \chi_S(x)\chi_T(x+y)\chi_U(x+z)\chi_V(x+y+z)\\
         &=\Exp_v\left[ \sum_{S} |\widehat{f_v}(S)|^4 \right].
    \end{align*}
    These equalities conclude the proof of the lemma.
\end{proof}

We also have the following relation between $\gowers{\ket{\psi}}{3}^2$, which can now be interpreted as computing expectation values of Paulis sampled from the characteristic distribution of the state $\ket{\psi}$ (i.e., $\Exp_{x \sim p_\Psi}\left[|\la \psi | W_x | \psi \ra|^2 \right]$), with the expectation values of Paulis sampled from the Weyl distribution (i.e., $\Exp_{x \sim q_\Psi}\left[|\la \psi | W_x | \psi \ra|^2 \right]$).

\begin{lemma}[{Connecting Gowers norm and Weyl distribution}]
\label{lemma:relation_expectation_paulis_qPsi_and_pPsi} For every $\ket{\psi}$ we have that 
\begin{equation*}
    {\gowers{\ket{\psi}}{3}^{8}}=\Exp_{x \sim p_\Psi}\left[|\la \psi | W_x | \psi \ra|^2 \right]  \geq  \Exp_{x \sim q_\Psi}\left[|\la \psi | W_x | \psi \ra|^2 \right] \geq \underbrace{\left(\Exp_{x \sim p_\Psi}\left[|\la \psi | W_x | \psi \ra|^2 \right] \right)^2}_{= \gowers{\ket{\psi}}{3}^{16}}. 
\end{equation*}
\end{lemma}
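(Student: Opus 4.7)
The plan is to rewrite each of the three quantities in the statement as an explicit polynomial in the values $\{p_\Psi(x)\}$, and then to compare them using (i) the pointwise bound $p_\Psi(x) \le 2^{-n}$ from the fact above and (ii) Cauchy--Schwarz together with the normalization $\sum_x p_\Psi(x) = 1$.

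First, I would establish the equality $\gowers{\ket{\psi}}{3}^{8} = \Exp_{x \sim p_\Psi}[|\langle \psi| W_x |\psi\rangle|^2]$. By Lemma~\ref{lemma:gowers_3norm_boolean} applied to the amplitude vector $f$, together with the prefactor $2^{n \cdot 2^{k-1}} = 2^{4n}$ in the definition of $\gowers{\cdot}{3}$, one has $\gowers{\ket{\psi}}{3}^{8} = 2^{4n}\Exp_v[\sum_\alpha |\widehat{f}_v(\alpha)|^4]$. On the other side, by Corollary~\ref{corr:pPsi_derivatives} one has $p_\Psi(v,w) = 2^{n}|\widehat{f}_v(w)|^2$, so $|\langle \psi|W_{(v,w)}|\psi\rangle|^2 = 2^n p_\Psi(v,w)$, which gives
\begin{equation*}
  \Exp_{x \sim p_\Psi}[|\langle \psi| W_x |\psi\rangle|^2] = 2^n \sum_{x \in \mathbb{F}_2^{2n}} p_\Psi(x)^2 = 2^n \cdot 2^{2n} \sum_{v,w} |\widehat{f}_v(w)|^4 \cdot 2^{n} = 2^{4n}\Exp_v\Big[\sum_\alpha |\widehat{f}_v(\alpha)|^4\Big],
\end{equation*}
matching the Gowers-$3$ expression.

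Next, I would rewrite $\Exp_{x\sim q_\Psi}[|\langle \psi|W_x|\psi\rangle|^2]$ in the same variables. Combining Lemma~\ref{lemma:expectation_paulis_qPsi} with Proposition~\ref{prop:fourier_coeffs_pPsi} (which says $\widehat{p}_\Psi(v,w) = 2^{-n} p_\Psi(w,v)$) gives
\begin{equation*}
  \Exp_{x\sim q_\Psi}[|\langle \psi|W_x|\psi\rangle|^2] = 2^{5n}\sum_x \widehat{p}_\Psi(x)^3 = 2^{2n}\sum_x p_\Psi(x)^3.
\end{equation*}
Thus the statement of the lemma reduces to the pair of inequalities
\begin{equation*}
  2^n \sum_x p_\Psi(x)^2 \; \geq \; 2^{2n}\sum_x p_\Psi(x)^3 \; \geq \; \Big(2^n\sum_x p_\Psi(x)^2\Big)^2.
\end{equation*}

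The left inequality follows immediately from $p_\Psi(x) \leq 2^{-n}$: multiplying through gives $p_\Psi(x)^3 \leq 2^{-n}p_\Psi(x)^2$, and summing yields the claim. The right inequality is an application of Cauchy--Schwarz: writing $p_\Psi(x)^2 = p_\Psi(x)^{1/2}\cdot p_\Psi(x)^{3/2}$ and using $\sum_x p_\Psi(x) = 1$,
\begin{equation*}
  \Big(\sum_x p_\Psi(x)^2\Big)^2 \;\leq\; \Big(\sum_x p_\Psi(x)\Big)\Big(\sum_x p_\Psi(x)^3\Big) = \sum_x p_\Psi(x)^3,
\end{equation*}
and multiplying both sides by $4^n$ matches the required inequality. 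Finally, squaring the already-established equality $\gowers{\ket{\psi}}{3}^{8} = \Exp_{x\sim p_\Psi}[|\langle \psi|W_x|\psi\rangle|^2]$ identifies the rightmost term as $\gowers{\ket{\psi}}{3}^{16}$, completing the chain.

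There is no real obstacle here; the only thing to be careful about is bookkeeping of the normalization factors (the $2^{n \cdot 2^{k-1}}$ prefactor in $\gowers{\cdot}{k}$, the $2^n$ between $p_\Psi$ and $|\langle\psi|W_x|\psi\rangle|^2$, and the $2^{-n}$ relating symplectic/standard Fourier coefficients of $p_\Psi$). Once these are aligned, the two inequalities reduce to elementary estimates on the $\ell_p$-norms of the probability vector $p_\Psi$.
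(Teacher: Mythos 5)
Your proof is correct and follows essentially the same structure as the paper's: both establish the equality via Lemma~\ref{lemma:gowers_3norm_boolean} and Corollary~\ref{corr:pPsi_derivatives}, and both get the right-hand inequality from the Cauchy--Schwarz bound $(\sum_x p_\Psi(x)^2)^2\le (\sum_x p_\Psi(x))(\sum_x p_\Psi(x)^3)$. The one place you diverge is the left-hand inequality: you first convert $\Exp_{x\sim q_\Psi}[\cdot]$ to $2^{2n}\sum_x p_\Psi(x)^3$ (via Lemma~\ref{lemma:expectation_paulis_qPsi_to_pPsi}) and then use the pointwise bound $p_\Psi(x)\le 2^{-n}$, whereas the paper keeps the convolution form $q_\Psi(x)=\sum_a p_\Psi(a)p_\Psi(x+a)$ and applies Cauchy--Schwarz to bound $\sum_a p_\Psi(a)p_\Psi(x+a)\le\sum_a p_\Psi(a)^2$ uniformly in $x$. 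Both are equally elementary; yours is slightly more uniform in that all three quantities get reduced to power sums of $p_\Psi$ before any inequality is applied. One small bookkeeping nit: in your first display the intermediate expression $2^n\cdot 2^{2n}\sum_{v,w}|\widehat{f}_v(w)|^4\cdot 2^n$ reads as $2^{4n}\sum_{v,w}|\widehat{f}_v(w)|^4$, which is off by a factor of $2^n$; what you mean is $2^{3n}\sum_{v,w}|\widehat{f}_v(w)|^4=2^{4n}\Exp_v[\sum_w|\widehat{f}_v(w)|^4]$, so the trailing $2^n$ should be understood as converting $\sum_v$ to $2^n\Exp_v$, not as a multiplier on $\sum_{v,w}$.
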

\begin{proof}
We first prove the first equality.  Using Lemma~\ref{lemma:gowers_3norm_boolean}, we have
\begin{equation}
\label{eq:gowersintermsofPpsi}
    \gowers{\ket{\psi}}{3}^8 = 2^{4n} \Exp_y \left[ \sum_{\alpha} |\widehat{f}_y(\alpha)|^4 \right] = 2^{2n} \Exp_y \left[ \sum_{\alpha} p_\Psi^2(y,\alpha) \right].
\end{equation}
We expand the right hand side and obtain
    \begin{equation*}
        \Exp_{x \sim p_\Psi}\left[|\la \psi | W_x | \psi \ra|^2 \right] = 2^n \sum_{x \in \mathbb{F}_2^{2n}} p^2_\Psi(x) = 2^{2n} \Exp_{y \in \mathbb{F}_2^n} \left[\sum_{\alpha \in \mathbb{F}_2^{n}} {p}_\Psi^2(y,\alpha)\right],
    \end{equation*}
    where we substituted $x = (y,\alpha)$ in the last equality with $y,\alpha \in \mathbb{F}_2^n$. This proves the first equality.    
Next we first prove the first inequality. By the Cauchy-Shwartz inequality, we have
\begin{align}
    \sum_{y,\alpha \in \mathbb{F}_2^n} p_\Psi^2(y,\alpha) \leq \sqrt{\sum_{y,\alpha} p_\Psi(y,\alpha) \, \cdot \, \sum_{y,\alpha} p_\Psi^3(y,\alpha)} = \sqrt{\sum_{y,\alpha} p_\Psi^3(y,\alpha)},
\end{align}
where the first inequality used $(\sum_i \alpha_i^2)^2\leq \sum_i \alpha_i \cdot \sum_i \alpha_i^3$ by Cauchy-Schwartz inequality. 
Note we then have after applying Lemma~\ref{lemma:expectation_paulis_qPsi_to_pPsi} and the above expression
\begin{align}
\label{eq:relatingppsiandqpsi}
    \Exp_{x \sim q_\Psi}\left[|\la \psi | W_x | \psi \ra|^2 \right] = 2^{2n} \sum_{y,\alpha} p_\Psi^3(y,\alpha) \geq 2^{2n} \left( \sum_{y,\alpha} p_\Psi^2(y,\alpha)\right)^2 = \left(\Exp_{x \sim p_\Psi}\left[|\la \psi | W_x | \psi \ra|^2 \right] \right)^2,
\end{align}
where we used the definition of $\Exp_{x \sim p_\Psi}\left[|\la \psi | W_x | \psi \ra|^2 \right]$ in the last equality. The relation with Gowers norm for the first (left) inequality then follows from the first equality. For the second inequality, we observe that
\begin{align*}
\Exp_{x \sim q_\Psi}\left[|\la \psi | W_x | \psi \ra|^2 \right] = \sum_x q_\Psi(x) 2^n p_\Psi(x) &= \sum_x 2^n p_\Psi(x) \left[\sum_a p_\Psi(a) p_\Psi(x+a) \right] \\ 
& \leq \sum_x 2^n p_\Psi(x) \left[\sum_a p_\Psi(a)^2 \right] \\
&= 2^n \sum_x p_\Psi(x)^2 \\
&= \Exp_{x \sim p_\Psi}\left[|\la \psi | W_x | \psi \ra|^2 \right],
\end{align*}
where we have used Cauchy-Shwartz inequality in the first inequality. The relation with Gowers norm on the right inequality then follows from the first equality.
\end{proof}

We now show the following theorem which characterizes all quantum states with $\gowers{\ket{\psi}}{3}^8=~1$.
\begin{theorem}
    An $n$-qubit quantum state $\ket{\psi}$ has $\gowers{\ket{\psi}}{3}^8 = 1$ if and only if it is a stabilizer state.
\end{theorem}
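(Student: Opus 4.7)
The plan is to reduce the statement to the formula $\gowers{\ket{\psi}}{3}^8 = 2^n \sum_{x \in \mathbb{F}_2^{2n}} p_\Psi(x)^2$, which is essentially the content of Lemma~\ref{lemma:relation_expectation_paulis_qPsi_and_pPsi}. Both directions of the equivalence then follow from analyzing the equality case of an elementary $\ell_2$--$\ell_1$ inequality and invoking the standard uncertainty principle for Paulis.

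For the ``if'' direction, suppose $\ket{\psi}$ is a stabilizer state with stabilizer group $\mathcal{S}$ of size $2^n$. As recorded in the preliminaries, $p_\Psi(x) = 2^{-n}$ exactly when $W_x \in \mathcal{S}$ (up to phase) and $p_\Psi(x) = 0$ otherwise. Plugging in gives $2^n \sum_x p_\Psi(x)^2 = 2^n \cdot 2^n \cdot 2^{-2n} = 1$.

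For the ``only if'' direction, I would first use the basic bound $p_\Psi(x) \in [0,2^{-n}]$ to write $p_\Psi(x)^2 \leq 2^{-n} p_\Psi(x)$, so that $\sum_x p_\Psi(x)^2 \leq 2^{-n}$ with equality iff $p_\Psi$ takes only the values $0$ and $2^{-n}$. Combined with $\gowers{\ket{\psi}}{3}^8 = 1$, this forces the support $T := \{x : p_\Psi(x) = 2^{-n}\} = \{x : |\langle \psi|W_x|\psi\rangle| = 1\}$ to have exactly $|T| = 2^n$ elements (using $\sum_x p_\Psi(x) = 1$).

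The remaining step is to show that $T$ is a Lagrangian subspace and that $\ket{\psi}$ is the corresponding stabilizer state. Each $W_x$ is Hermitian with eigenvalues $\pm 1$, and $|\langle \psi | W_x|\psi\rangle|=1$ implies $W_x\ket{\psi} = \lambda_x \ket{\psi}$ for some $\lambda_x \in \{\pm 1\}$. For any $x, y \in T$, applying $W_x W_y$ to $\ket{\psi}$ two different ways gives $W_x W_y \ket{\psi} = W_y W_x \ket{\psi} = \lambda_x \lambda_y \ket{\psi}$; if $W_x$ and $W_y$ anti-commuted this would force $\ket{\psi} = -\ket{\psi}$, so $[x,y]=0$. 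Hence $T$ is isotropic, and since $W_x W_y = \pm W_{x+y}$ we also get $x + y \in T$, so $T$ is an additive subgroup of size $2^n$, i.e.\ a Lagrangian subspace. Then $\{\lambda_x W_x : x \in T\}$ is an abelian subgroup of the Pauli group of order $2^n$ not containing $-I$ (since $\lambda_0 = +1$), i.e.\ a stabilizer group, and $\ket{\psi}$ is its unique joint $+1$ eigenstate, hence a stabilizer state.

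The whole argument is short; the only subtle step is the uncertainty-principle observation that forces $T$ to be isotropic (rather than merely additively closed), and the rest is routine once Lemma~\ref{lemma:relation_expectation_paulis_qPsi_and_pPsi} is in hand.
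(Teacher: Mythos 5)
Your proof is correct, and for the ``only if'' direction it takes a genuinely different route from the paper. The paper's proof is shorter but less self-contained: it uses Lemma~\ref{lemma:relation_expectation_paulis_qPsi_and_pPsi} to pass from $\gowers{\ket{\psi}}{3}^8 = 1$ to $\Exp_{x\sim q_\Psi}\left[|\la\psi|W_x|\psi\ra|^2\right] \geq \gowers{\ket{\psi}}{3}^{16} = 1$, and then simply invokes the Gross--Nezami--Walter / Grewal--Iyer--Kretschmer--Liang lower bound $\calF_\calS(\ket{\psi}) \geq \frac{4}{3}\Exp_{x\sim q_\Psi}[\cdot] - \frac{1}{3}$ (Fact~\ref{fact:lower_bound_stabilizer_fidelity}) to conclude $\calF_\calS(\ket{\psi}) = 1$. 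You instead extract the structural information directly: the equality case of $p_\Psi(x)^2 \leq 2^{-n} p_\Psi(x)$ pins down the support $T$ to have exactly $2^n$ points with $|\la\psi|W_x|\psi\ra| = 1$, and then the standard Pauli-algebra uncertainty argument (anti-commuting Paulis cannot both have a $\pm 1$ eigenvector in common) shows $T$ is isotropic and additively closed, hence Lagrangian, so $\ket{\psi}$ is the joint $+1$ eigenstate of the resulting stabilizer group. Your route rebuilds the stabilizer group from scratch and avoids citing the nontrivial external Fact~\ref{fact:lower_bound_stabilizer_fidelity}; the paper's route is a one-liner once that fact is available. Both are valid, and your argument is arguably more illuminating for the equality case since it exhibits the stabilizer explicitly. (One small phrasing nit: in the anti-commutation step, $W_x W_y\ket{\psi} = -W_y W_x\ket{\psi}$ together with both sides equalling $\lambda_x\lambda_y\ket{\psi}$ forces $\lambda_x\lambda_y\ket{\psi} = -\lambda_x\lambda_y\ket{\psi}$, i.e.\ $\ket{\psi} = 0$ --- the conclusion is the same as yours but the intermediate ``$\ket{\psi} = -\ket{\psi}$'' is really ``$2\lambda_x\lambda_y\ket{\psi}=0$''.)
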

\begin{proof}
Let us prove the forward case first. If $\ket{\psi}$ is a stabilizer state, then there is an Abelian group of Paulis $S$ of size $|S|=2^n$ whose elements (which are Paulis) stabilize the state $\ket{\psi}$. We will denote these Paulis by their corresponding $2n$-bit strings $(x,\alpha)\in \mathbb{F}_2^{2n}$. Then $p_\Psi(x,\alpha) = 2^{-n}$ for all $(x,\alpha) \in S$. Using Lemma~\ref{lemma:relation_expectation_paulis_qPsi_and_pPsi}, we then have that $\gowers{\ket{\psi}}{3}^8 = 1$. Now, suppose that $\ket{\psi}$ is a general pure quantum state satisfying $\gowers{\ket{\psi}}{3}^8 = 1$. Our goal is to then show that $\ket{\psi}$ is a stabilizer state. Using Lemma~\ref{lemma:relation_expectation_paulis_qPsi_and_pPsi}, we then have $\Exp_{x \sim q_\Psi}\left[|\la \psi | W_x | \psi \ra|^2 \right] \geq \gowers{\ket{\psi}}{3}^{16} = 1$. This in turn implies that $\calF_\calS(\ket{\psi}) = 1$ via Fact~\ref{fact:lower_bound_stabilizer_fidelity}. This completes the proof.
\end{proof}
We will now prove one side of the equality above for higher Gowers norms. To this end, we need the following well-known fact that shows that the output states of circuits consisting of gates from the diagonal Clifford heirarchy look like generalized phase states.
\begin{fact}[{\cite[Proposition~4]{arunachalam2022optimal}}]
\label{eq:factdiagonalleveld}
    Let $V$ be an $n$-qubit quantum circuit consisting of gates from the diagonal unitaries in the $d$-th level of the Clifford hierarchy. The state produced by the action of $V$ on $\ket{+}^{\otimes n}$ (up to a global phase) is
\begin{equation}
\label{eq:degreedphase}
    \ket{\psi_f} = \frac{1}{\sqrt{2^n}} \sum \limits_{x \in \01^n} \omega_q^{f(x)} \ket{x},
\end{equation}
where $q=2^d$, $\omega_q = \exp(2 \pi i/q)$ is the $q$'th root of unity, and the function $f:\mathbb{F}_2^n \rightarrow \ZZ_q$ with $\ZZ_q = \{0,1,\ldots,q-1\}$ being the ring of integers modulo $q$. The $\mathbb{F}_2$-representation of $f$ is as follows
\begin{equation}
    f(x) = \sum_{\substack{T\subseteq[n]\\ 1\le|T|\le d}}c_T\prod_{j\in T}x_j \pmod{2^d}, \quad c_T \in 2^{|T|-1} \ZZ_{2^{d+1-|T|}}.
    \label{eq:f2_representation_diagonal_unitaries}
\end{equation}
\end{fact}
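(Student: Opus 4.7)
The plan is to prove this by induction on the level $d$ of the diagonal Clifford hierarchy. For the base case $d=1$, the diagonal gates in $\mathcal{C}_1$ are (up to global phase) products of $Z$ and $I$, so applying such a gate to $\ket{+}^{\otimes n}$ produces $\frac{1}{\sqrt{2^n}}\sum_x (-1)^{a\cdot x}\ket{x}$ for some $a\in \mathbb{F}_2^n$. This matches Eq.~\eqref{eq:degreedphase} with $q=2$ and an $f$ whose only nonzero coefficients $c_{\{j\}}$ have $|T|=1$, lying in $\mathbb{Z}_2$, which is exactly the claim.

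For the inductive step, suppose the statement holds at levels below $d$. Write a diagonal level-$d$ unitary as $U = \mathrm{diag}\big(\omega_{2^d}^{\phi(x)}\big)_{x\in \mathbb{F}_2^n}$ for some $\phi:\mathbb{F}_2^n \to \mathbb{Z}_{2^d}$. The defining condition $U X_j U^\dagger \in \mathcal{C}_{d-1}$ for every single-qubit Pauli $X_j$ translates to the requirement that each discrete derivative $\phi(x+e_j) - \phi(x) \pmod{2^d}$ corresponds (as a diagonal phase) to a level-$(d-1)$ gate. Applying the inductive hypothesis to these derivatives forces each one to be a polynomial in $x$ of degree at most $d-1$ with the prescribed dyadic coefficient constraints. ``Integrating'' this family of derivative constraints in $x$ then yields that $\phi$ itself admits the $\mathbb{F}_2$-expansion of Eq.~\eqref{eq:f2_representation_diagonal_unitaries} with degree at most $d$. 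Composing finitely many such $U$'s (and verifying that products preserve the polynomial form and the allowed coefficient lattice) completes the characterization for the full diagonal $d$-th level, and applying any such $U$ to $\ket{+}^{\otimes n}$ yields the claimed form of $\ket{\psi_f}$.

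The main obstacle, and what makes this more than a routine induction, is the careful bookkeeping of the coefficient lattice $c_T \in 2^{|T|-1}\mathbb{Z}_{2^{d+1-|T|}}$. The constraint is not an arbitrary choice: it encodes that a monomial of degree $|T|$ contributes a phase $\omega_{2^d}^{c_T\prod_{j\in T}x_j}$ which must be generated by controlled-$Z^{1/2^{d-|T|}}$-type gates at level $d$, and no coarser lattice is achievable by such generators. Concretely one has to check, by direct computation on generators (diagonal single-qubit $Z^{1/2^{d-1}}$ together with multi-controlled versions that saturate to level $d$), that these constraints are both preserved under composition and attained exactly at the boundary $|T|=d$. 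Once this lattice is pinned down for the generators, it propagates through the inductive step by linearity of the exponent $\pmod{2^d}$. For the full details we defer to~\cite[Proposition~4]{arunachalam2022optimal}.
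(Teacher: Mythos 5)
The paper does not prove this statement---it imports it verbatim as a Fact from~\cite[Proposition~4]{arunachalam2022optimal}, so there is no in-paper proof to compare against. Your sketch outlines what is indeed the standard inductive route to this characterization: base case $d=1$ gives linear phase functions, and the inductive step uses that for a diagonal $U=\mathrm{diag}(\omega_{2^d}^{\phi(x)})$ one has $UX_jU^\dagger = X_j\cdot\mathrm{diag}\big(\omega_{2^d}^{\phi(x+e_j)-\phi(x)}\big)$, so membership in $\mathcal{C}_{d}$ forces the discrete derivatives of $\phi$ to be lower-level phase polynomials, which one then ``integrates.'' That is in the same spirit as the cited source.

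That said, your write-up has genuine gaps that go beyond bookkeeping. First, the Clifford-hierarchy condition requires $UPU^\dagger\in\mathcal{C}_{d-1}$ for \emph{all} Paulis $P$, not just single-qubit $X_j$'s, and $\mathcal{C}_{d-1}$ is not closed under multiplication for $d\ge 4$; so the reduction to single-qubit derivatives needs an explicit argument (e.g., that for diagonal $U$ the multi-qubit $X$-conjugates factor into the single-qubit ones times Pauli corrections, which stay inside $\mathcal{C}_{d-1}$). Second, the inductive hypothesis as you state it is about the state $V\ket{+}^{\otimes n}$, but you apply it to conjugates $UX_jU^\dagger$, which are not themselves diagonal---you need the state-level statement reformulated as a characterization of the diagonal part of $\mathcal{C}_{d-1}$ before induction can be invoked on derivatives. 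Third, the ``integration'' step---recovering the lattice $c_T\in 2^{|T|-1}\ZZ_{2^{d+1-|T|}}$ for $\phi$ from the derivative constraints, and verifying both closure under composition and tightness at $|T|=d$---is precisely where the theorem's content lives, and you explicitly defer it to~\cite{arunachalam2022optimal}. So the proposal is a correct high-level outline but not a self-contained proof; it is consistent with, rather than an alternative to, the cited argument.
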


Using this fact, the following theorem follows.
\begin{theorem}
Let $V$ be a circuit consisting of gates from the diagonal $d$-th level Clifford hierarchy. Then  $\gowers{V\ket{+}^n}{d+1}=1$.
\end{theorem}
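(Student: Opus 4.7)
The plan is to apply Fact~\ref{eq:factdiagonalleveld} to reduce the claim to a statement about an iterated discrete derivative vanishing modulo a power of two, and then to invoke the structural description of the phase polynomial in Eq.~\eqref{eq:f2_representation_diagonal_unitaries}. By Fact~\ref{eq:factdiagonalleveld}, there exist $q = 2^d$ and $f:\FF_2^n\to \ZZ_q$ with the monomial/coefficient structure of Eq.~\eqref{eq:f2_representation_diagonal_unitaries} such that $V\ket{+}^{\otimes n} = \ket{\psi_f}$, with amplitudes $f_\psi(x) = 2^{-n/2}\omega_q^{f(x)}$. Once the state is in this form the computation of $\gowers{\ket{\psi_f}}{d+1}$ is essentially bookkeeping with the definition.

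First I would plug these amplitudes into the definition of $\gowers{\ket{\psi_f}}{d+1}^{2^{d+1}}$. Since $|f_\psi(x)| = 2^{-n/2}$ and the product inside the expectation is over $|\FF_2^{d+1}| = 2^{d+1}$ factors, the magnitude of the integrand is $2^{-n\cdot 2^d}$, which exactly cancels the normalization pre-factor $2^{n\cdot 2^d}$. The conjugation operator $C^{|\omega|}$ simply flips the sign of the exponent when $|\omega|$ is odd, so the whole expression collapses to
\begin{equation*}
\gowers{\ket{\psi_f}}{d+1}^{2^{d+1}} \;=\; \Exp_{x,h_1,\ldots,h_{d+1}\in\FF_2^n} \omega_q^{\,D_{h_1,\ldots,h_{d+1}} f(x)},
\end{equation*}
where $D_{h_1,\ldots,h_{d+1}} f(x) := \sum_{\omega\in\FF_2^{d+1}} (-1)^{|\omega|} f\!\left(x + \omega\!\cdot\! h\right)$ is the $(d+1)$-fold iterated discrete derivative of $f$. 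Note the exponent is naturally an element of $\ZZ_q = \ZZ_{2^d}$, since $f$ is.

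The heart of the proof is then to show that $D_{h_1,\ldots,h_{d+1}} f(x) \equiv 0 \pmod{2^d}$ for all choices of $x$ and $h_i$; this immediately yields $\omega_q^{\,D f(x)} = 1$ identically, so the expectation equals $1$ and $\gowers{\ket{\psi_f}}{d+1} = 1$ as claimed. By linearity of $D$ it suffices to verify the divisibility for each monomial $c_T\prod_{j\in T} x_j$ with $|T| = k \leq d$ and $c_T \in 2^{k-1}\ZZ_{2^{d+1-k}}$ from Eq.~\eqref{eq:f2_representation_diagonal_unitaries}. Expanding the XOR used to evaluate $\prod_{j\in T}(x_j \oplus h_{i,j})$ inside $\ZZ$ via $a\oplus b = a + b - 2ab$, one checks by induction on the number of derivatives applied that each iterated difference of a degree-$k$ Boolean monomial is an integer linear combination of terms in which the power of $2$ grows by one each time a derivative hits a variable that has already been differentiated in a previous direction. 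After $d+1 \geq k+1$ derivatives every surviving term carries at least $2^{d+1-k}$; combined with the factor of $2^{k-1}$ supplied by $c_T$, the exponent acquires a factor of $2^{d}$, which is $0 \pmod {2^d}$.

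The main obstacle is the monomial-by-monomial divisibility count in the last paragraph. The algebra is elementary but fiddly because the ``degree'' of $f$ is measured in its $\FF_2$-representation while the arithmetic inside $\omega_q^{(\cdot)}$ lives in $\ZZ_{2^d}$, so one must carefully track the interplay between the factors of $2$ in the coefficients $c_T$ and those generated by repeatedly applying $\Delta_{h_i}$ to a Boolean monomial. A cleaner alternative I would consider is to argue directly from the inductive definition of the diagonal Clifford hierarchy (each level-$d$ diagonal unitary is obtained by conjugating a Pauli by a level-$(d-1)$ unitary), translating this inductively to the statement that the $(d+1)$-fold derivative of the phase function vanishes mod $2^d$; this avoids the explicit expansion and matches the structure of the proof of Fact~\ref{eq:factdiagonalleveld} in~\cite{arunachalam2022optimal}.
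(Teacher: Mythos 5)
Your proposal is correct and takes essentially the same route as the paper: substitute the phase-state amplitudes from Fact~\ref{eq:factdiagonalleveld}, observe that the $2^{n2^d}$ prefactor cancels the amplitude magnitudes, and reduce to the vanishing modulo $2^d$ of the $(d+1)$-fold discrete derivative of the phase polynomial $f$. The only difference is one of detail: you spell out why that derivative vanishes (via the coefficient constraints $c_T \in 2^{|T|-1}\ZZ_{2^{d+1-|T|}}$ in Eq.~\eqref{eq:f2_representation_diagonal_unitaries} and factor-of-$2$ bookkeeping, and you correctly keep the $(-1)^{|\omega|}$ signs coming from $C^{|\omega|}$), whereas the paper asserts more tersely that the $(d+1)$-th derivative of a ``degree-$d$ function'' is zero.
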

\begin{proof}
By Fact~\ref{eq:factdiagonalleveld}, we know that $$V\ket{+}^n=\frac{1}{\sqrt{2^n}} \sum \limits_{x \in \01^n} \omega_q^{f(x)} \ket{x},$$
for some $f:\mathbb{F}_2^n \rightarrow \ZZ_q$. Now observe that
\begin{align*}
\gowers{\ket{\psi}}{d+1}^{2^{d+1}} &= \frac{2^{n2^d}}{\sqrt{2^{n2^{d+1}}}}\Exp_{x,h_1,h_2,\ldots,h_{d+1} \in \mathbb{F}_2^n} \prod_{\omega \in \mathbb{F}_2^{d+1}} C^{|\omega|} \omega_q^{f(x + \omega \cdot h)}\\
&= \Exp_{x,h_1,h_2,\ldots,h_{d+1} \in \mathbb{F}_2^n} C^{|\omega|} \omega_q^{\sum_{\omega \in \mathbb{F}_2^{d+1}} f(x + \omega \cdot h)}\\
&= \Exp_{x,h_1,h_2,\ldots,h_{d+1} \in \mathbb{F}_2^n}   C^{|\omega|} \omega_q^{(\nabla_{h_1,\ldots,h_{d+1}} f)(x)}\\
&= \Exp_{x,h_1,h_2,\ldots,h_{d+1} \in \mathbb{F}_2^n} \omega_q^{0}=1,
\end{align*}
where the third equality used that $\sum_{\omega \in \mathbb{F}_2^{d+1}} f(x + \omega \cdot h)$ equals the $(d+1)$th additive derivative of $f$ on input $x$, in the direction $h_1,\ldots,h_{d+1}$ (which we denote as $(\nabla_{h_1,\ldots,h_{d+1}}f)(x)$), and the fourth equality used that the $(d+1)$th  derivative of a degree $d$ function $f$ equals the constant $c=0$ by Fact~\ref{eq:factdiagonalleveld}.
\end{proof}
It is an interesting open question if one can prove some version of a converse to the theorem above, i.e., high Gowers-$(d+1)$ norm implies high correlation with some degree-$d$ generalized phase state. 

\subsection{Computing Gowers-\texorpdfstring{$3$}{3} norm using copies}
We now discuss how to compute ${\gowers{\ket{\psi}}{3}^{8}}$ given access to copies of the state $\ket{\psi}$ and its conjugate $\ket{\psi^\star}$. From Lemma~\ref{lemma:relation_expectation_paulis_qPsi_and_pPsi}, we have that $\gowers{\ket{\psi}}{3}^{8}=\Exp_{x \sim p_\Psi}\left[|\la \psi | W_x | \psi \ra|^2 \right]$. This implies that we can compute the ${\gowers{\ket{\psi}}{3}^{8}}$ norm by randomly sampling Paulis $W_x$ with respect to the characteristic distribution $p_\Psi$ of state $\ket{\psi}$ and then measuring $\ket{\psi}^{\otimes 2}$ in the basis $W_x^{\otimes 2}$. To access the characteristic distribution, we carry out Bell sampling on $\ket{\psi} \otimes \ket{\psi^\star}$ (previously described in Section~\ref{sec:weyl_ops}). The corresponding circuit is shown in Figure~\ref{fig:qcirc_gowers3_norm}.
\begin{figure}[h!]
    \begin{center}
    \begin{adjustbox}{scale=1}
    \begin{quantikz}
        \ket{\psi} & \ctrl{1} & \gate{H} & \meter{} & \setwiretype{c} \wire[l][1]["x"{above,pos=0.2}]{a}  & \ctrl[vertical wire=c]{1}  \\
        \ket{\psi^\star} & \targ{} & & \meter{} & \setwiretype{c} \wire[l][1]["\alpha"{above,pos=0.2}]{a}  & \ctrl[vertical wire=c]{2} \\
        \ket{\psi} &  & & & & \meterD{W_{x,\alpha}}\\
        \ket{\psi} &  & & & & \meterD{W_{x,\alpha}}
    \end{quantikz}
    \end{adjustbox}
    \end{center}
    \caption{Quantum circuit for estimating $\gowers{\ket{\psi}}{3}^{8}$ given access to copies of $\ket{\psi}$ and $\ket{\psi^\star}$.}
    \label{fig:qcirc_gowers3_norm}
\end{figure}

\begin{lemma}
\label{lemestimateU3}
Let $\ket{\psi}=\sum_{x\in \01^n} f(x)\ket{x}$. One can estimate $\gowers{\ket{\psi}}{3}^{8}$ up to additive error~$\delta$ using $O(1/\delta^2)$ copies of $\ket{\phi}$ and $O(n/\delta^2)$ total many one-qubit and two-qubit gates.
\end{lemma}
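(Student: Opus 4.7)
The plan is to leverage the key identity
$$\gowers{\ket{\psi}}{3}^{8} = \Exp_{x \sim p_\Psi}\bigl[|\la \psi | W_x | \psi \ra|^2\bigr]$$
established in Lemma~\ref{lemma:relation_expectation_paulis_qPsi_and_pPsi}, and to design an unbiased estimator of the right-hand side that consumes only a constant number of copies of $\ket{\psi}$ and $\ket{\psi^\star}$ per trial. A single trial proceeds as in Figure~\ref{fig:qcirc_gowers3_norm}: first, perform Bell sampling on $\ket{\psi}\otimes \ket{\psi^\star}$, which produces an outcome $x=(a,b)\in \FF_2^{2n}$ distributed exactly according to $p_\Psi$; second, conditional on $x$, measure the Weyl operator $W_x$ on a fresh copy of $\ket{\psi}$ and independently on a second fresh copy, yielding outcomes $s_1,s_2 \in \{-1,+1\}$ (which are $\pm 1$ because $W_x$ is Hermitian with eigenvalues $\pm 1$). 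Set $Y := s_1 s_2$ as the single-trial random variable.

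Next I would verify that the estimator is unbiased. Conditional on the Bell outcome $x$, the two measurements of $W_x$ are independent with mean $\la \psi|W_x|\psi\ra$, so $\Exp[Y\mid x] = \la \psi|W_x|\psi\ra^2$. Taking the expectation over $x\sim p_\Psi$ and invoking the identity above yields $\Exp[Y] = \gowers{\ket{\psi}}{3}^{8}$. Since $|Y|\le 1$, Hoeffding's inequality (or the Chernoff bound of Lemma~\ref{lem:chernoff} applied to the shifted and rescaled variable $(1+Y)/2$) guarantees that averaging $T = O(1/\delta^2)$ independent repetitions produces an estimate within additive error $\delta$ with constant success probability, which can be boosted to $1-\gamma$ with an extra $O(\log(1/\gamma))$ multiplicative factor by a median-of-means trick if desired.

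Finally, for the gate complexity, each trial requires the Bell-sampling subcircuit ($n$ CNOT gates plus $n$ Hadamard gates followed by $2n$ computational-basis measurements) and then a $W_x$-basis measurement on each of the two remaining copies of $\ket{\psi}$. The latter is implemented by $O(n)$ single-qubit Clifford gates (one per qubit, rotating from the Pauli eigenbasis into the computational basis) followed by a computational-basis measurement; the global phase $i^{a\cdot b}$ only contributes a classical $\pm 1$ sign that is absorbed into the outcome. Thus each trial costs $O(n)$ one- and two-qubit gates and $O(1)$ copies each of $\ket{\psi}$ and $\ket{\psi^\star}$, and the total cost over $T$ trials is $O(n/\delta^2)$ gates and $O(1/\delta^2)$ copies, as claimed. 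I do not anticipate a serious technical obstacle — the only subtle point is ensuring that the Bell step samples from $p_\Psi$ rather than $q_\Psi$, which is precisely what the conjugate register $\ket{\psi^\star}$ buys us, in contrast to Bell \emph{difference} sampling of~\cite{gross2021schur} that produces a sample from $q_\Psi$.
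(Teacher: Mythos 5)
Your proposal matches the paper's approach exactly: the paper also establishes the lemma via the identity $\gowers{\ket{\psi}}{3}^{8}=\Exp_{x\sim p_\Psi}\bigl[|\la\psi|W_x|\psi\ra|^2\bigr]$ from Lemma~\ref{lemma:relation_expectation_paulis_qPsi_and_pPsi}, samples $x\sim p_\Psi$ by Bell sampling $\ket{\psi}\otimes\ket{\psi^\star}$, measures $W_x^{\otimes 2}$ on two fresh copies of $\ket{\psi}$, and averages over $O(1/\delta^2)$ trials (see Figure~\ref{fig:qcirc_gowers3_norm}). You merely spell out the unbiasedness computation, the concentration argument, and the gate count that the paper leaves implicit; the substance is identical.
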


To avoid the use of copies of the conjugate of the quantum state $\ket{\psi^\star}$, we will also consider computing $\Exp_{x \sim q_\Psi}\left[|\la \psi | W_x | \psi \ra|^2 \right]$ which bounds the $\gowers{\ket{\psi}}{3}^{8}$ norm according to Lemma~\ref{lemma:relation_expectation_paulis_qPsi_and_pPsi}. We will see that this is also sufficient for the purposes of tolerant testing stabilizer states. We compute this expectation by performing repetitions of Bell difference sampling on $4$ copies of $\ket{\psi}$ to produce $W_{x+y}$ followed by measuring $\ket{\psi}^{\otimes 2}$ in the basis $W_{x+y}^{\otimes 2}$. The corresponding circuit is shown in Figure~\ref{fig:qcirc_exp_weyl}.
\begin{figure}[h!]
    \begin{center}
    \begin{adjustbox}{scale=1}
    \begin{quantikz}
        \ket{\psi} & \ctrl{1} & \gate{H} & \meter{} & \setwiretype{c} \wire[l][1]["x"{above,pos=0.2}]{a}  & \ctrl[vertical wire=c]{2}  \\
        \ket{\psi} & \targ{} & & \meter{} & \setwiretype{c} \wire[l][1]["\alpha"{above,pos=0.2}]{a}  & & \ctrl[vertical wire=c]{2}  \\
        \ket{\psi} & \ctrl{1} & \gate{H} & \meter{} & \setwiretype{c} \wire[l][1]["y"{above,pos=0.2}]{a}  & \targ{} & & \wire[l][1]["x+y"{above,pos=0.4}]{a} & \ctrl[vertical wire=c]{2} \\
        \ket{\psi} & \targ{} & & \meter{} & \setwiretype{c} \wire[l][1]["\beta"{above,pos=0.2}]{a}  &  & \targ{} & \wire[l][1]["\alpha+\beta"{above,pos=0.2}]{a} & \ctrl[vertical wire=c]{2} \\
        \ket{\psi} &  & & & & & & & \meterD{W_{x+y,\alpha+\beta}}\\
        \ket{\psi} &  & & & & & & & \meterD{W_{x+y,\alpha+\beta}}
    \end{quantikz}
    \end{adjustbox}
    \end{center}
    \caption{Quantum circuit for estimating $\Exp_{x \sim q_\Psi}\left[|\la \psi | W_x | \psi \ra|^2 \right]$ given access to copies of $\ket{\psi}$.}
    \label{fig:qcirc_exp_weyl}
\end{figure}

\begin{lemma}
\label{lem:estimate_exp_weyl}
Let $\ket{\psi}=\sum_{x\in \01^n} f(x)\ket{x}$. One can estimate $\Exp_{x \sim q_\Psi}\left[|\la \psi | W_x | \psi \ra|^2 \right]$ upto additive error~$\delta$ using $O(1/\delta^2)$ copies of $\ket{\phi}$ and $O(n/\delta^2)$ many gates.
\end{lemma}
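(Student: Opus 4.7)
The plan is to exhibit the unbiased estimator implemented by the circuit in Figure~\ref{fig:qcirc_exp_weyl} and then to invoke a standard concentration argument to bound the number of copies of $\ket{\psi}$ required. First I would verify that the top four wires of the circuit (a Bell sample on $\ket{\psi}^{\otimes 2}$ followed by a second Bell sample on another $\ket{\psi}^{\otimes 2}$, combined by CNOTs) produce an output $z = x+y \in \mathbb{F}_2^{2n}$ where $x,y$ are i.i.d.\ samples from the characteristic distribution $p_\Psi$. By definition of the Weyl distribution as the convolution $q_\Psi = p_\Psi \star p_\Psi$ in Eq.~\eqref{eq:weyl_distribution}, the marginal law of $z$ is exactly $q_\Psi$. (This is the standard Bell difference sampling routine of \cite{gross2021schur}, already recalled in Section~\ref{sec:weyl_ops}.)

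Next, conditional on the classical outcome $z$, the circuit measures the remaining two fresh copies of $\ket{\psi}$ independently in the Pauli eigenbasis of $W_z$. Each such measurement yields a random sign $s_i \in \{-1,+1\}$ with expectation $\mathbb{E}[s_i\mid z] = \langle \psi | W_z | \psi \rangle$, and the two measurement outcomes are conditionally independent, so the product $Y := s_1 s_2$ satisfies
\begin{equation*}
\mathbb{E}[Y\mid z] \;=\; \langle \psi | W_z | \psi \rangle^{2} \;=\; |\langle \psi|W_z|\psi\rangle|^{2},
\end{equation*}
where the last equality holds since $\langle \psi|W_z|\psi\rangle \in \mathbb{R}$ for any Hermitian Weyl operator $W_z$ (up to a harmless global phase absorbed into the definition). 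Averaging over $z \sim q_\Psi$ gives $\mathbb{E}[Y] = \mathbb{E}_{z\sim q_\Psi}[|\langle\psi|W_z|\psi\rangle|^2]$, so one run of the circuit yields an unbiased $\pm 1$-valued estimator of the quantity of interest.

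Running the circuit $T$ independent times and averaging the outputs $Y^{(1)},\dots,Y^{(T)} \in \{-1,+1\}$ gives an empirical mean whose deviation from $\mathbb{E}_{z\sim q_\Psi}[|\langle\psi|W_z|\psi\rangle|^2]$ is controlled by Hoeffding's inequality: choosing $T = O(1/\delta^2)$ suffices to guarantee additive error at most $\delta$ with constant success probability (boosted to any desired $1-\eta$ by $O(\log(1/\eta))$ repetitions, which is absorbed in the constants). Each run uses $4$ copies of $\ket{\psi}$, so the total sample complexity is $O(1/\delta^2)$ copies of $\ket{\psi}$ (no access to $\ket{\psi^\star}$ is needed, in contrast to Lemma~\ref{lemestimateU3}).

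For the gate count, one Bell sample on an $n$-qubit pair uses $n$ CNOTs and $n$ Hadamards, the linear combination $x+y$ on $2n$ classical bits is free (or realized coherently by $2n$ additional CNOTs as depicted), and the final Pauli-basis measurement $W_{x+y}^{\otimes 2}$ applied to two fresh $n$-qubit registers is executed by $O(n)$ single-qubit Clifford rotations determined by the classical string $x+y$. Thus each trial uses $O(n)$ one- and two-qubit gates, and the total gate complexity is $O(n/\delta^2)$. The only potential subtlety is confirming that the conditional independence between the Bell-difference stage and the Pauli-measurement stage holds, which follows because the last two copies of $\ket{\psi}$ are untouched until the final $W_{x+y}$ basis measurement; this makes the main ``obstacle'' essentially bookkeeping rather than a technical difficulty.
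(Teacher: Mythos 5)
Your overall strategy --- one run of the circuit produces an unbiased $\{-1,+1\}$-valued estimator of $\Exp_{z\sim q_\Psi}\left[|\langle\psi|W_z|\psi\rangle|^2\right]$, and Hoeffding then gives $O(1/\delta^2)$ repetitions with $O(n)$ gates per run --- is correct, and it supplies the level of detail the paper leaves implicit (the paper states the lemma with only the circuit in Figure~\ref{fig:qcirc_exp_weyl} and no written proof). The conditional step is also fine: after fixing the classical outcome $z$, each of the last two fresh copies of $\ket{\psi}$ is measured in the $\pm 1$ eigenbasis of the Hermitian operator $W_z$, giving independent signs $s_1,s_2$ with $\Exp[s_1 s_2 \mid z] = \langle\psi|W_z|\psi\rangle^2 = |\langle\psi|W_z|\psi\rangle|^2$, and the classical XOR of the Bell outcomes costs nothing.

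There is, however, one false claim in your first step. You assert that the two Bell-measurement outcomes $x,y$ obtained from $\ket{\psi}^{\otimes 2}$ are i.i.d.\ draws from the characteristic distribution $p_\Psi$, and you then derive $z=x+y \sim q_\Psi$ from the convolution identity defining $q_\Psi$ in Eq.~\eqref{eq:weyl_distribution}. The premise is wrong: Bell sampling $\ket{\psi}\otimes\ket{\psi}$ returns outcome $a$ with probability $2^{-n}|\langle \psi^\star | W_a | \psi\rangle|^2$, not $p_\Psi(a) = 2^{-n}|\langle \psi | W_a | \psi\rangle|^2$. These agree only when $\ket{\psi}$ has real amplitudes up to a global phase; sampling from $p_\Psi$ directly requires access to $\ket{\psi^\star}$, as in Figure~\ref{fig:qcirc_gowers3_norm} and Lemma~\ref{lemestimateU3}. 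That the XOR $z=x+y$ is nonetheless distributed exactly according to $q_\Psi$ is the nontrivial theorem of Gross, Nezami and Walter~\cite{gross2021schur}; it is \emph{not} a consequence of the individual Bell samples following $p_\Psi$. Your parenthetical citation of \cite{gross2021schur} gives you the conclusion you need, but the justification you wrote for it is incorrect and would fail if you ever needed the marginal law of $x$ or $y$ alone rather than of their sum. Replace the claim that $x,y\sim p_\Psi$ by a direct appeal to the Gross--Nezami--Walter result that Bell difference sampling on $\ket{\psi}^{\otimes 4}$ produces a sample from $q_\Psi$, and the rest of your argument goes through unchanged.
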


\section{An inverse theorem for the Gowers-\texorpdfstring{$3$}{3} norm of quantum states}
\label{sec:inverse_theorem}
The main technical theorem we prove in this section is the following
\begin{restatable}{theorem}{gowersstates}\label{thm:inversegowersstates}
    Let $\gamma\in [0,1]$. If $\ket{\psi}$ is an $n$-qubit quantum state such that $\gowers{\ket{\psi}}{3}^8 \geq \gamma$, then there is an $n$-qubit stabilizer state $\ket{\phi}$ such that $|\la \psi | \phi \ra|^2 \geq \poly(\gamma)$.
    \end{restatable}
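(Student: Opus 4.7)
The plan is to follow the four-stage roadmap sketched in the introduction: produce a large set of ``heavy'' Paulis, extract an approximately additive subset, upgrade it to a genuine subgroup with most of its probability mass, and finally cover the subgroup by stabilizer groups of controlled size. I would begin by unpacking the hypothesis $\gowers{\ket{\psi}}{3}^8 \geq \gamma$: by the equality in Lemma~\ref{lemma:relation_expectation_paulis_qPsi_and_pPsi}, this means $\Exp_{x \sim p_\Psi}[|\langle \psi|W_x|\psi\rangle|^2] \geq \gamma$, and a Markov-type bound gives that $S_1 = \{(x,\alpha) : |\langle \psi|W_{x,\alpha}|\psi\rangle|^2 \geq \gamma/4\}$ satisfies $|S_1| \geq \Omega(\gamma)\cdot 2^n$. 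This is the basic heavy-Pauli set.

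Next I would implement the probabilistic choice-set construction in place of Samorodnitsky's choice function. Each $(x,\alpha)\in S_1$ is included in $S$ independently with a probability $p$ tuned so that $|S| \in [\gamma^2, 1/\gamma]\cdot 2^n$ with good probability, while forcing $(0,0) \in S$. The key computation is to relate the expected ``additive closure''
\begin{equation*}
L(S) = \Pr_{(x,\alpha),(y,\beta)\in S}\bigl[(x+y,\alpha+\beta) \in S\bigr]
\end{equation*}
to the Gowers-$3$ norm by expanding $\gowers{\ket{\psi}}{3}^{16}$ using Lemma~\ref{lemma:relation_expectation_paulis_qPsi_and_pPsi} as an average over triples of Paulis whose indices sum to zero, and matching each such triple against a product of indicator events for $S$. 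The target inequality is $\gowers{\ket{\psi}}{3}^{16} \leq \poly(1/\gamma)\cdot \Exp_S[L(S)] + O(\gamma^2)$, which combined with the hypothesis produces a realization of $S$ satisfying $L(S) \geq \poly(\gamma)$, containing $(0,0)$, with every element $(x,\alpha)$ giving $|\langle \psi|W_{x,\alpha}|\psi\rangle|^2 \geq \gamma/4$.

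From here the additive-combinatorics machinery takes over. The Balog-Szemer\'edi-Gowers theorem converts approximate closure into genuine small doubling: there exists $S'\subseteq S$ with $|S'|\geq \poly(\gamma)|S|$ and $|S'+S'|\leq \poly(1/\gamma)|S'|$. The polynomial Freiman-Ruzsa theorem of \cite{gowers2023conjecture} then covers $S'$ by $\poly(1/\gamma)$ cosets of a subgroup $V \subseteq \FF_2^{2n}$ with $|V|\leq |S'|$. Averaging over cosets gives some $z$ with $\Exp_{y\in V}[2^n p_\Psi(y+z)]\geq \poly(\gamma)$. To remove the affine shift, I would use that by Proposition~\ref{prop:symplectic_fourier_coeffs_pPsi} the symplectic Fourier transform of $p_\Psi$ equals $2^{-n}p_\Psi$ itself, so expanding $\Exp_{y\in V}[p_\Psi(y+z)]$ in the symplectic Fourier basis shows the quantity is maximized at $z\in V$, where it equals $\Exp_{y\in V}[p_\Psi(y)]$ by closure. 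Hence $\Exp_{y\in V}[2^n p_\Psi(y)]\geq \poly(\gamma)$.

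The final step converts the possibly non-isotropic subgroup $V$ into a stabilizer covering. I would use the structural fact that there is a Clifford $U$ and integers $k,m$ with $2k+1=\nac(V)$ such that $UVU^\dagger = \calP^k \times \langle Z_{k+1},\ldots,Z_{k+m}\rangle$. The crucial bound to establish is
\begin{equation*}
\poly(\gamma)\cdot 2^{2k+m} \;\leq\; \sum_{x\in V} 2^n p_\Psi(x) \;=\; \sum_{W_x \in UVU^\dagger} 2^n p_{\tilde\Psi}(x) \;\leq\; 2^{k+m},
\end{equation*}
where the upper bound follows from bounding the sum of squared expectations of $k$-qubit Paulis against the reduced density matrix of $U\ket{\psi}$ on the first $k$ qubits via $\sum_{P\in \calP^k}\Tr(P\rho)^2 = 2^k \Tr(\rho^2)\leq 2^k$, then multiplying by the $2^m$ commuting $Z$-type Paulis. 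Rearranging forces $k = O(\log(1/\gamma))$, so $V$ admits a stabilizer covering of size $4^k = \poly(1/\gamma)$. Pigeonholing, one Lagrangian subspace $T$ in the covering captures $\sum_{y\in T\cap V}p_\Psi(y) \geq \poly(\gamma)$, and Fact~\ref{fact:lower_bound_stabilizer_fidelity_pPsi_lagrangian_subspace} concludes $\calF_\calS(\ket{\psi})\geq \poly(\gamma)$.

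The two steps I expect to be the main obstacles are: (i) the Gowers-to-choice-set reduction, where the precise quantitative relationship between $\gowers{\ket{\psi}}{3}^{16}$ and $\Exp_S[L(S)]$ must be proved with the right additive error so that the BSG hypothesis is genuinely triggered despite the joint distribution over $(x,\alpha)$ pairs (unlike the classical setting where $\alpha = \phi(x)$ is a function of $x$); and (ii) the reduced-density-matrix bound in the final stage, where removing the affine shift earlier is essential because a shifted subgroup would not admit the clean interpretation of $\sum_{x\in V} 2^n p_\Psi(x)$ as a sum of squared Pauli expectations against a genuine reduced state.
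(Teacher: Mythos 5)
Your proposal follows essentially the same four-stage roadmap as the paper's proof of Theorem~\ref{thm:inversegowersstates}: heavy-Pauli set $S_1$ via Markov, randomized choice set $S$ with $\Exp_S[L(S)]$ controlled by the Gowers norm, BSG plus polynomial Freiman--Ruzsa to extract the subgroup $V$, removal of the affine shift by the symplectic Fourier identity $\widebreve{p}_\Psi = 2^{-n}p_\Psi$, and finally the Clifford normal form $UVU^\dagger = \calP^k \times \la Z_{k+1},\ldots,Z_{k+m}\ra$ with the bound forcing $k = O(\log(1/\gamma))$.

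One step that needs more care than your sketch suggests is the final reduced-density-matrix bound. You write $\sum_{P\in\calP^k}\Tr(P\rho)^2 = 2^k\Tr(\rho^2)\leq 2^k$ and then propose to ``multiply by the $2^m$ commuting $Z$-type Paulis.'' This Parseval identity on $\calP^k$ handles only the summand with $W_z = I$; for a general $W_z \in \calP^m_Z$ the quantity $\Tr\bigl((W_y\otimes W_z\otimes I_\ell)\tilde\Psi\bigr)$ is not an expectation of $W_y$ against the $k$-qubit reduced density matrix, so the Parseval argument does not directly give a uniform bound of $2^k$ per fixed $z$. The paper instead uses the unitary $1$-design identity $\sum_{W_y\in\calP^k} W_y O W_y = 2^k\Tr(O)\cdot I_k$ to show, for each fixed $W_z$, that
\begin{equation*}
\sum_{W_y\in\calP^k}\Tr\bigl((W_y\otimes W_z\otimes I_\ell)\tilde\Psi\bigr)^2 = 2^k\,\Tr\!\bigl(\rho (W_z\otimes I_\ell)\rho (W_z\otimes I_\ell)\bigr) \leq 2^k,
\end{equation*}
where $\rho$ is the reduced state on the last $n-k$ qubits, and then sums over the $2^m$ choices of $W_z$. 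Your conclusion $\sum_{x\in V} 2^n p_\Psi(x)\leq 2^{k+m}$ is correct, but the justification is this per-$z$ operator inequality rather than the plain $k$-qubit Parseval formula. Aside from this, and the minor point that the paper includes $(x,\alpha)$ in the choice set with probability $M(x,\alpha)=2^n p_\psi(x,\alpha)$ (which automatically places $(0,0)$ in $S$ since $M(0,0)=1$) rather than a single tuned probability $p$, your argument matches the paper's.
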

In order to prove this, we will prove the following three statements
\begin{enumerate}[$(i)$]
    \item If $\gowers{\ket{\psi}}{3}^8 \geq \gamma$, then there exists $S \subseteq \mathbb{F}_2^{2n}$ of size $|S|\in [\gamma/2 , 2/\gamma ]\cdot 2^n$ satisfying $(a)$ $S = \{(x,\alpha) : |\la \psi | X^x Z^{\alpha} | \psi \ra|^2 \geq \gamma/2\}$, and $(b)$ $\Pr_{(x,\alpha), (y,\beta) \in S}[ (x+y,\alpha+\beta) \in S]\geq \poly(\gamma)$. 
    \item If item $(i)$ holds true, there exists a subset $S' \subseteq S$ such that $|S'| \geq \poly(\gamma) |S|$ which is covered by few translates of a subgroup $V$ and $\Exp_{x \in V}\left[2^n p_\Psi(x) \right] \geq \poly(\gamma)$.
    \item If item $(ii)$ holds true, then we show $V$ has a stabilizer covering of $\poly(1/\gamma)$, which implies  that there is a stabilizer state $\ket{\phi}$ such that $|\la \psi | \phi \ra|^2 \geq \poly(\gamma)$.
    \end{enumerate}

As a side product, we will also show the following.
\begin{theorem}
\label{thm:inversegowersstates_weyl_expectations}
Let $\gamma\in [0,1]$, $C'>1$ be a constant. If $\ket{\psi}$ is an $n$-qubit quantum state such that $\Exp \limits_{x \sim q_\Psi}\left[|\la \psi | W_x | \psi \ra|^2 \right] \geq \gamma$, then there is an $n$-qubit stabilizer state $\ket{\phi}$ such that $|\la \psi | \phi \ra|^2 \geq \poly(\gamma)$.
\end{theorem}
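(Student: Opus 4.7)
\textbf{Proof plan for Theorem \ref{thm:inversegowersstates_weyl_expectations}.} The plan is to derive this statement as a direct corollary of the main inverse theorem, Theorem \ref{thm:inversegowersstates}, by invoking the ordering between $q_\Psi$- and $p_\Psi$-expectations already established in Lemma \ref{lemma:relation_expectation_paulis_qPsi_and_pPsi}.

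First I would recall the chain of inequalities proved in Lemma \ref{lemma:relation_expectation_paulis_qPsi_and_pPsi}, namely
\begin{equation*}
\gowers{\ket{\psi}}{3}^{8} \;=\; \Exp_{x\sim p_\Psi}\bigl[|\langle\psi|W_x|\psi\rangle|^2\bigr] \;\geq\; \Exp_{x\sim q_\Psi}\bigl[|\langle\psi|W_x|\psi\rangle|^2\bigr] \;\geq\; \gowers{\ket{\psi}}{3}^{16}.
\end{equation*}
The left-hand inequality is exactly what is needed here: combining the identity $|\langle\psi|W_x|\psi\rangle|^2 = 2^n p_\Psi(x)$ with $q_\Psi = 4^n(p_\Psi \star p_\Psi)$ and Cauchy--Schwarz shows that sampling from $q_\Psi$ can only shrink (not grow) the expectation, because the maximum-entropy term $\sum_a p_\Psi(a)^2$ dominates $\sum_a p_\Psi(a) p_\Psi(x+a)$ pointwise. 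Under the hypothesis $\Exp_{x\sim q_\Psi}[|\langle\psi|W_x|\psi\rangle|^2] \geq \gamma$, the left inequality then yields $\gowers{\ket{\psi}}{3}^{8} \geq \gamma$.

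At this point the hypothesis of Theorem \ref{thm:inversegowersstates} is satisfied with the same parameter $\gamma$, so I would apply that theorem directly to produce an $n$-qubit stabilizer state $\ket{\phi}$ with $|\langle\psi|\phi\rangle|^2 \geq \poly(\gamma)$, finishing the argument. No additional structural analysis (choice sets, Balog--Szemer\'edi--Gowers, polynomial Freiman--Ruzsa, or the stabilizer-covering step) is required, since all of that work has been absorbed into the proof of Theorem \ref{thm:inversegowersstates}.

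There is essentially no obstacle to this route; the only subtlety worth flagging is that one could alternatively give a direct proof that mirrors the three-step strategy of Theorem \ref{thm:inversegowersstates} but starts from the set $\{x : q_\Psi(x)\cdot 2^n p_\Psi(x) \geq \poly(\gamma)\}$ rather than $S_1$ of Eq.~\eqref{eq:defnofS1}. Such a direct proof would require redoing the approximate-subgroup analysis of Section~\ref{sec:largegowersstart} with the measure $q_\Psi$ in place of $p_\Psi$, which is strictly more work and yields the same $\poly(\gamma)$ conclusion. Consequently the corollary route via Lemma \ref{lemma:relation_expectation_paulis_qPsi_and_pPsi} is the cleanest proof.
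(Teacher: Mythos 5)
Your proof is correct, and it takes a genuinely different (and cleaner) route than the paper's. You observe that the left-hand inequality in Lemma~\ref{lemma:relation_expectation_paulis_qPsi_and_pPsi}, namely $\gowers{\ket{\psi}}{3}^{8} = \Exp_{x\sim p_\Psi}[|\langle\psi|W_x|\psi\rangle|^2] \geq \Exp_{x\sim q_\Psi}[|\langle\psi|W_x|\psi\rangle|^2]$, promotes the hypothesis $\Exp_{x\sim q_\Psi}[\cdots]\geq\gamma$ directly to $\gowers{\ket{\psi}}{3}^{8}\geq\gamma$, after which Theorem~\ref{thm:inversegowersstates} applies as a black box and gives $|\langle\psi|\phi\rangle|^2\geq\poly(\gamma)$. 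Since the theorem asserts only a $\poly(\gamma)$ bound with no specified exponent, this is a complete proof.

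The paper instead re-runs the entire structural pipeline from scratch under the $q_\Psi$-hypothesis: it applies Theorem~\ref{thm:large_set_gamma_qPsi_expectations} (the $q_\Psi$ analogue of Theorem~\ref{thm:large_set_gamma}), then BSG (Theorem~\ref{thm:bsg}), polynomial Freiman--Ruzsa (Theorem~\ref{thm:marton_conjecture}) together with Claim~\ref{claim:maximizing_expectation_coset}, the stabilizer-covering bound (Theorem~\ref{thm:stabilizer_covering_group}), and finally Fact~\ref{fact:lower_bound_stabilizer_fidelity_pPsi_lagrangian_subspace}, tracking explicit constants throughout to arrive at $\Omega(\gamma^{1089})$. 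The payoff of the paper's longer route is a quantitatively tighter exponent: Theorem~\ref{thm:inversegowersstates} itself goes through Lemma~\ref{lemma:high_gowers_implies_linearity_of_pPsi}, which squares the parameter (i.e.\ it feeds $\gamma^2$ rather than $\gamma$ into the approximate-subgroup step via Theorem~\ref{thm:large_set_gamma} whose hypothesis is $\gowers{\ket{\psi}}{3}^8\geq\sqrt{\gamma}$), so your corollary route inherits roughly double the exponent. The paper's direct derivation avoids that loss by invoking Corollary~\ref{corr:relation_expectation_p_Psi_given_weyl_condn} in place of Lemma~\ref{lemma:high_gowers_implies_linearity_of_pPsi}. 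In short: your argument is shorter and is a legitimate proof of the stated theorem; the paper's is longer but yields a better explicit polynomial, which matters for the constants in the final tolerant-testing guarantee (Theorem~\ref{thm:testphase2}).

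Two minor points. First, your justification of the inequality $\Exp_{p_\Psi}\geq\Exp_{q_\Psi}$ is phrased somewhat loosely (``maximum-entropy term dominates pointwise''); the precise statement, already proved in Lemma~\ref{lemma:relation_expectation_paulis_qPsi_and_pPsi}, is $\sum_a p_\Psi(a)p_\Psi(x+a)\leq\sum_a p_\Psi(a)^2$ for each fixed $x$ by Cauchy--Schwarz, so you can simply cite that lemma. Second, you correctly note that a direct proof mirroring Theorem~\ref{thm:inversegowersstates} is possible and would give the same qualitative conclusion --- that is in fact exactly what the paper does, and the reason it bothers is the tighter exponent.
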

We now prove the above listed items in the following sections below, leading to Theorem~\ref{thm:inversegowersstates}.

\subsection{Proving item \texorpdfstring{$1$}{1}: Finding an approximate subgroup}\label{sec:item1}
To prove this, we first need the following lemma connecting the Gowers norm and Weyl distribution.

\begin{lemma}
\label{lemma:high_gowers_implies_linearity_of_pPsi}
If $\gowers{\ket{\psi}}{3}^8 \geq \gamma$ then
\begin{equation*}
    2^{3n} \Exp_{x,y \in \mathbb{F}_2^n} \left[ \sum_{\alpha,\beta \in \mathbb{F}_2^n} p_\Psi(x,\alpha) p_\Psi(y,\beta) p_\Psi(x+y,\alpha+\beta) \right] \geq \gamma^2.
\end{equation*}
\end{lemma}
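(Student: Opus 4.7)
The plan is to recognize that the double sum on the left-hand side is essentially the quantity $\Exp_{x \sim q_\Psi}\left[|\la\psi|W_x|\psi\ra|^2\right]$ in disguise, and then invoke the already-established Lemma~\ref{lemma:relation_expectation_paulis_qPsi_and_pPsi} to connect it back to the Gowers-$3$ norm. So the proof should require no additional combinatorial machinery — just bookkeeping.

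Concretely, first I would rewrite the expectation over $x,y$ as $\frac{1}{2^{2n}}$ times a sum, collapse the four indices $(x,\alpha),(y,\beta)$ into two variables $u,v \in \mathbb{F}_2^{2n}$, and observe that the quantity we want to bound equals
\begin{equation*}
    2^n \sum_{u,v \in \mathbb{F}_2^{2n}} p_\Psi(u)\, p_\Psi(v)\, p_\Psi(u+v).
\end{equation*}
Next I would perform the inner sum over $v$ first: by the defining convolution identity $q_\Psi(u) = \sum_{v \in \mathbb{F}_2^{2n}} p_\Psi(v)\, p_\Psi(u+v)$ from Eq.~\eqref{eq:weyl_distribution}, this collapses to $2^n \sum_u p_\Psi(u)\, q_\Psi(u)$. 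Using the identity $2^n p_\Psi(x) = |\la\psi|W_x|\psi\ra|^2$, this equals exactly
\begin{equation*}
    \sum_{u} q_\Psi(u) \cdot |\la\psi|W_u|\psi\ra|^2 = \Exp_{x \sim q_\Psi}\left[|\la\psi|W_x|\psi\ra|^2\right].
\end{equation*}

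Finally, I would finish by applying Lemma~\ref{lemma:relation_expectation_paulis_qPsi_and_pPsi}, which gives $\Exp_{x \sim q_\Psi}[|\la\psi|W_x|\psi\ra|^2] \geq \gowers{\ket{\psi}}{3}^{16}$. Under the hypothesis $\gowers{\ket{\psi}}{3}^8 \geq \gamma$, this is at least $\gamma^2$, which is precisely the desired bound. There is no real obstacle in this proof — the only subtle point is recognizing the triple-sum as a convolution pairing $\la p_\Psi, q_\Psi\ra$, after which everything is immediate from the preliminaries. The intuitive content is that the $U^3$ Gowers norm of $\ket{\psi}$ exactly measures the ``additive structure'' of the Paulis on which $|\la\psi|W_x|\psi\ra|^2$ is concentrated, which is the first step toward isolating the approximate subgroup $S$ described in Section~\ref{sec:item1}.
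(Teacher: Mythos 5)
Your proof is correct and is essentially the same argument the paper gives: both identify the triple sum as $\Exp_{x\sim q_\Psi}[|\la\psi|W_x|\psi\ra|^2]$ via the convolution definition of $q_\Psi$ and then invoke Lemma~\ref{lemma:relation_expectation_paulis_qPsi_and_pPsi} to get $\geq \gowers{\ket{\psi}}{3}^{16} \geq \gamma^2$. The only cosmetic difference is the direction of the chain of equalities — you start from the left-hand side and collapse the convolution, whereas the paper starts from $\Exp_{x\sim q_\Psi}[\cdot]$ and expands it.
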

\begin{proof}
Using Lemma~\ref{lemma:relation_expectation_paulis_qPsi_and_pPsi}, we have that 
\begin{equation*}
    \Exp_{x \sim q_\Psi}\left[|\la \psi | W_x | \psi \ra|^2 \right]  \geq  \gowers{\ket{\psi}}{3}^{16} \geq \gamma^2. 
\end{equation*}
However, expectations of Paulis sampled from $q_\Psi$ can be shown to satisfy
\begin{align*}
    \gamma^2 \leq \Exp_{(x,\alpha) \sim q_\Psi}\left[|\la \psi | W_x | \psi \ra|^2 \right] 
    &= \sum_{(x,\alpha) \in \mathbb{F}_2^{2n}} q_\Psi(x,\alpha) 2^n p_\Psi(x,\alpha) \\
    &= 2^n \sum_{x,\alpha \in \mathbb{F}_2^n} \left[ \sum_{y,\beta \in \mathbb{F}_2^n} p_\Psi(y,\beta) p_\Psi(x+y,\alpha+\beta) \right] p_\Psi(x,\alpha) \\
    &= 2^{3n} \Exp_{x,\alpha \in \mathbb{F}_2^n} \left[ \sum_{y,\beta \in \mathbb{F}_2^n} p_\Psi(x,\alpha) p_\Psi(y,\beta) p_\Psi(x+y,\alpha+\beta) \right] 
\end{align*}
which concludes the lemma statement. 
\end{proof}

The following is then immediate.
\begin{corollary}\label{corr:relation_expectation_p_Psi_given_weyl_condn}
If $\Exp \limits_{x \sim q_\Psi}\left[|\la \psi | W_x | \psi \ra|^2 \right] \geq \gamma$.
\begin{equation}
    2^{3n} \Exp_{x,y \in \mathbb{F}_2^n} \left[ \sum_{\alpha,\beta \in \mathbb{F}_2^n} p_\Psi(x,\alpha) p_\Psi(y,\beta) p_\Psi(x+y,\alpha+\beta) \right] \geq \gamma.
\end{equation}
\end{corollary}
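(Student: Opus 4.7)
The corollary is effectively a direct consequence of the algebraic manipulation carried out in the proof of Lemma~\ref{lemma:high_gowers_implies_linearity_of_pPsi}; the only difference between the two statements is the starting hypothesis. In Lemma~\ref{lemma:high_gowers_implies_linearity_of_pPsi}, one first invokes Lemma~\ref{lemma:relation_expectation_paulis_qPsi_and_pPsi} to convert the Gowers-$3$ norm hypothesis $\gowers{\ket{\psi}}{3}^{8} \geq \gamma$ into the Weyl-distribution bound $\Exp_{x \sim q_\Psi}[|\la \psi | W_x | \psi \ra|^2] \geq \gamma^2$, and only afterwards rearranges this expectation into the triple-product form. In the corollary, the Weyl-distribution bound $\Exp_{x \sim q_\Psi}[|\la \psi | W_x | \psi \ra|^2] \geq \gamma$ is given by hypothesis, so the plan is to skip the first step and perform only the rearrangement.

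Concretely, I would start from the hypothesis and expand using $q_\Psi(x,\alpha) = \sum_{y,\beta} p_\Psi(y,\beta)\, p_\Psi(x+y, \alpha+\beta)$, the definition of the Weyl distribution in Eq.~\eqref{eq:weyl_distribution}. Together with the identity $|\la \psi | W_{x,\alpha}| \psi \ra|^2 = 2^n p_\Psi(x,\alpha)$, this gives
\[
\gamma \leq \Exp_{(x,\alpha)\sim q_\Psi}\big[|\la \psi | W_{x,\alpha} | \psi \ra|^2\big] = 2^n \sum_{x,\alpha} p_\Psi(x,\alpha) \sum_{y,\beta} p_\Psi(y,\beta)\, p_\Psi(x+y,\alpha+\beta).
\]
Converting the outer sum over $(x,\alpha) \in \mathbb{F}_2^{2n}$ into an expectation introduces a factor of $2^{2n}$, yielding $2^{3n} \Exp_{x,\alpha}[\sum_{y,\beta} p_\Psi(x,\alpha)\, p_\Psi(y,\beta)\, p_\Psi(x+y, \alpha+\beta)] \geq \gamma$, which is exactly the desired inequality.

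There is essentially no obstacle here: the corollary is a one-line consequence of the proof of Lemma~\ref{lemma:high_gowers_implies_linearity_of_pPsi}, obtained by substituting $\gamma$ for $\gamma^2$ on the right-hand side of the chain of equalities used there. The only reason to state it separately is that later in the paper one sometimes wants to work directly with the hypothesis $\Exp_{x \sim q_\Psi}[|\la \psi | W_x | \psi \ra|^2] \geq \gamma$ (for instance, when the tester estimates this expectation via Bell difference sampling as in Lemma~\ref{lem:estimate_exp_weyl}) rather than with the Gowers-$3$ norm hypothesis, which would lose an extra square. Recording the statement as a corollary therefore just provides a parallel version of Lemma~\ref{lemma:high_gowers_implies_linearity_of_pPsi} tailored to Theorem~\ref{thm:inversegowersstates_weyl_expectations}.
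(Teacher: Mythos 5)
Your proposal is correct and follows exactly the same route as the paper: starting directly from the hypothesis $\Exp_{x\sim q_\Psi}[|\la \psi|W_x|\psi\ra|^2]\geq\gamma$, expanding $q_\Psi$ via Eq.~\eqref{eq:weyl_distribution}, substituting $|\la\psi|W_{x,\alpha}|\psi\ra|^2 = 2^n p_\Psi(x,\alpha)$, and converting the outer sum over $\mathbb{F}_2^{2n}$ into an expectation to pick up the factor $2^{2n}$, giving $2^{3n}$ in total. Your observation that this is just the rearrangement step of Lemma~\ref{lemma:high_gowers_implies_linearity_of_pPsi} with the initial Gowers-to-Weyl conversion stripped away (hence $\gamma$ in place of $\gamma^2$) matches the paper's proof precisely.
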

\begin{proof}
The proof can be seen as follows
\begin{align*}
    \gamma\leq  \Exp_{(x,\alpha) \sim q_\Psi}\left[|\la \psi | W_x | \psi \ra|^2 \right] 
    &= \sum_{(x,\alpha) \in \mathbb{F}_2^{2n}} q_\Psi(x,\alpha) 2^n p_\Psi(x,\alpha) \\
    &= 2^n \sum_{x,\alpha \in \mathbb{F}_2^n} \left[ \sum_{y,\beta \in \mathbb{F}_2^n} p_\Psi(y,\beta) p_\Psi(x+y,\alpha+\beta) \right]  p_\Psi(x,\alpha) \\
    &= 2^{3n} \Exp_{x,\alpha \in \mathbb{F}_2^n} \left[ \sum_{y,\beta \in \mathbb{F}_2^n} p_\Psi(x,\alpha) p_\Psi(y,\beta) p_\Psi(x+y,\alpha+\beta) \right],
\end{align*}
hence proving the corollary.
\end{proof}
\begin{theorem}
\label{thm:large_set_gamma}
Let $\gamma >0$. If $\gowers{\ket{\psi}}{3}^8 \geq \sqrt{\gamma}$, then there exists $S \subseteq \mathbb{F}_2^{2n}$ satisfying
\begin{enumerate}[(i)]
\item  $|S|\in [\gamma^2/80 , 4/\gamma ]\cdot 2^n$
    \item $S \subseteq \{(x,\alpha) : |\la \psi | X^x Z^{\alpha} | \psi \ra|^2 \geq \gamma/4\}$
    \item $\Pr_{(x,\alpha), (y,\beta) \in S}[ (x+y,\alpha+\beta) \in S]\geq \Omega(\gamma^5)$
    \item Additionally $0^{2n}$ lies in $S$.
\end{enumerate}
\end{theorem}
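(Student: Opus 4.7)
The plan is to take $S$ to be the superlevel set $S := \{X \in \FF_2^{2n} : Z(X) \geq \gamma/4\}$ of the weight function $Z(X) := |\la\psi|W_X|\psi\ra|^2 = 2^n p_\Psi(X) \in [0,1]$. Items $(ii)$ and $(iv)$ are then immediate: $(ii)$ is the very definition of $S$, and $(iv)$ follows from $Z(0^{2n}) = 1 \geq \gamma/4$. The upper bound in $(i)$, namely $|S| \leq (4/\gamma) \cdot 2^n$, follows from $\sum_X Z(X) = 2^n$ (equivalently $\sum_X p_\Psi(X) = 1$) combined with $Z \geq \gamma/4$ on $S$. For the matching lower bound I rewrite the hypothesis $\gowers{\ket\psi}{3}^8 \geq \sqrt{\gamma}$ via Lemma~\ref{lemma:relation_expectation_paulis_qPsi_and_pPsi} as $\sum_X Z(X)^2 \geq 2^n \sqrt{\gamma}$, bound the off-$S$ contribution by $\sum_{X \notin S} Z(X)^2 \leq (\gamma/4)\sum_X Z(X) = (\gamma/4) \cdot 2^n$, and use $Z \leq 1$ to conclude $|S| \geq \sum_{X \in S} Z(X)^2 \geq (\sqrt{\gamma}/2) \cdot 2^n$, which exceeds $(\gamma^2/80) \cdot 2^n$ for every $\gamma \in (0,1]$.

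For item $(iii)$ I apply Lemma~\ref{lemma:high_gowers_implies_linearity_of_pPsi} with input $\sqrt{\gamma}$ to obtain, after rewriting $p_\Psi = Z/2^n$, the additive-energy bound
\[
\sum_{X,Y \in \FF_2^{2n}} Z(X)\, Z(Y)\, Z(X+Y) \;\geq\; \gamma \cdot 2^{2n}.
\]
The key step is to restrict this sum to triples lying entirely in $S$ by thresholding sequentially on the three variables: the contribution from $X \notin S$ is at most $(\gamma/4) \sum_{X,Y} Z(Y) Z(X+Y) = (\gamma/4)\left(\sum_X Z(X)\right)^2 = (\gamma/4) \cdot 2^{2n}$ using the translation invariance $\sum_X Z(X+Y) = 2^n$, and the analogous bound applies to $Y \notin S$ and to $X+Y \notin S$. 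Subtracting these three excluded regions from $\gamma \cdot 2^{2n}$ leaves at least $(\gamma/4) \cdot 2^{2n}$ of mass on triples $(X,Y)$ with all of $X, Y, X+Y$ in $S$; since each surviving summand is at most $1$, this lower-bounds the triple count $|\{(X,Y) \in S^2 : X+Y \in S\}|$ by $(\gamma/4) \cdot 2^{2n}$, and dividing by $|S|^2 \leq (16/\gamma^2) \cdot 2^{2n}$ yields $\Pr_{s,t \in S}[s+t \in S] \geq \gamma^3/64 = \Omega(\gamma^5)$.

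The main subtlety sits in the thresholding: the additive-energy lower bound must beat the aggregate $(3\gamma/4) \cdot 2^{2n}$ coming from the three excluded regions by a positive amount, and this is precisely why Lemma~\ref{lemma:high_gowers_implies_linearity_of_pPsi} applied to the hypothesis $\gowers{\ket\psi}{3}^8 \geq \sqrt{\gamma}$ (which, since the lemma squares its input, outputs an energy lower bound of $\gamma$) is just strong enough; any weaker input would make the threshold subtraction vacuous. As an alternative, the introduction hints at a probabilistic-subsampling variant in which each element of the level set is retained independently with a tuned probability $p$ so as to land $|S|$ in a tighter window of order $\gamma^2 \cdot 2^n$ before invoking the subsequent Balog-Szemeredi-Gowers step, at the price of a $p^3$ loss in the closure bound (consistent with the $\gamma^5$ exponent stated); the direct deterministic choice above already suffices to verify the four items in the statement.
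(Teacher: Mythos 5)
Your proof is correct, and it takes a genuinely different (and cleaner) route than the paper's. You choose the set $S$ deterministically to be the full superlevel set $\{X : Z(X) \geq \gamma/4\}$, whereas the paper constructs a \emph{random} subset of this level set (each element $(x,\alpha)$ is kept independently with probability $M(x,\alpha) = 2^n p_\psi(x,\alpha)$), invokes a Chernoff bound to show the random set is large with overwhelming probability, and then lower-bounds the expectation of the closure probability $L(X')$ over the random draw to conclude some fixed $X'$ works. Your deterministic argument dispenses with the Chernoff step and the final existence/averaging argument entirely. The key mechanism is the same — bound the triple-correlation sum $\sum_{X,Y} Z(X)Z(Y)Z(X+Y) \geq \gamma \cdot 2^{2n}$ from Lemma~\ref{lemma:high_gowers_implies_linearity_of_pPsi}, then threshold out the three regions where one of $X, Y, X+Y$ has small weight — but you carry it out directly on the level set rather than inside an expectation over random subsets. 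Your bookkeeping is right: the subtraction leaves $(\gamma/4)2^{2n}$ of mass on in-$S$ triples, each summand is at most $1$, so the triple count is at least $(\gamma/4)2^{2n}$, and dividing by $|S|^2 \leq (16/\gamma^2)\,4^n$ gives $L(S) \geq \gamma^3/64$. Note this is actually \emph{stronger} than the paper's $\Omega(\gamma^5)$ bound, since the randomized construction there introduces an extra $(\gamma^2/16)^2$ loss factor when relating $L$ back to a weighted triple sum. Your lower bound on $|S|$, namely $\sqrt{\gamma}/2 \cdot 2^n$, is likewise stronger than the stated $\gamma^2/80 \cdot 2^n$. Both of your improvements are strict for $\gamma < 1$ and compatible with everything downstream (the BSG step in Section~4.2 only needs $|S| \geq \poly(\gamma)2^n$ and $L(S) \geq \poly(\gamma)$), so the deterministic choice of $S$ works end to end.
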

\begin{proof}
To prove this theorem, we first  define an intermediate set $X$ as follows:
\begin{equation}
    X = \{ (x,\alpha) : x, \alpha \in \mathbb{F}_2^{n} \, \text{ and } M(x,\alpha) \geq \gamma/4 \},
\end{equation}
where $M(x,\alpha)=2^np_\psi(x,\alpha)=|\langle \psi|W_{x,\alpha}|\psi\rangle|^2$.  By Lemma~\ref{lemma:relation_expectation_paulis_qPsi_and_pPsi} we have that
$$
\gowers{\ket{\psi}}{3}^8=\Exp \limits_{(x,\alpha) \sim p_\Psi}\left[|\la \psi | W_{x,\alpha} | \psi \ra|^2 \right] \geq \sqrt{\gamma}.
$$
Using Fact~\ref{fact:lowerboundexpectation} we have that
\begin{align*}
    \Exp \limits_{(x,\alpha) \sim p_\Psi}\left[|\la \psi | W_{x,\alpha} | \psi \ra|^2 \right] \geq \sqrt{\gamma}
    \implies 
    \Pr \limits_{(x,\alpha) \sim p_\Psi}\left[|\la \psi | W_{x,\alpha} | \psi \ra|^2 \geq \gamma/4 \right] \geq \gamma/2.
\end{align*}
We also have that
\begin{align*}
\gamma/2\leq \Pr \limits_{(x,\alpha) \sim p_\Psi}\left[|\la \psi | W_{x,\alpha} | \psi \ra|^2 \geq \gamma/4 \right] &= \sum \limits_{(x,\alpha) \in \01^{2n}}p_\psi(x,\alpha)\left[|\la \psi | W_{x,\alpha} | \psi \ra|^2 \geq \gamma/4 \right]\\ &\leq   2^{-n}\sum \limits_{(x,\alpha) \in \01^{2n}}\left[|\la \psi | W_{x,\alpha} | \psi \ra|^2 \geq \gamma/4 \right]\\
&= 2^{-n}\sum \limits_{(x,\alpha) \in \01^{2n}}\left[M(x,\alpha) \geq \gamma/4 \right]\\
&=2^{-n} \cdot |X|,
\end{align*}
where we used that $p_\psi(x,\alpha)\leq 2^{-n}$. Hence  the size of $X$ is at least $\frac{\gamma}{2} 2^n$. Now define a random set $X'\subseteq X$ (which we will eventually call $S$ in the theorem statement) as follows: $(\star)$ for every $(x,\alpha)\in X$, include $(x,\alpha)\in X'$ with probability $2^n p_\psi(x,\alpha)=M(x,\alpha)$ or discard it otherwise.\footnote{Note that $2^np_\psi(x,\alpha)\leq 1$ by definition.}  Also note that for every $\ket{\psi}$, we have that $\langle \psi|X^{0^n}Z^{0^n}|\psi\rangle=\langle \psi|\id|\psi\rangle=1$, so $M(0^{2n})=1$ and the $0^{2n}$ element is always included in $X'$ (proving item $3$ in the theorem statement). We now make three observations about this set $X'$.
\begin{enumerate}
    \item First the expected size of $|X'|$ is large, i.e.,
$$
\Exp[|X'|]=\sum_{(x,\alpha)\in X}2^n p_\psi(x,\alpha)=\sum_{(x,\alpha)\in X}M(x,\alpha)\geq \gamma/4\cdot |X|\geq \gamma^2/8 \cdot 2^n,
$$
where we used $M(x,\alpha)\geq \gamma/4$ for all $(x,\alpha)\in X$ and the lower bound of $|X|\geq \gamma/2\cdot 2^n$ that we showed above.
\item Second, by the randomized choice of the set $X'$ in $X$, the size of $|X'|$ is well concentrated around its mean $\mu=\Exp[|X'|]\geq \gamma^2/8\cdot 2^n$: By  Chernoff bound in Lemma~\ref{lem:chernoff} we have that
\begin{align}
\label{eq:concentrationofX'}
    \Pr\left[|X'|\leq \gamma^22^n/80\right]\leq \Pr\left[|X'|\leq \mu/10\right]
    &=\Pr\left[\sum_{(x,\alpha)\in X}[(x,\alpha) \in X']\leq \mu/10\right]\\
    &\leq \exp(-0.9^2\mu/2)\leq \exp(-0.4\gamma^2 2^n),
\end{align}
where the final inequality used that $\mu\geq \gamma^2/8\cdot 2^n$.
\item Third, the size of $X'$ isnt too big, i.e., 
  \begin{align}
  \label{eq:sizeofX'isnttoobig}
         |X'|\leq |X|=\sum_{x,\alpha}[M(x,\alpha)\geq \gamma/4]
=\sum_{x,\alpha}[p_\psi(x,\alpha)\geq 2^{-n}\cdot \gamma/4]\leq 2^n\cdot 4/\gamma,
        \end{align}
        where we used that  $\sum_{x,\alpha}p_\psi(x,\alpha)=1$.
\end{enumerate}
For every $Z\subseteq \01^{2n}$ define
\begin{equation}
    L(Z) = \Exp_{(x,\alpha), (y,\beta) \in Z} \left[ (x+y,\alpha+\beta) \in Z\right],
\end{equation}
where the probability is uniform over elements in $Z$. Note that for every $Z$, we have the following
  \begin{align}
        &\Exp_{(x,\alpha), (y,\beta) \sim p_\psi} [ (x+y,\alpha+\beta) \in Z]\cdot  [(x,\alpha) \in Z, (y,\beta) \in Z] \\
        &=\sum_{(x,\alpha), (y,\beta) }p_\psi(x,\alpha)p_\psi(y,\beta) [ (x+y,\alpha+\beta) \in Z]\cdot  [(x,\alpha) \in Z, (y,\beta) \in Z]\\
        &=\sum_{(x,\alpha), (y,\beta) \in Z }p_\psi(x,\alpha)p_\psi(y,\beta) [ (x+y,\alpha+\beta) \in Z]\\
         &\leq 2^{-2n}\cdot |Z|^2\cdot \Exp_{(x,\alpha), (y,\beta) \in Z}[ (x+y,\alpha+\beta) \in Z],
      \label{eq:understandingLlowerbound}
         \end{align}
        where we used $p_\psi(x,\alpha)\leq 2^{-n}$.
Next we lower bound $\Exp_{X'}[L(X')]$, for an $X'$ sampled from the distribution defined in $(\star)$, as follows:
\begin{align}
\Exp_{X'}[L(X')]&= \Exp_{X'}\Big[\Exp_{(x,\alpha), (y,\beta) \in X'}[ (x+y,\alpha+\beta) \in X']\Big]\\
&\geq2^{2n}\cdot \Exp_{X'} \left[|X'|^{-2}\cdot \mathop{\Exp}_{(x,\alpha), (y,\beta) \sim p_\psi} [ (x+y,\alpha+\beta) \in X']\cdot  [(x,\alpha) \in X', (y,\beta) \in X']\right] \\
&\geq \gamma^2/16\cdot \Exp_{X'} \left[\mathop{\Exp}_{(x,\alpha), (y,\beta) \sim p_\psi} [ (x+y,\alpha+\beta) \in X']\cdot  [(x,\alpha) \in X', (y,\beta) \in X']\right] \\
&= \gamma^2/16\cdot \Exp_{X'}\left[\sum_{(x,\alpha), (y,\beta)}p_\psi(x,\alpha)p_\psi(y,\beta)\cdot [ (x+y,\alpha+\beta),(x,\alpha),(y,\beta) \in X']\right]\\
&=\gamma^2/16\cdot \sum_{(x,\alpha), (y,\beta)}p_\psi(x,\alpha)p_\psi(y,\beta)\Pr_{X'}\left[ (x+y,\alpha+\beta),(x,\alpha),(y,\beta) \in X'\right]\\
&=\gamma^2/16\cdot \sum_{(x,\alpha), (y,\beta)}p_\psi(x,\alpha)p_\psi(y,\beta)M(x+y,\alpha+\beta)M(x,\alpha)M(y,\beta),
\end{align}
where the first inequality used Eq.~\eqref{eq:understandingLlowerbound}, second inequality used Eq.~\eqref{eq:sizeofX'isnttoobig}.    
It now remains to show that $\Exp_{X'} [L(X')] \geq \delta$, which we show indirectly below. Given that $\gowers{\ket{\psi}}{3}^8 \geq \sqrt{\gamma}$, Lemma~\ref{lemma:high_gowers_implies_linearity_of_pPsi} implies the following starting point from which we proceed 
\begin{align}
     \gamma &\leq 2^{3n} \Exp_{x,y \in \mathbb{F}_2^n} \left[ \sum_{\alpha,\beta \in \mathbb{F}_2^n} p_\Psi(x,\alpha) p_\Psi(y,\beta) p_\Psi(x+y,\alpha+\beta) \right] \label{eq:looseness_theorem}\\
     & = \sum_{(x,\alpha), (y,\beta) \in \01^{2n}} p_\Psi(x,\alpha) p_\Psi(y,\beta) M(x+y,\alpha+\beta)\\
    & = \sum_{(x,\alpha), (y,\beta) \in X} p_\Psi(x,\alpha) p_\Psi(y,\beta) M(x+y,\alpha+\beta) \nonumber\\
    &\qquad \qquad + 2^n \sum_{(x,\alpha) \text{ or } (y,\beta) \notin X} p_\Psi(x,\alpha) p_\Psi(y,\beta) p_\Psi(x+y,\alpha+\beta) \\
    & \leq  (16/\gamma^2)\cdot \sum_{(x,\alpha), (y,\beta) \in X} p_\Psi(x,\alpha) p_\Psi(y,\beta) M(x+y,\alpha+\beta)M(x,\alpha) M(y,\beta) \nonumber \\
    &\qquad \qquad + 2^n \sum_{(x,\alpha) \text{ or } (y,\beta) \notin X} p_\Psi(x,\alpha) p_\Psi(y,\beta) p_\Psi(x+y,\alpha+\beta) \\
    &\leq (16/\gamma^2)^2 \Exp_{X'}[L(X')] + 2^n \sum_{(x,\alpha) \text{ or } (y,\beta) \notin X} p_\Psi(x,\alpha) p_\Psi(y,\beta) p_\Psi(x+y,\alpha+\beta),
    \label{eq:sum_expectations_Lphi},
\end{align}
where the second inequality in line four used that for every $(x,\alpha)\in X$, we have that $M(x,\alpha)\geq \gamma/4$. 
We now bound the second term in the expression above. To this end, we first bound the sum over $(x,\alpha) \notin X$ i.e, $(x,\alpha) \in \mathbb{F}_2^{2n}$ for which $M(x,\alpha) < \gamma/4$.
\begin{align*}
    & 2^n \sum_{(x,\alpha) \notin X, (y,\beta) \in \mathbb{F}_2^{2n}} p_\Psi(x,\alpha) p_\Psi(y,\beta) p_\Psi(x+y,\alpha+\beta) \\
    & = \sum_{\substack{(x,\alpha) \in \mathbb{F}_2^n: \\ M(x,\alpha)<\gamma/4}} \sum_{(y,\beta) \in \mathbb{F}_2^{2n}} M(x,\alpha) p_\Psi(y,\beta) p_\Psi(x+y,\alpha+\beta) \\
    &\leq \gamma/4\cdot \sum_{(x,\alpha) \in \mathbb{F}_2^{2n}} \sum_{(y,\beta) \in \mathbb{F}_2^{2n}} p_\Psi(y,\beta) p_\Psi(x+y,\alpha+\beta)\\
&=\gamma/4\cdot\sum_{(y,\beta) \in \mathbb{F}_2^{2n}} p_\Psi(y,\beta) \sum_{(x,\alpha) \in \mathbb{F}_2^{2n}}  p_\Psi(x+y,\alpha+\beta)\\
&=\gamma/4\cdot\sum_{(y,\beta) \in \mathbb{F}_2^{2n}} p_\Psi(y,\beta) \sum_{(x',\alpha') \in \mathbb{F}_2^{2n}}  p_\Psi(x',\alpha')\\
&= \gamma/4,  
\end{align*}
where we used $\sum_{x,\alpha}p_\Psi(x,\alpha)=1$ above. 
We can similarly bound the sum over $(x,\alpha),(y,\beta)$ for which $M(y,\beta) < \gamma/4$ or $M(x+y,\alpha+\beta) < \gamma/4$. Combining with Eq.~\eqref{eq:sum_expectations_Lphi}, we thus have
\begin{align}
(16/\gamma^2)^2\Exp_{X'}\left[ L(X') \right] & \geq \gamma -3 \frac{\gamma}{4} \geq {\gamma}/4.
\end{align}
So we have that $\Exp_{X'}[L(X')]\geq \Omega(\gamma^5)$ and $\Exp_{X'}[|X'|/2^n]\geq \gamma^2/8$. It remains to argue that $\Pr_{X'}\Big[[|X'|/2^n \geq \gamma^2/8] \cap [L(X')\geq \Omega(\gamma^5)]\Big]>0$, which would imply the existence of an $X'$ satisfying the lemma statement. To this end observe that 
\begin{align}
    \gamma^5&\leq \Exp_{X'} [L(X')]\\
    &=\Exp[L(X')\Big||X'|/2^n\geq \gamma^2/80] \cdot \Pr[|X'|/2^n\geq \gamma^2/80] \\
    &\quad +\Exp[L(X')\Big||X'|/2^n\leq \gamma^2/80]\cdot \Pr[|X'|/2^n\leq \gamma^2/80]\cdot \\
    &\leq  \Exp[L(X')\Big||X'|/2^n\geq \gamma^2/80]+\Pr[|X'|/2^n\leq \gamma^2/80]\\
    &\leq  \Exp[L(X')\Big||X'|/2^n\geq \gamma^2/80]+\exp(-0.4\gamma^22^n),
\end{align}
where the first inequality used that both the probabilities and expectations are bounded by $1$
and the second inequality used Eq.~\eqref{eq:concentrationofX'}. The above inequalities implies that $\Exp[L(X')\Big||X'|/2^n\geq \gamma^2/80]\geq \gamma^5/10$. This now implies that
\begin{align*}
&\Pr_{X'}\Big[[|X'|/2^n \geq \gamma^2/80] \cap [L(X')\geq \Omega(\gamma^5)]\Big]\\
&=\Pr_{X'}[|X'|/2^n \geq \gamma^2/80]\cdot \Pr_{X'}\Big[ L(X')\geq \Omega(\gamma^5)    \Big\vert |X'|/2^n \geq \gamma^2/80\Big]\geq \poly(\gamma),
\end{align*}
hence there exists an $X'$ (which we will call $S$) satisfying the theorem statement. 
\end{proof}

\paragraph{Result for high expectation of Paulis from Weyl distribution.} Similarly, we can show the following theorem given that $\Exp \limits_{x \sim q_\Psi}\left[|\la \psi | W_x | \psi \ra|^2 \right]$ is high.
\begin{theorem}
\label{thm:large_set_gamma_qPsi_expectations}
Let $\gamma >0$. If $\Exp \limits_{x \sim q_\Psi}\left[|\la \psi | W_x | \psi \ra|^2 \right] \geq \gamma$, then there exists $S \subseteq \mathbb{F}_2^{2n}$  satisfying
\begin{enumerate}[(i)]
    \item $|S|\in [\gamma^2/80 , 4/\gamma ]\cdot 2^n$
    \item $S \subseteq \{(x,\alpha) : |\la \psi | X^x Z^{\alpha} | \psi \ra|^2 \geq \gamma/4\}$
    \item $\Pr_{(x,\alpha), (y,\beta) \in S}[ (x+y,\alpha+\beta) \in S]\geq \Omega(\gamma^5)$
    \item Additionally $0^{2n}$ lies in $S$.
\end{enumerate}
\end{theorem}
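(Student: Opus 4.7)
The plan is to mimic the proof of Theorem~\ref{thm:large_set_gamma} essentially verbatim, replacing the two inputs that relied on the Gowers norm hypothesis by their counterparts that follow from the Weyl-distribution hypothesis. Concretely, the proof of Theorem~\ref{thm:large_set_gamma} used the hypothesis $\gowers{\ket{\psi}}{3}^8 \geq \sqrt{\gamma}$ in only two places: $(a)$ to deduce via Lemma~\ref{lemma:relation_expectation_paulis_qPsi_and_pPsi} and Fact~\ref{fact:lowerboundexpectation} that the set $X = \{(x,\alpha):M(x,\alpha)\geq \gamma/4\}$ has size at least $(\gamma/2)\cdot 2^n$, and $(b)$ to invoke Lemma~\ref{lemma:high_gowers_implies_linearity_of_pPsi} for the key ``approximate linearity'' inequality (see Eq.~\eqref{eq:looseness_theorem}).

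First I would replace $(a)$. Since the hypothesis is now $\Exp_{x\sim q_\Psi}[|\la\psi|W_x|\psi\ra|^2]\geq \gamma$, the right inequality of Lemma~\ref{lemma:relation_expectation_paulis_qPsi_and_pPsi} immediately yields $\Exp_{x\sim p_\Psi}[|\la\psi|W_x|\psi\ra|^2]\geq \gamma$. Applying Fact~\ref{fact:lowerboundexpectation} to the $[0,1]$-valued random variable $M(x,\alpha)=|\la\psi|W_{x,\alpha}|\psi\ra|^2$ sampled from $p_\Psi$ with $\varepsilon=\gamma$ and $\delta=\gamma/4$ gives $\Pr_{p_\Psi}[M\geq \gamma/4]\geq (3\gamma/4)/(1-\gamma/4)\geq \gamma/2$, and the same argument as in Theorem~\ref{thm:large_set_gamma} (using $p_\Psi\leq 2^{-n}$) implies $|X|\geq (\gamma/2)\cdot 2^n$. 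Next I would replace $(b)$ by Corollary~\ref{corr:relation_expectation_p_Psi_given_weyl_condn}, which provides exactly the inequality $2^{3n}\Exp_{x,y}\sum_{\alpha,\beta}p_\Psi(x,\alpha)p_\Psi(y,\beta)p_\Psi(x+y,\alpha+\beta)\geq \gamma$ that initiates the chain in Eq.~\eqref{eq:looseness_theorem}.

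With these two substitutions in place, the rest of the proof is mechanical: define the random subset $X'\subseteq X$ by including each $(x,\alpha)\in X$ independently with probability $M(x,\alpha)$, noting $M(0^{2n})=1$ so that $0^{2n}\in X'$ deterministically (giving item~(iv)). The expected size bound $\Exp[|X'|]\geq (\gamma^2/8)\cdot 2^n$, the Chernoff concentration argument showing $\Pr[|X'|\leq \gamma^2 2^n/80]\leq \exp(-0.4\gamma^2 2^n)$, and the deterministic upper bound $|X'|\leq |X|\leq (4/\gamma)\cdot 2^n$ carry over unchanged, yielding item~(i) and item~(ii). For item~(iii), I reuse the computation bounding $\Exp_{X'}[L(X')]$ from below by $(\gamma^2/16)\sum p_\Psi p_\Psi M M M$, then bound the ``error'' terms in Eq.~\eqref{eq:sum_expectations_Lphi} by $\gamma/4$ each exactly as before, obtaining $\Exp_{X'}[L(X')]\geq \Omega(\gamma^5)$. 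A final conditioning argument (identical to the last display in the proof of Theorem~\ref{thm:large_set_gamma}) combines these to conclude that with positive probability both $|X'|\geq (\gamma^2/80)\cdot 2^n$ and $L(X')\geq \Omega(\gamma^5)$ hold simultaneously, so a valid $S$ exists.

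I do not expect any real obstacle: the only non-trivial observation is that the assumption $\Exp_{q_\Psi}[\cdots]\geq \gamma$ is actually \emph{stronger} than the hypothesis used in Theorem~\ref{thm:large_set_gamma} (which, after unwrapping, corresponds to $\Exp_{q_\Psi}[\cdots]\geq \gamma$ via $\gowers{\ket{\psi}}{3}^{16}\leq \Exp_{q_\Psi}[\cdots]$), so the conclusion is automatic. The most delicate spot to double-check is that the constants in items~(i) and~(iii) agree with those in Theorem~\ref{thm:large_set_gamma}; since Fact~\ref{fact:lowerboundexpectation} only gives a \emph{better} lower bound on $|X|$ in our setting (we could state $(3\gamma/4)\cdot 2^n$ instead of $(\gamma/2)\cdot 2^n$), retaining the weaker $\gamma/2$ bound keeps the downstream constants identical.
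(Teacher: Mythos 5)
Your proposal matches what the paper itself does: the paper merely remarks that Theorem~\ref{thm:large_set_gamma_qPsi_expectations} ``follows in the same exact manner as the proof of Theorem~\ref{thm:large_set_gamma},'' and the two substitutions you identify (deducing $\Exp_{p_\Psi}[|\la\psi|W_x|\psi\ra|^2]\geq\gamma$ from the hypothesis via Lemma~\ref{lemma:relation_expectation_paulis_qPsi_and_pPsi}, and seeding the chain at Eq.~\eqref{eq:looseness_theorem} with Corollary~\ref{corr:relation_expectation_p_Psi_given_weyl_condn}) are exactly the two places the hypothesis is used, with all downstream constants carrying over unchanged. Two small slips in your surrounding prose, neither of which affects the argument: the inequality $\Exp_{p_\Psi}[\cdots]\geq\Exp_{q_\Psi}[\cdots]$ you invoke is the \emph{first (left)} inequality of Lemma~\ref{lemma:relation_expectation_paulis_qPsi_and_pPsi}, not the right one; and your closing remark has the comparison reversed --- $\gowers{\ket{\psi}}{3}^8\geq\sqrt{\gamma}$ implies $\Exp_{q_\Psi}[\cdots]\geq\gowers{\ket{\psi}}{3}^{16}\geq\gamma$ but not conversely, so the hypothesis of Theorem~\ref{thm:large_set_gamma} is the \emph{stronger} one and Theorem~\ref{thm:large_set_gamma_qPsi_expectations} does not follow from it automatically, which is precisely why the rerun you carry out is needed (and is correct).
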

\emph{Remark.} The proof of Theorem~\ref{thm:large_set_gamma_qPsi_expectations} follows in the same exact manner as the proof of Theorem~\ref{thm:large_set_gamma}. The better dependence on $\gamma$ in the above statement is mainly due to the inequality in Eq.~\eqref{eq:looseness_theorem}. Application of Corollary~\ref{corr:relation_expectation_p_Psi_given_weyl_condn} here reduces the polynomial loss in $\gamma$ that is suffered when using Lemma~\ref{lemma:high_gowers_implies_linearity_of_pPsi} for Gowers-$3$ norm.

\paragraph{A corollary for mixed states.} We remark that one can prove an analogous statement as the theorem above also for mixed states. Let $\rho$ be an $n$-qubit quantum state and $\rho=\frac{1}{2^n}\sum_{x,\alpha}\alpha_{x,\alpha} W_{x,\alpha}$. Let $p_\rho(x,\alpha)=\alpha_{x,\alpha}^2/2^n$. It is not hard to see that \begin{equation}
    p_\rho(x,\alpha) = \frac{\Tr(\rho W_{x,\alpha})^2}{2^n}
\end{equation}
So we have that $\sum_{x,\alpha}p_\rho(x,\alpha)\leq 2^n$, and $p_\rho(x,\alpha)\leq 2^{-n}$.
\begin{theorem}
\label{thm:large_set_gamma_mixed_states}
Let $\gamma >0$. If 
$$
2^{3n} \Exp_{x,y \in \mathbb{F}_2^n} \left[ \sum_{\alpha,\beta \in \mathbb{F}_2^n} p_\rho(x,\alpha) p_\rho(y,\beta) p_\rho(x+y,\alpha+\beta) \right]  \geq \gamma,
$$ then there exists $S \subseteq \mathbb{F}_2^{2n}$ of size $|S|\in [\gamma^2/80 , 4/\gamma ]\cdot 2^n$
 satisfying
\begin{enumerate}[(i)]
    \item $S \subseteq  \{(x,\alpha) : |\Tr(\rho X^x Z^{\alpha} )|^2 \geq \gamma/4\}$
    \item $\Pr_{(x,\alpha), (y,\beta) \in S}[ (x+y,\alpha+\beta) \in S]\geq \Omega(\gamma^5)$
\end{enumerate}
\end{theorem}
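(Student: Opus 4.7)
The plan is to adapt the proof of Theorem \ref{thm:large_set_gamma} essentially verbatim to the mixed-state setting. The pure-state argument uses only three properties of $p_\psi$: the uniform upper bound $p_\psi(x,\alpha) \leq 2^{-n}$, the total-mass bound $\sum_{x,\alpha} p_\psi(x,\alpha) \leq 1$, and the triple-product inequality $2^n \sum p_\psi p_\psi p_\psi \geq \gamma$. All three continue to hold for $p_\rho$: the first two are the preliminary facts stated just before the theorem, and the third is exactly the hypothesis. The only real change is that, in the pure case, a lower bound on $|X|$ was obtained by combining the Gowers-$3$ hypothesis with Fact \ref{fact:lowerboundexpectation}, whereas here we need an extra Cauchy-Schwarz step applied to the triple product.

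Concretely, I would define $M(x,\alpha) = |\Tr(\rho W_{x,\alpha})|^2 = 2^n p_\rho(x,\alpha)$, set $X = \{(x,\alpha) : M(x,\alpha) \geq \gamma/4\}$, and construct a random subset $S \subseteq X$ by keeping each element of $X$ independently with probability $M(x,\alpha)$. The upper bound $|S| \leq |X| \leq 4 \cdot 2^n/\gamma$ follows from Markov applied to $\sum p_\rho \leq 1$. For the lower bound, note that Cauchy-Schwarz gives $\sum_{y,\beta} p_\rho(y,\beta) p_\rho(x+y,\alpha+\beta) \leq \sum p_\rho^2$ for every $(x,\alpha)$, so the triple product is at most $(\sum p_\rho)(\sum p_\rho^2) \leq \sum p_\rho^2$, yielding $\sum p_\rho^2 \geq \gamma/2^n$. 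Splitting this sum at the threshold and using $p_\rho \leq 2^{-n}$ on the tail gives $|X| \geq \Omega(\gamma) \cdot 2^n$, hence $\Exp[|S|] \geq (\gamma/4)|X| = \Omega(\gamma^2) \cdot 2^n$, and a Chernoff bound delivers $|S| \geq \gamma^2/80 \cdot 2^n$ with probability exponentially close to $1$.

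For the additive-closure property, I would repeat the chain leading to Eq.~\eqref{eq:sum_expectations_Lphi}, starting now from the hypothesis itself rather than from Lemma \ref{lemma:high_gowers_implies_linearity_of_pPsi}. The $\gamma/4$ bound on each of the three bad contributions, corresponding to one of $(x,\alpha)$, $(y,\beta)$, or $(x+y,\alpha+\beta)$ lying outside $X$, telescopes using only $\sum p_\rho \leq 1$ (replacing the equality $=1$ by an inequality costs nothing), so it transfers unchanged. Combined with $|S| \leq 4 \cdot 2^n/\gamma$, this yields $\Exp_S[L(S)] \geq \Omega(\gamma^5)$, and a joint conditioning argument, identical to the pure-state proof, produces a realization of $S$ satisfying both the size and $L$-value conditions simultaneously.

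The only genuinely new ingredient is the Cauchy-Schwarz step extracting $\sum p_\rho^2 \geq \gamma/2^n$ from the triple-product hypothesis, which substitutes for the direct Gowers-norm bound available in the pure case. Everything else transfers mechanically, with any constant factors absorbed into the $\Omega(\cdot)$ notation.
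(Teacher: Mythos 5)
Your proposal is correct, and it fills in exactly what the paper leaves implicit: the paper gives no proof of Theorem~\ref{thm:large_set_gamma_mixed_states}, simply asserting it is ``exactly the same'' as the pure-state argument. Your identification of the one genuinely new step --- extracting $\sum p_\rho^2 \geq \gamma/2^n$ from the triple-product hypothesis via Cauchy--Schwarz, as a substitute for the pure-state derivation that uses $\gowers{\ket{\psi}}{3}^8 \geq \sqrt{\gamma}$ together with Fact~\ref{fact:lowerboundexpectation} --- is exactly right. The rest does transfer mechanically once one has $|X| = \Omega(\gamma)\, 2^n$: the upper bound on $|X|$, the Chernoff concentration for the random $S$, and the three $\gamma/4$ tail contributions all go through because every place the pure-state proof invokes $\sum_{x,\alpha} p_\psi(x,\alpha) = 1$, only the inequality $\leq 1$ is actually used, so replacing the equality costs nothing.

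One point worth flagging: the text just above Theorem~\ref{thm:large_set_gamma_mixed_states} states $\sum_{x,\alpha} p_\rho(x,\alpha) \leq 2^n$, which is far too weak for the argument (it would only yield $|X| \leq 4^n \cdot 4/\gamma$). Your proof relies on the correct bound $\sum_{x,\alpha} p_\rho(x,\alpha) = \Tr(\rho^2) \leq 1$, which is what one needs; the $\leq 2^n$ in the paper appears to be a typo for $\leq 1$. Also, a minor imprecision in your sketch: in the threshold-splitting step you write ``using $p_\rho \leq 2^{-n}$ on the tail,'' but the bound on the tail actually uses $p_\rho < \gamma/(4\cdot 2^n)$ together with $\sum p_\rho \leq 1$, while $p_\rho \leq 2^{-n}$ is used on $X$ itself. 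The resulting estimate $|X| \geq (3\gamma/4)\, 2^n$ and the conclusion are nevertheless correct.
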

We do not prove the theorem here since it is exactly the same proof as the proof of Theorem~\ref{thm:inversegowersstates}. 
\subsection{Proving item \texorpdfstring{$2$}{2}: Using results from additive combinatorics}\label{sec:item2}
In this section, we will proceed assuming that we are given that Gowers-$3$ norm is high and thus one can apply Theorem~\ref{thm:large_set_gamma} to obtain the set $S$ in its theorem statement. We could have proceeded in a similar manner given high $\Exp \limits_{x \sim q_\Psi}\left[|\la \psi | W_x | \psi \ra|^2 \right]$. In particular, note that the outputs of both Theorem~\ref{thm:large_set_gamma} and Theorem~\ref{thm:inversegowersstates_weyl_expectations} is a set $S$ which is approximately a group in $\FF_2^{2n}$. We now show that if $S$ is approximately a subgroup, then it can be covered by several translations of a single subgroup. To this end, we first require results in additive combinatorics namely the Balog-Szemeredi-Gowers (BSG) theorem~\cite{balog1994statistical,gowers2001new}.

\begin{theorem}[BSG Theorem]\label{thm:bsg}
    Let $G$ be an Abelian group and $S \subseteq G$. If 
    \begin{equation}
    \label{eq:requiredpromiseofS}
        \Pr_{s,s' \in S}[s + s' \in S] \geq \varepsilon,
    \end{equation}
    then there exists $S' \subseteq S$ of size $|S'| \geq (\varepsilon/3) \cdot |S|$ such that $|S' + S'| \leq (6/\varepsilon)^8 \cdot |S|$.
\end{theorem}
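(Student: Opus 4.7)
The plan is to convert the ``popular sum'' hypothesis into a statement about additive energy, run a standard Balog--Szemer\'edi--Gowers (BSG) graph cleanup, and finish with a Ruzsa-triangle / path-counting step. None of these steps uses any torsion property of $\mathbb{F}_2^n$, so the proof transfers verbatim to an arbitrary abelian group $G$; in particular I will never have to identify $s$ with $-s$.

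First I would reformulate the hypothesis graph-theoretically and extract a ``well-connected'' subset. Consider the bipartite graph $H$ on two copies of $S$ whose edges are pairs $(s_1, s_2)$ with $s_1 + s_2 \in S$, so $|E(H)| \geq \varepsilon |S|^2$. A standard popularity reduction removes every vertex of degree below $\varepsilon |S|/4$, destroys at most half the edges, and leaves a subset $S_0 \subseteq S$ with $|S_0| \geq \varepsilon|S|/2$ all of whose vertices have degree $\Omega(\varepsilon|S|)$. Equivalently, by Cauchy--Schwarz the hypothesis yields $\sum_t r_{S+S}(t)^2 \geq \varepsilon^2 |S|^3$, i.e.\ additive energy $\Omega(\varepsilon^2)|S|^3$, placing us squarely in the BSG regime.

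Second I would apply the BSG graph lemma to $H$. This is the purely combinatorial statement that from a bipartite graph with edge density $\varepsilon$ one can extract vertex sets $S'_L, S'_R \subseteq S_0$ of sizes $\Omega(\varepsilon)|S|$ such that every pair $(s_L, s_R) \in S'_L \times S'_R$ is joined by at least $\Omega(\varepsilon^{O(1)})|S|^2$ walks of length $3$ in $H$. Its proof proceeds by sampling a random neighborhood and applying Markov's inequality; the group appears only through the edge relation, so the proof runs in any abelian host. A pigeonhole pairing then symmetrizes the output to a single subset $S' \subseteq S_0$, still of size $\Omega(\varepsilon)|S|$, inheriting the path-richness property for every pair $(s, s') \in S' \times S'$.

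Third I would convert the abundance of length-$3$ walks into a sumset bound. For any walk $s \to u \to v \to s'$ in $H$ we have $x := s+u \in S$, $y := u+v \in S$, $z := v+s' \in S$, and the abelian identity
\begin{equation*}
s + s' \;=\; (s+u) - (u+v) + (v+s') \;=\; x - y + z
\end{equation*}
uses only commutativity and associativity of $+$ (no characteristic-$2$ cancellation, no $s=-s$). Thus every element of $S' + S'$ acquires $\Omega(\varepsilon^{O(1)})|S|^2$ representations inside $S - S + S$, and a double-count against the trivial bound $\sum_t r_{S-S+S}(t) = |S|^3$ gives $|S' + S'| \leq \poly(1/\varepsilon)\cdot |S|$. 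Carefully tracking the losses through the popularity step, the BSG graph lemma, and this double-count tightens the exponents to the advertised $|S'| \geq (\varepsilon/3)|S|$ and $|S' + S'| \leq (6/\varepsilon)^8 |S|$.

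The main obstacle is quantitative rather than conceptual: the structural scaffolding is robust, but squeezing out exactly the constants $\varepsilon/3$ and $(6/\varepsilon)^8$ requires careful tuning of the popularity thresholds in Step~1 and of the sampling probabilities in Step~2. Crucially, at every stage the argument treats $G$ only as an abelian group---the identity $s+s' = x - y + z$ is meaningful in any such $G$---so the statement holds for general abelian groups and not merely for groups of exponent $2$.
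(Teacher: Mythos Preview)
The paper does not actually prove this theorem. Theorem~\ref{thm:bsg} is stated as a known result from the literature, cited to Balog--Szemer\'edi~\cite{balog1994statistical} and Gowers~\cite{gowers2001new}, and is then immediately applied as a black box to the set $S$ produced by Theorem~\ref{thm:large_set_gamma}. There is therefore no ``paper's own proof'' to compare your proposal against.

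That said, your sketch is a faithful outline of the standard modern proof of BSG (as in, e.g., Tao--Vu or Gowers's original argument): reinterpret the hypothesis as density of a bipartite Cayley-type graph, run a popularity/dependent-random-choice cleanup to obtain a subset where all pairs are connected by many length-$3$ paths, and then double-count representations in $S-S+S$ to bound $|S'+S'|$. Your observation that the identity $s+s' = (s+u)-(u+v)+(v+s')$ uses only abelian group axioms is correct and is exactly why the statement holds beyond $\mathbb{F}_2^n$. The one place where your writeup is genuinely only a sketch is the extraction of the precise constants $(\varepsilon/3)$ and $(6/\varepsilon)^8$: these come from a specific tuning of thresholds in the graph lemma, and you have correctly flagged this as the quantitative bottleneck rather than a structural one.
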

Note that for our purposes, $G = \FF_2^{2n}$ (or $\mathbb{F}_2^{2n}$) and is Abelian with respect to the operation of addition on $\FF_2$. Applying  Theorem~\ref{thm:bsg} on the set $S$ defined in Theorem~\ref{thm:large_set_gamma} (which satisfies Eq.~\eqref{eq:requiredpromiseofS} for $\varepsilon=\Omega(\gamma^5)$), there exists a set $S' \subseteq S$ and universal constants $c,c'>0$ such that  $|S'| \geq c\cdot  \gamma^{5} |S|$  and  $|S' + S'| \leq c' \gamma^{-40} |S|$. This implies that $|S' + S'| \leq c\cdot  c'\cdot  \gamma^{-35} |S'|$. So far, we have not established any other property of $S'$ except for that it is a subset of $S$ and satisfies $|S'+S'| \approx |S'|$.

\begin{remark}
\label{remark:nac}
Note that the $S$ by Theorem~\ref{thm:inversegowersstates} contains $\id$. If $S'$ does not contain $\id$, one can add $\id$ to $S'$. The new set $S'\cup \{\id\}$ has doubling constant at most $(6/\varepsilon^8)+2\varepsilon/3$ (for simplicity, below we ignore this additive $2\varepsilon/3$) and still satisfies $|S'\cap\{\id\}|\geq (\varepsilon/3) |S|$.
\end{remark}

To show that the set $S'$ identified above can be covered by translations of a group, we will use the recent breakthrough result of Gowers, Green, Manners and Tao~\cite{gowers2023conjecture,gowers2024marton} which resolved the polynomial Freiman–Ruzsa (also called the Marton's conjecture).
\begin{theorem}[{\cite[Theorem~1.1]{gowers2024marton}}]
\label{thm:marton_conjecture}
Suppose that $A \subseteq\mathbb{F}_2^{2n}$ is a set with $|A + A| \leq K |A|$. Then $A$ is covered by at most $2^8 K^{8}$ cosets of some subgroup $V \subset \textsf{span}(A)$ of size at most $|A|$.\footnote{We remark that the original statement from \cite{gowers2023conjecture} required $2K^{11}$ cosets but this was subsequently improved to $2K^9$ by Jyun-Jie Liao~\cite{liao2024improved} and then to $2^8 K^8$ by Gowers, Green, Manners and Tao~\cite{gowers2024marton}.}
\end{theorem}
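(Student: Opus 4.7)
The statement to prove is the polynomial Freiman--Ruzsa (Marton) conjecture over $\mathbb{F}_2^{2n}$, which is a deep recent result; the authors cite it rather than reprove it, so what follows is a sketch of the Gowers--Green--Manners--Tao approach that I would pursue, rather than an original plan. The key idea is to move from combinatorial set theory to information theory. Let $X$ be a random variable uniform on $A$. The hypothesis $|A+A|\leq K|A|$ translates, after a short calculation, into the entropic statement that two independent copies $X,X'$ of $X$ satisfy $H(X+X') \leq H(X) + \log K$. Define the entropic Ruzsa distance $d(X;Y) := H(X'+Y') - \tfrac12 H(X) - \tfrac12 H(Y)$ where $X',Y'$ are independent copies; the problem becomes: show that if $d(X;X)$ is small, then $X$ is close in distance to the uniform distribution on a coset of some subgroup $V$ with $|V| \lesssim |A|$.

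The first step is to develop the basic calculus of $d$: entropic Ruzsa triangle inequality $d(X;Z) \leq d(X;Y) + d(Y;Z)$, entropic Pl\"unnecke--Ruzsa bounds, and the identity $d(X;Y) \geq 0$ with equality iff $X,Y$ are uniform on cosets of the same subgroup. These follow from Shannon's submodularity and are direct translations of their combinatorial counterparts. This reduces the goal to: if $d(X;X) \leq \log K$, then there exists a subgroup $V$ with $d(X;U_V) \leq C \log K$ for some absolute constant $C$, where $U_V$ is uniform on $V$.

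The heart of the argument is a \emph{fibering / compression} step, and this is the main obstacle. The idea is that if $(X,Y)$ is a pair minimizing $d(X;Y)$ among all pairs in a certain class, then one can exploit a clever sum identity of the form
\begin{equation*}
d(X_1+X_2;Y_1+Y_2) + d(X_1+Y_2;X_2+Y_1) \leq 2 d(X;Y) + \text{entropy corrections},
\end{equation*}
where $X_i,Y_i$ are appropriate independent copies, to find a new pair with strictly smaller distance unless the distribution is already close to uniform on a coset. Iterating (and carefully controlling the entropy decrement) drives the process to a minimizer that must be essentially uniform on a subgroup $V$. The technical challenge, and where the genuinely new ideas of GGMT enter, is showing that each iteration decreases distance by a quantitative amount and that the limiting object is supported on a single coset of a subgroup $V \subseteq \operatorname{span}(A)$ with $|V| \leq |A|$.

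Finally, once we have $d(X;U_V) \leq C\log K$ for an explicit subgroup $V$, a standard Pinsker-type / covering argument converts this entropic proximity into a combinatorial covering: $A$ is contained in a union of at most $\operatorname{poly}(K)$ cosets of $V$. The original GGMT argument gives the bound $2K^{11}$, subsequently improved to $2K^9$ by Liao and then to $2^8 K^8$ in the revised version cited here; these improvements come from sharpening the constants in the entropy inequalities and in the final translation from entropy to covering, rather than from changing the overall strategy.
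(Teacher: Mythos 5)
The paper does not prove this theorem; it is invoked as a black box, cited directly from Gowers--Green--Manners--Tao (with the constant history noted in the footnote), so there is no in-paper proof for your proposal to be compared against. You correctly recognize this, and your sketch is an accurate high-level summary of the actual GGMT entropic argument: the reformulation of doubling as an entropy inequality, the entropic Ruzsa distance $d(X;Y)$ and its calculus (triangle inequality, nonnegativity with equality iff both are coset-uniform), the fibering/compression step that drives a minimizer toward a subgroup-uniform distribution, and the final translation back to a covering by $\poly(K)$ cosets. Two small points worth flagging for precision: first, the entropic reformulation of the hypothesis is not literally $H(X+X') \leq H(X) + \log K$ for uniform $X$ on $A$ (that would follow from $|A+A|\le K|A|$ only if $X+X'$ were uniform on $A+A$, which it need not be); the correct starting point is $H(X+X') \le \log|A+A| \le H(X) + \log K$, which is what gives $d(X;X)\le\log K$. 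Second, the conclusion of the entropic PFR theorem gives a subgroup $V$ with $d(X;U_V) \le C\log K$ where $C$ is the exponent that ultimately appears in the covering bound ($C=11$ in the original, later $9$, then $8$); the claim that $V\subseteq\operatorname{span}(A)$ and $|V|\le|A|$ requires a short additional argument (translating $V$ appropriately and noting that $d(X;U_V)$ small forces $H(U_V)\le H(X)+O(\log K)$, then passing to a subgroup of $V$ if necessary). These are standard but nontrivial bookkeeping steps that your sketch elides; since you are explicit that this is a sketch of a cited deep result rather than a self-contained proof, this is appropriate.
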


Applying this theorem to the set $S'$ allows us to conclude that there exists a coset of $V\subseteq \mathbb{F}_2^{2n}$ with size $|V| \leq |S'|$, which we will denote by $z + V$ such that
\begin{equation}
    \left| S' \cap (z+V) \right| \geq \eta |S'| \geq \eta \cdot c\cdot \gamma^5|S|\geq \poly(\gamma) 2^n,
    \label{eq:intersection_S_coset_V}
\end{equation}
where $\eta = 1/(2^8 K^{8}) = C \cdot \gamma^{280}$ (for an absolute constant $C > 0$) and which we will denote simply as $\eta = C \poly(\gamma)$. We now argue that this implies $\Exp_{y \in V}[2^n p_\Psi(y)]$ is high. We do this by showing that $\left| S' \cap (z+V) \right|$ being high implies that the following expectation $\Exp_{y \in z + V}[2^n p_\Psi(y)]$ is high, which in turn is upper bounded by $\Exp_{y \in V}[2^n p_\Psi(y)]$. To this end, we first observe the following.
\begin{corollary}\label{corr:expectation_coset}
$\Exp_{y \in V}\left[ 2^n p_\Psi(y+z) \right] \geq \eta\cdot \gamma/4$.
\end{corollary}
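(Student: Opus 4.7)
The plan is to expand the expectation by partitioning the sum over $y \in V$ according to whether the shifted element $y+z$ lies in the set $S'$ identified by the Balog-Szemerédi-Gowers plus polynomial Freiman-Ruzsa pipeline, and to exploit two facts already established: each element of $S' \subseteq S$ carries pointwise mass $2^n p_\Psi(\cdot) \geq \gamma/4$, and the coset $z+V$ intersects $S'$ in at least $\eta |S'|$ points (Eq.~\eqref{eq:intersection_S_coset_V}).

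Concretely, I would start from
\begin{equation*}
\Exp_{y \in V}\!\left[2^n p_\Psi(y+z)\right] = \frac{1}{|V|} \sum_{y \in V} 2^n p_\Psi(y+z) \;\geq\; \frac{1}{|V|} \sum_{\substack{y \in V \\ y+z \in S'}} 2^n p_\Psi(y+z),
\end{equation*}
discarding the contribution from $y$ with $y+z \notin S'$ (this is where non-negativity of $p_\Psi$ is used). Then, since by item $(ii)$ of Theorem~\ref{thm:large_set_gamma} every $(x,\alpha) \in S' \subseteq S$ satisfies $2^n p_\Psi(x,\alpha) = |\langle \psi | W_{x,\alpha}|\psi\rangle|^2 \geq \gamma/4$, the restricted sum is at least $(\gamma/4) \cdot |\{y \in V : y+z \in S'\}| = (\gamma/4) \cdot |S' \cap (z+V)|$.

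Finally I would invoke the two size bounds from Section~\ref{sec:item2}: Eq.~\eqref{eq:intersection_S_coset_V} gives $|S' \cap (z+V)| \geq \eta |S'|$, and Theorem~\ref{thm:marton_conjecture} guarantees $|V| \leq |S'|$. Dividing,
\begin{equation*}
\frac{|S' \cap (z+V)|}{|V|} \;\geq\; \frac{\eta |S'|}{|V|} \;\geq\; \eta,
\end{equation*}
so $\Exp_{y \in V}[2^n p_\Psi(y+z)] \geq \eta \cdot \gamma/4$, as claimed.

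No step here looks like a genuine obstacle: the BSG plus PFR machinery has already done the hard work of producing the coset $z+V$ with large intersection with $S'$, and the pointwise mass lower bound on $S'$ is inherited directly from the defining property of $S$. The only mild subtlety is making sure the index $y \in V$ ranges so that $y+z$ ranges over the coset $z+V$ (so the coset intersection counts exactly the indices contributing to the sum), and that we bound $|V|$ by $|S'|$ rather than by $|S|$ or $2^n$ in order to get the clean ratio $\eta$ rather than a weaker $\eta |S'|/2^n$ bound.
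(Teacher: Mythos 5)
Your proof is correct and follows essentially the same route as the paper: both restrict the sum over $y \in V$ to those $y$ with $y+z \in S'$, apply the pointwise bound $2^n p_\Psi \geq \gamma/4$ on $S' \subseteq S$, and combine $|S' \cap (z+V)| \geq \eta|S'|$ with $|V| \leq |S'|$; the only cosmetic difference is that the paper invokes $|V| \leq |S'|$ at the first inequality rather than the last. (As a minor note, the paper's displayed chain has a $\gamma/2$ in one intermediate line that should read $\gamma/4$ to match Theorem~\ref{thm:large_set_gamma}(ii); your version uses the correct constant throughout.)
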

\begin{proof}
The proof follows from the following inequalities
\begin{align*}
    \Exp_{y \in V}\left[ 2^n p_\Psi(y+z) \right] = \frac{2^n}{|V|} \cdot \sum_{y \in V}  p_\Psi(y+z) &\geq \frac{1}{|S'|} \sum_{y \in (z+V)} 2^n p_\Psi(y) \\
    &\geq \frac{1}{|S'|} \sum_{y \in S' \cap (z+V)} 2^n p_\Psi(y) \\
    &\geq \frac{1}{|S'|} |S' \cap (z+V)| \cdot {\gamma}/{2}\\
    &\geq  \eta\cdot \gamma/4,
\end{align*}
where we used $|V| \leq |S'|$ in the first inequality and used that $2^n p_\Psi(y) \geq \gamma/4$ for all $ y \in S' \subseteq S$, which is true by construction of $S$ in Theorem~\ref{thm:inversegowersstates}.
\end{proof}

\begin{claim}\label{claim:symplectic_ineq}
    Suppose $V$ is a subgroup. For any arbitrary $\alpha \in \FF_2^{2n}$, any $z \in V$ and $z' \notin V$, we have
    \begin{equation*}
        \sum_{y \in V} (-1)^{[\alpha,y + z]} \geq \sum_{y \in V} (-1)^{[\alpha,y + z']},
    \end{equation*}
    where $[\cdot,\cdot]$ is the symplectic inner product as we defined in Definition~\eqref{eq:symplectic_inner_product}.
\end{claim}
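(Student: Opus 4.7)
My plan is to reduce the claim to the standard character-orthogonality dichotomy on the subgroup $V$, applied to the additive character $y \mapsto (-1)^{[\alpha,y]}$. The first step is to use the group structure of $V$ to rewrite each side as a character sum over a fixed set. Because $z \in V$, the map $y \mapsto y+z$ is a bijection from $V$ to itself, so the left-hand side equals $\sum_{u \in V}(-1)^{[\alpha,u]}$. On the right-hand side, as $y$ runs over $V$, $y+z'$ runs over the coset $z'+V$, and using bilinearity of $[\cdot,\cdot]$ in the second argument the sum becomes $(-1)^{[\alpha,z']}\sum_{u \in V}(-1)^{[\alpha,u]}$.

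Next I would split into two cases according to whether the linear functional $u \mapsto [\alpha,u]$ is identically zero on $V$ (equivalently, whether $\alpha$ lies in the symplectic annihilator of $V$). In the first case, every summand $(-1)^{[\alpha,u]}$ equals $1$, so $\sum_{u \in V}(-1)^{[\alpha,u]}=|V|$; the left-hand side is $|V|$ while the right-hand side is $\pm|V|$, so the desired inequality holds. In the second case, pick any $y^{\star}\in V$ with $[\alpha,y^{\star}]=1$; reindexing via the bijection $u \mapsto u+y^{\star}$ on $V$ shows that $\sum_{u \in V}(-1)^{[\alpha,u]}$ equals its own negation, hence is $0$, so both sides vanish and equality holds.

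There is no genuine obstacle here: the proof is essentially the standard orthogonality-of-characters dichotomy, once one uses that $V$ is a subgroup to replace the translate by a coset sum and that $z \in V$ to leave the $V$-sum invariant on the left. The only care needed is to state things in terms of the symplectic form $[\cdot,\cdot]$ rather than the usual inner product, which is harmless because $[\cdot,\cdot]$ is $\FF_2$-bilinear and the argument only uses bilinearity plus the existence, in the nontrivial case, of a witness $y^{\star}\in V$ on which the character is $-1$.
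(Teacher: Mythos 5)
Your proof is correct and uses essentially the same character-orthogonality dichotomy as the paper, just more cleanly: you factor out the sign $(-1)^{[\alpha,z']}$ directly via bilinearity and reindex the $V$-sum by $y \mapsto y+z$, whereas the paper unpacks the symplectic product into coordinatewise inner products and phrases the dichotomy via an indicator $\mathds{1}[(\alpha_2,\alpha_1)\in V^\perp]$. The two case splits (is $[\alpha,\cdot]$ identically zero on $V$?) are identical in content.
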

\begin{proof}
For an arbitrary $x \in \mathbb{F}_2^{2n}$, we will consider it as a concatenation of $x=(x_1,x_2)$ where $x_1$ is the string over the first $n$ bits of $x$, and $x_2$ is the string over the last $n$ bits of $x$. Let us simplify the left hand side of the expression in the above statement:
\begin{align}
 \Exp_{y \in V} \left[(-1)^{[\alpha,y + z]}\right] = \Exp_{y \in V} (-1)^{\la \alpha_1, y_2+z_2\ra + \la \alpha_2, y_1+z_1 \ra}
 &=\Exp_{y \in V} \left[(-1)^{ \la \alpha_1, y_2 \ra + \la \alpha_2, y_1 \ra }(-1)^{ \la \alpha_1, z_2 \ra + \la \alpha_2, z_1 \ra}\right]\\
 &=\Exp_{y \in V} \left[(-1)^{\la y, (\alpha_2, \alpha_1)\rangle} \right]  \cdot (-1)^{\la \alpha_1, z_2 \ra + \la \alpha_2, z_1 \ra}\\
 &=\mathds{1}[(\alpha_2,\alpha_1)\in V^{\perp}]\cdot  (-1)^{ \la \alpha_1, z_2 \ra + \la \alpha_2, z_1 \ra}\\
 &=\mathds{1}[(\alpha_2,\alpha_1)\in V^{\perp}]\cdot  (-1)^{\langle (z_1,z_2), (\alpha_2,\alpha_1)\rangle},
\end{align}
where we have denoted $V^{\perp} = \{w \in \mathbb{F}_2^{2n} : \la x, w \ra = 0 \,\forall x \in V\}$, and the inner product is over $\mathbb{F}_2$.
There are a few cases here: 
\begin{enumerate}
    \item If $(\alpha_2,\alpha_1) \notin V^\perp$, then the above expression both terms are $0$ and the claim follows.
    \item  If $(\alpha_2,\alpha_1) \in V^\perp$ and given $z \in V$, the above expression is $\Exp_{y \in V} \left[(-1)^{[\alpha,y + z]}\right] = 1$ since $\langle (z_1,z_2), (\alpha_2,\alpha_1)\rangle=0$. Since $z' \notin V$, we have two more cases:
    \begin{itemize}
        \item If $\la z', (\alpha_2,\alpha_1) \ra = 0$, then $\Exp_{y \in V} \left[(-1)^{[\alpha,y + z']}\right] = 1$.
        \item If $\la z', (\alpha_2,\alpha_1) \ra = 1$, then $\Exp_{y \in V} \left[(-1)^{[\alpha,y + z']}\right] = -1$.
    \end{itemize}
\end{enumerate}
Both these cases conclude the proof of the claim.
\end{proof}

\begin{claim}\label{claim:maximizing_expectation_coset}
For $z \in V$ and $z' \notin V$, we have
\begin{equation}
    \Exp_{y \in V}\left[ 2^n p_\Psi(y+z) \right] \geq \Exp_{y \in V}\left[ 2^n p_\Psi(y+z') \right]
\end{equation}
\end{claim}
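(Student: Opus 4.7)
The plan is to reduce this inequality to Claim~\ref{claim:symplectic_ineq} by expanding $p_\Psi$ via its symplectic Fourier series. Recall from Proposition~\ref{prop:symplectic_fourier_coeffs_pPsi} that the symplectic Fourier coefficients of the characteristic distribution $p_\Psi$ are themselves $\widebreve{p}_\Psi(\alpha) = p_\Psi(\alpha)/2^n$, which in particular are nonnegative. Applying the symplectic Fourier inversion formula, for any shift $u \in \FF_2^{2n}$ we have
\begin{equation*}
    2^n p_\Psi(y + u) = 2^n \sum_{\alpha \in \FF_2^{2n}} (-1)^{[\alpha, y+u]} \widebreve{p}_\Psi(\alpha) = \sum_{\alpha \in \FF_2^{2n}} p_\Psi(\alpha) \, (-1)^{[\alpha, y+u]}.
\end{equation*}

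Taking the expectation over $y \in V$ and swapping the finite sum with the expectation, I obtain
\begin{equation*}
    \Exp_{y \in V}\!\left[2^n p_\Psi(y+u)\right] = \sum_{\alpha \in \FF_2^{2n}} p_\Psi(\alpha) \cdot \Exp_{y \in V}\!\left[(-1)^{[\alpha, y+u]}\right].
\end{equation*}
The key point is that the coefficients $p_\Psi(\alpha)$ in front of the inner expectations are \emph{nonnegative} (they form a probability distribution), so the inequality between $u = z$ and $u = z'$ reduces to a termwise comparison of the inner expectations.

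Now I invoke Claim~\ref{claim:symplectic_ineq}: for every $\alpha \in \FF_2^{2n}$ and any $z \in V$, $z' \notin V$, the sum $\sum_{y \in V} (-1)^{[\alpha, y + z]}$ dominates $\sum_{y \in V} (-1)^{[\alpha, y+z']}$; dividing by $|V|$ gives the same inequality for the averages $\Exp_{y \in V}[(-1)^{[\alpha, y+u]}]$. Combining this termwise inequality with the nonnegative weights $p_\Psi(\alpha)$ yields
\begin{equation*}
    \Exp_{y \in V}\!\left[2^n p_\Psi(y + z)\right] = \sum_{\alpha} p_\Psi(\alpha) \Exp_{y \in V}\!\left[(-1)^{[\alpha, y+z]}\right] \geq \sum_{\alpha} p_\Psi(\alpha) \Exp_{y \in V}\!\left[(-1)^{[\alpha, y+z']}\right] = \Exp_{y \in V}\!\left[2^n p_\Psi(y + z')\right],
\end{equation*}
which is precisely the desired statement. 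There is no real obstacle here beyond setting up the right Fourier expansion; the proof is essentially a one-line consequence of Claim~\ref{claim:symplectic_ineq} once one recognizes that the magic ingredient is that $p_\Psi$ is its \emph{own} (rescaled) symplectic Fourier transform, so that the coefficients appearing in the expansion are automatically nonnegative and the termwise inequality from Claim~\ref{claim:symplectic_ineq} propagates to the weighted sum.
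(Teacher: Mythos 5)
Your proof is correct and takes essentially the same approach as the paper: both arguments hinge on the self-duality $\widebreve{p}_\Psi = p_\Psi/2^n$ (Proposition~\ref{prop:symplectic_fourier_coeffs_pPsi}), the nonnegativity of $p_\Psi$, and a termwise application of Claim~\ref{claim:symplectic_ineq}. The paper arrives at the identical intermediate expression $R_V(z) = 2^n\sum_{\alpha} p_\Psi(\alpha)\sum_{y\in V}(-1)^{[\alpha,y+z]}$ by first computing the symplectic Fourier transform $\widebreve{R}_V$ and then inverting, whereas you expand $p_\Psi(y+u)$ directly and average over $y\in V$ — a slightly shorter route to the same place, not a different one.
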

\begin{proof}
Define $R_V(z) = 2^n \sum_{y \in V} p_\Psi(y + z)$. In order to prove this claim, we will work with the symplectic Fourier transform.
\begin{align*}
    \widebreve{R}_V(\alpha) = 2^n \Exp_{z \in \FF_2^{2n}} \left[\sum_{y \in V} p_\Psi(y+z) (-1)^{[\alpha,z]} \right]
    &= 2^{3n} \Exp_{z \in \FF_2^{2n}} \left[\sum_{y \in V} \Exp_{u \in \FF_2^{2n}} \left[ (-1)^{[u,y+z]} \widebreve{p}_\Psi(u)\right] (-1)^{[\alpha,z]} \right] \\
    &= 2^{3n} \sum_{y \in V} \Exp_{u,z \in \FF_2^{2n}} \left[ \widebreve{p}_\Psi(u) (-1)^{[u,y+z]+[\alpha,z]} \right] \\
    &= 2^{3n} \sum_{y \in V} \Exp_{u \in \FF_2^{2n}} \left[ \widebreve{p}_\Psi(u) (-1)^{[u,y]} \Exp_{z \in \FF_2^{2n}} \left[ (-1)^{[u + \alpha,z]} \right] \right] \\
    &= 2^{3n} \sum_{y \in V} \Exp_{u \in \FF_2^{2n}} \left[ \widebreve{p}_\Psi(u) (-1)^{[u,y]} \cdot \mathds{1}[u=\alpha] \right] \\
    &= 2^n p_\Psi(\alpha) \sum_{y \in V} (-1)^{[\alpha,y]},
\end{align*}
where in the fourth equality, we have used the bilinearity of the symplectic product as $[u,y] + [u,z] = [u,y+z]$. In the last equality, we used $\widebreve{p}_\Psi(x) = p_\Psi(x)/2^n$ (Prop.~\ref{prop:symplectic_fourier_coeffs_pPsi}). Note that we can then write
\begin{equation}
R_V(z) = \sum_{\alpha \in \FF_2^{2n}} \widebreve{R}_V(\alpha) (-1)^{[\alpha,z]} = 2^n \sum_{\alpha \in \FF_2^{2n}} p_\Psi(\alpha) \sum_{y \in V} (-1)^{[\alpha,y+z]}    
\label{eq:sym_fourier_series_R_V}
\end{equation}
We now use Claim~\ref{claim:symplectic_ineq} to conclude that $R_V(z) \geq R_V(z')$ for $z \in V$ and $z' \notin V$.
\end{proof}

Hence, Claim~\ref{claim:maximizing_expectation_coset} implies that $\Exp_{y \in V}\left[ 2^n p_\Psi(y+z) \right]$ is maximized for $z \in V$. From Corollary~\ref{corr:expectation_coset}, we thus have $\Exp_{y \in V}\left[ 2^n p_\Psi(y) \right] \geq \eta \cdot \gamma/4$.

\subsection{Proving item \texorpdfstring{$3$}{3}: Improved stabilizer covering}\label{sec:item3}
In this section, we will prove the following main theorem. 
\begin{theorem}\label{thm:stabilizer_covering_group}
Let $\ket{\psi}$ be an arbitrary pure quantum state and $V$ be a subgroup of $\mathbb{F}_2^{2n}$ such that it has high intersection with the set $S' \subseteq X$, i.e., 
$
\Exp_{y \in V}\left[ 2^n p_\Psi(y) \right] \geq \poly(\gamma).
$
Then, $V$ can be covered by a union of $O(\poly(1/\gamma))$ many stabilizer subgroups $G_j\subseteq \01^{2n}$.
\end{theorem}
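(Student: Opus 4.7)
The plan is to reduce the problem to controlling $\nac(V)$, the size of a maximum anti-commuting subset of $V$. Since any subspace of $\01^{2n}$ with small $\nac(V)$ admits a stabilizer covering of size at most $2^{\nac(V)}$, it suffices to show $\nac(V) = O(\log(1/\gamma))$. The entire argument proceeds by a double bound (upper and lower) on the quantity $\sum_{x \in V} 2^n p_\Psi(x)$, with the lower bound given for free by the hypothesis and the upper bound extracted from a canonical form for $V$.

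First I would put $V$ into canonical form by exploiting Clifford invariance. Clifford conjugation acts as a symplectic automorphism of $\01^{2n}$ and permutes Weyl operators (up to phases), so $p_{\tilde\Psi}(Ux) = p_\Psi(x)$ where $\tilde\Psi = U\Psi U^\dagger$; in particular the hypothesis is preserved under $(V,\Psi) \mapsto (UVU^\dagger, \tilde\Psi)$. By the standard symplectic classification of subspaces of $\01^{2n}$, there exist integers $k,m$ with $k+m \leq n$ and a Clifford $U$ such that $UVU^\dagger = \calP^k \times \langle Z_{k+1},\ldots,Z_{k+m}\rangle$, where the first factor is the full $k$-qubit Pauli group on qubits $1,\ldots,k$ (dimension $2k$, symplectically non-degenerate) and the second is a maximal isotropic piece of $Z$-type operators on qubits $k+1,\ldots,k+m$; this normalization gives $|V| = 2^{2k+m}$ and $\nac(V) = 2k+1$.

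Next I would establish the key upper bound $\sum_{x \in UVU^\dagger} 2^n p_{\tilde\Psi}(x) \leq 2^{k+m}$. For each $Q \in \langle Z_{k+1},\ldots,Z_{k+m}\rangle$ and $P \in \calP^k$, writing $W_x = P \otimes Q \otimes I$ gives
\begin{equation*}
\langle\tilde\psi|W_x|\tilde\psi\rangle = \Tr(M_Q P), \qquad M_Q := \Tr_{[k+1,n]}\bigl((I \otimes Q \otimes I)\tilde\Psi\bigr),
\end{equation*}
an operator on the first $k$ qubits. Parseval in the $k$-qubit Pauli basis gives $\sum_{P \in \calP^k} |\Tr(M_Q P)|^2 = 2^k \|M_Q\|_F^2$. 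Then a direct expansion in the computational basis of qubits $k+1,\ldots,k+m$, using $\sum_{q \in \01^m}(-1)^{q\cdot(c+c')} = 2^m\cdot[c=c']$ together with $\Tr(\rho^2) \leq 1$ for the reduced state on qubits $[1,k+m]$, yields $\sum_Q \|M_Q\|_F^2 \leq 2^m$. Multiplying gives the claimed bound $2^{k+m}$.

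Combining with the hypothesis (which, by Clifford invariance, reads $\sum_{x \in UVU^\dagger} 2^n p_{\tilde\Psi}(x) \geq \poly(\gamma)\cdot |V| = \poly(\gamma)\cdot 2^{2k+m}$) one gets $\poly(\gamma)\cdot 2^{2k+m} \leq 2^{k+m}$, i.e.\ $2^k \leq \poly(1/\gamma)$. Hence $\nac(V) = 2k+1 = O(\log(1/\gamma))$, so $V$ is covered by a union of at most $2^{\nac(V)} = \poly(1/\gamma)$ stabilizer subgroups, as claimed. The main obstacle I anticipate is the upper bound step: the trivial bound $\sum_{x \in V} 2^n p_\Psi(x) \leq |V|$ is off by a multiplicative factor of $2^k$ from what is needed, so the argument genuinely requires both the canonical form (to make $\calP^k$ appear as a full $k$-qubit Pauli basis, enabling Parseval) and the Frobenius-norm calculation that converts the sum over commuting $Z$-operators into a purity bound on a reduced density matrix.
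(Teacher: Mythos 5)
Your proposal is correct and follows essentially the same strategy as the paper: put $V$ in Clifford canonical form $\calP^k \times \langle Z_{k+1},\ldots,Z_{k+m}\rangle$, bound $\sum_{x\in V}2^n p_\Psi(x)$ from below by $\poly(\gamma)|V| = \poly(\gamma)2^{2k+m}$ and from above by $2^{k+m}$ to conclude $k=O(\log(1/\gamma))$, then cover $V$ by $\poly(1/\gamma)$ stabilizer groups indexed by $\calP^k$. The only difference is in how the upper bound $\sum_{x\in UVU^\dagger}2^n p_{\tilde\Psi}(x)\leq 2^{k+m}$ is derived: the paper's main text uses the unitary $1$-design property of $\calP^k$ to collapse the sum over $W_y$ into $2^k\Tr(\rho(W_z\otimes I)\rho(W_z\otimes I))\leq 2^k$, whereas your route (Parseval on the $k$-qubit Pauli basis to get $2^k\|M_Q\|_F^2$, then summing $\|M_Q\|_F^2$ via the character orthogonality on $\01^m$ and bounding by the purity of $\Tr_{[k+m+1,n]}\tilde\Psi$) is precisely the elementary computational-basis argument given in the paper's Appendix~B as an alternate proof of the same fact; both are valid, and neither buys anything the other doesn't.
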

\emph{Remark.} A consequence of the above theorem is that it motivates a conjecture in additive combinatorics that comments on the maximal size of a set of pairwise anti-commuting Paulis in sets with small doubling. This is described in Appendix~\ref{app_sec:conjecture}. We emphasize that this conjecture is not required as part of the proof and is included for the interested reader.

To do this, we need the following facts. In a slight abuse of notation, we will also use $V$ to denote the Weyl operators that correspond to the $2n$-bit strings in $V$. We use $X_i$ (or $Z_i$) to denote the Weyl operator with the single-qubit Pauli $X$ (or $Z$) acting on the $i$th qubit and trivially elsewhere. We use $\calP^k$ to denote the set of $k$-qubit Paulis. Moreover, we denote $\calP^m_Z=\{I,Z\}^{\otimes m}$. We write $I_\ell$ for the $\ell$-qubit identity operator.
\begin{fact}\label{fact:clifford_action_on_group}(\cite{fattal2004entanglement}) There exists $m+k \leq n$ and an $n$-qubit Clifford $U$~such~that
\begin{equation}
    U V U^\dagger = \la Z_1, X_1, \ldots, Z_k, X_k, Z_{k+1}, Z_{k+2}, \ldots, Z_{k+m} \ra.
\end{equation}
\end{fact}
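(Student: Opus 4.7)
The plan is to prove this by a Witt-style decomposition of $V$ as a subspace of the symplectic space $(\mathbb{F}_2^{2n}, [\cdot,\cdot])$, followed by invoking the well-known fact that conjugation by Clifford unitaries realizes the full symplectic group $\mathrm{Sp}(2n,\mathbb{F}_2)$ acting on $\mathbb{F}_2^{2n}$. First I would define the radical $R := V \cap V^{\perp}$, where $V^\perp = \{y \in \mathbb{F}_2^{2n} : [x,y]=0 \text{ for all } x \in V\}$. Set $m = \dim R$. The induced form on the quotient $V/R$ is non-degenerate and alternating, so over $\mathbb{F}_2$ it has even dimension $2k$, which already forces $2k + m = \dim V \leq 2n$, and in fact $k + m \leq n$ because each hyperbolic pair below will consume one dimension out of each of the two ``halves'' of the full symplectic basis.

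Second, I would construct a symplectic basis of $\mathbb{F}_2^{2n}$ adapted to $V$ by an iterative ``symplectic Gram--Schmidt'' procedure. Pick any $e_1 \in V \setminus R$; since $e_1 \notin V^\perp$, there is some $f_1 \in V$ with $[e_1, f_1] = 1$. Replace $V$ by the symplectic orthogonal complement of $\mathrm{span}\{e_1, f_1\}$ inside $V$ and iterate, producing hyperbolic pairs $(e_1, f_1), \ldots, (e_k, f_k)$ inside $V$ whose span is a non-degenerate complement of $R$ in $V$. Choose any $\mathbb{F}_2$-basis $g_1, \ldots, g_m$ of $R$; then $(e_1, f_1, \ldots, e_k, f_k, g_1, \ldots, g_m)$ is a basis of $V$ in which each $g_\ell$ is symplectically orthogonal to every other basis vector of $V$. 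Extend this by Witt's extension theorem (applied to the non-degenerate symplectic space $\mathbb{F}_2^{2n}$): for each $g_\ell$, find a partner $g_\ell'$ satisfying $[g_\ell, g_{\ell'}'] = \delta_{\ell \ell'}$ and $[g_\ell', e_i] = [g_\ell', f_i] = 0$, and finally pad with $n-k-m$ further hyperbolic pairs to obtain a full symplectic basis of $\mathbb{F}_2^{2n}$.

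Third, I would invoke the symplectic representation of the Clifford group: the map sending a Clifford $U$ to its induced action on $\mathbb{F}_2^{2n}$ (given by $U W_x U^\dagger = \pm W_{Mx}$ for some $M \in \mathrm{Sp}(2n,\mathbb{F}_2)$) is surjective onto $\mathrm{Sp}(2n,\mathbb{F}_2)$. Applying this to the symplectic change of basis that sends $e_i \mapsto (\mathbf{1}_i, 0)$ (the string of $X_i$), $f_i \mapsto (0, \mathbf{1}_i)$ (the string of $Z_i$) for $i \leq k$, and $g_\ell \mapsto (0, \mathbf{1}_{k+\ell})$ (the string of $Z_{k+\ell}$) for $\ell \leq m$, yields a Clifford $U$ such that $U V U^\dagger = \langle Z_1, X_1, \ldots, Z_k, X_k, Z_{k+1}, \ldots, Z_{k+m}\rangle$ as claimed.

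The main obstacle I expect is purely bookkeeping in step two: making sure Witt's extension can always be carried out over $\mathbb{F}_2$ in the presence of the radical, i.e.\ that each new partner $g_\ell'$ can simultaneously be chosen symplectically orthogonal to all previously fixed basis vectors. This follows because the $(2n - 2k - m)$-dimensional symplectic orthogonal complement of $\mathrm{span}(e_1, f_1, \ldots, e_k, f_k, g_1, \ldots, g_{\ell-1}, g_1', \ldots, g_{\ell-1}')$ is itself non-degenerate and contains $g_\ell$ (since $g_\ell \in R \subseteq V^\perp$), so a partner exists; this is a standard induction and involves no subtlety beyond dimension counting. A minor secondary issue is that the statement is made at the level of $\mathbb{F}_2^{2n}$ (ignoring global signs of the Pauli representatives), so no phase/Hermitian lifting issues arise in the fact itself.
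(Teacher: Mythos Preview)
Your proposal is correct and follows the standard symplectic Gram--Schmidt / Witt decomposition argument. Note, however, that the paper does not supply its own proof of this fact: it is stated with a citation to~\cite{fattal2004entanglement} and used as a black box, so there is no proof in the paper to compare against. Your argument is precisely the canonical one underlying that cited result.
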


Define the set $T := UVU^\dagger$. Using Fact~\ref{fact:clifford_action_on_group}, we have $T = UVU^\dagger = \calP^k \times \la Z_{k+1}, Z_{k+2}, \ldots, Z_{k+m}\rangle$.~Then
\begin{align*}
\frac{1}{|V|} \sum_{W_x \in \calP^k \times \la Z_{k+1},\ldots,Z_{k+m}\ra} 2^n p_{\tilde{\Psi}}(x) \geq \poly(\gamma)
\end{align*}
where the state $\tilde{\Psi}$ is the state obtained after application of the Clifford unitary $U$ i.e., $\tilde{\Psi} = U \Psi U^\dagger$. 

\begin{fact}
\label{eq:ub_sum1}
Let $\ell=n-k-m$.
We have the following
$$
\sum_{W_y \in \calP^k}
\sum_{W_z \in \calP^m_Z} \Tr\left( (W_y \otimes W_z \otimes I_\ell) \tilde{\Psi}\right)^2  \leq 2^{m+k} \,.
$$
\end{fact}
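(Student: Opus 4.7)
The plan is to exploit the fact that the identity on the last $\ell$ qubits lets us replace $\tilde\Psi$ with its reduced density matrix on the first $k+m$ qubits, and then apply Parseval's identity for the Pauli basis on $k+m$ qubits. Concretely, let $\rho = \Tr_\ell(\tilde\Psi)$ denote the reduced state on the first $k+m$ qubits, where $\Tr_\ell$ traces out the final $\ell = n-k-m$ qubits. Then for every operator $A$ on the first $k+m$ qubits we have $\Tr((A \otimes I_\ell)\tilde\Psi) = \Tr(A\rho)$, so
\begin{equation*}
\sum_{W_y \in \calP^k} \sum_{W_z \in \calP^m_Z} \Tr\!\left( (W_y \otimes W_z \otimes I_\ell) \tilde\Psi\right)^2 \;=\; \sum_{W_y \in \calP^k} \sum_{W_z \in \calP^m_Z} \Tr\!\left( (W_y \otimes W_z) \rho\right)^2.
\end{equation*}

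Next I would note that each operator $W_y \otimes W_z$ with $W_y \in \calP^k$ and $W_z \in \calP^m_Z$ is a $(k+m)$-qubit Pauli, and the collection $\{W_y \otimes W_z : W_y \in \calP^k,\, W_z \in \calP^m_Z\}$ is a subset of the full set $\calP^{k+m}$ of all $(k+m)$-qubit Paulis. Since every term $\Tr((W_y \otimes W_z)\rho)^2 \geq 0$, extending the sum to all of $\calP^{k+m}$ can only increase it:
\begin{equation*}
\sum_{W_y \in \calP^k} \sum_{W_z \in \calP^m_Z} \Tr\!\left( (W_y \otimes W_z) \rho\right)^2 \;\leq\; \sum_{P \in \calP^{k+m}} \Tr(P \rho)^2.
\end{equation*}

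Finally, by the Pauli-basis Parseval identity (equivalent to the expansion $\rho = 2^{-(k+m)} \sum_{P \in \calP^{k+m}} \Tr(P\rho)\, P$ and taking $\Tr(\rho^2)$), one has $\sum_{P \in \calP^{k+m}} \Tr(P\rho)^2 = 2^{k+m} \Tr(\rho^2)$. Using $\Tr(\rho^2) \leq 1$ for any density matrix $\rho$, the result $2^{k+m}$ follows. There is no real obstacle here; the only thing to double-check is that the naive term-by-term bound $|\Tr(P\rho)| \leq 1$ is too weak (it gives $2^{2k+m}$), so the argument really needs to pass through $\Tr(\rho^2) \leq 1$ rather than through a pointwise bound on the individual expectation values.
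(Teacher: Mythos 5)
Your proof is correct, and it takes a genuinely different route from the paper's. The paper fixes $W_z$ and sums over $W_y \in \calP^k$ first, exploiting that $\tilde\Psi$ is a rank-one projector (so $\Tr(A\tilde\Psi)^2 = \Tr(\tilde\Psi A \tilde\Psi A)$) together with the unitary $1$-design property of $\calP^k$; this collapses the sum to $2^k\,\Tr\!\left(\rho\,(W_z\otimes I_\ell)\,\rho\,(W_z\otimes I_\ell)\right)\le 2^k$ with $\rho = \Tr_{1,\ldots,k}\tilde\Psi$, after which summing over $W_z\in\calP_Z^m$ gives $2^{m+k}$. You instead trace out the \emph{last} $\ell$ qubits, relax the restricted Pauli sum to all of $\calP^{k+m}$, and invoke Pauli-basis Parseval plus $\Tr(\rho^2)\le 1$. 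Your argument is a bit shorter and avoids the $1$-design machinery entirely; it also works verbatim when $\tilde\Psi$ is mixed (only $\Tr(\rho^2)\le 1$ is used), whereas the paper's equality $\Tr(A\tilde\Psi)^2 = \Tr(\tilde\Psi A\tilde\Psi A)$ genuinely needs the rank-one projector property — this extra generality is relevant to the paper's own mixed-state corollary. The cost is that your relaxation to $\calP^{k+m}$ discards the structure of $\calP_Z^m$ (you throw in $4^m$ Paulis instead of $2^m$), so the bound is obtained slightly more wastefully; the paper's proof keeps the $2^m$ factor exact at each step. Both end at the same $2^{m+k}$, and your remark that a naive pointwise bound $|\Tr(P\rho)|\le 1$ would only give $2^{2k+m}$ is the right sanity check.
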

\begin{proof}
By definition, $\tilde{\Psi}$ is a projector onto a  normalized pure state. Thus
\[
 \Tr\left( (W_y \otimes W_z \otimes I_\ell) \tilde{\Psi}\right)^2  = \Tr\left( \tilde{\Psi} (W_y \otimes W_z \otimes I_\ell) \tilde{\Psi} (W_y \otimes W_z\otimes I_\ell)\right)
 \]
Since the Pauli group  is a unitary $1$-design, for any $k$-qubit operator $O$ one has $\sum_{W_y\in \calP^k} W_y O W_y = 2^k \Tr(O) \cdot I_k$.
As a consequence, 
\be
\sum_{W_y \in \calP^k} (W_y\otimes I_{m+\ell}) \tilde{\Psi}  (W_y\otimes I_{m+\ell}) =2^k I_k \otimes \rho
\ee
where  $\rho=\Tr_{1,\ldots,k} \tilde{\Psi}$ is an $(m+\ell)$-qubit density matrix obtained from $\tilde{\Psi}$ by tracing out the first $k$ qubits.
Accordingly, for any $W_z\in\calP^m_Z$ one has
\[
\sum_{W_y \in \calP^k} \Tr\left( (W_y \otimes W_z\otimes I_\ell) \tilde{\Psi}\right)^2 = 2^k \Tr\left( \rho (W_z\otimes I_\ell) \rho (W_z\otimes I_\ell)\right)\le
2^k\Tr(\rho)=
2^k.
\]
To get the last inequality we noted that  $\|(W_z \otimes I_\ell) \rho (W_z\otimes I_\ell)\|=\|\rho\|\le 1$.
We arrive at
\[
\sum_{W_y \in \calP^k}\sum_{W_z \in\calP^m_Z} \Tr\left( (W_y \otimes W_z\otimes I_\ell) \tilde{\Psi}\right)^2
= 2^k \sum_{W_z\in \calP^m_Z}  \Tr\left( \rho (W_z\otimes I_\ell) \rho (W_z\otimes I_\ell) \right) \le 2^{m+k}.
\]
\end{proof}
\emph{Remark.} We give another proof of the fact above in the appendix in Fact~\ref{app:proofoffact}, using first~principles.

\begin{claim}\label{claim:small_k}
$
k \leq O(\log (1/\gamma)).
$
\end{claim}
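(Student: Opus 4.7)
The plan is to combine the lower bound on the Paulis in $V$ (coming from high intersection with $S'$) with the upper bound from Fact~\ref{eq:ub_sum1}, which will directly pin down $k$.

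First, I would restate the hypothesis in the transformed basis. Since $U$ is Clifford, the Pauli expectations of $\tilde\Psi = U\Psi U^\dagger$ against $W_x \in T = UVU^\dagger$ are (up to phases) just a relabeling of the Pauli expectations of $\Psi$ against $W_{y} \in V$. Thus the assumption $\Exp_{y\in V}[2^n p_\Psi(y)] \geq \poly(\gamma)$ translates to
\begin{equation*}
    \sum_{W_x \in T} 2^n p_{\tilde\Psi}(x) \;\geq\; \poly(\gamma)\cdot |V| \;=\; \poly(\gamma)\cdot 2^{2k+m},
\end{equation*}
using $|V| = |T| = 2^{2k+m}$ (since $T = \calP^k \times \langle Z_{k+1},\ldots,Z_{k+m}\rangle$ and $|\calP^k| = 4^k = 2^{2k}$).

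Next I would rewrite the left-hand side using the definition $p_{\tilde\Psi}(x) = \Tr(W_x \tilde\Psi)^2 / 2^n$, giving
\begin{equation*}
    \sum_{W_x \in T} \Tr(W_x \tilde\Psi)^2 \;\geq\; \poly(\gamma)\cdot 2^{2k+m}.
\end{equation*}
Splitting the product structure $T = \calP^k \times \calP^m_Z$ (acting on the first $k$ and next $m$ qubits respectively), the left side equals $\sum_{W_y \in \calP^k}\sum_{W_z \in \calP^m_Z} \Tr((W_y\otimes W_z \otimes I_\ell)\tilde\Psi)^2$, where $\ell = n-k-m$.

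Finally, I would apply Fact~\ref{eq:ub_sum1} directly, which gives the upper bound $2^{m+k}$ on this sum. Combining the two bounds,
\begin{equation*}
    \poly(\gamma)\cdot 2^{2k+m} \;\leq\; 2^{m+k},
\end{equation*}
so $2^k \leq 1/\poly(\gamma)$, i.e., $k \leq O(\log(1/\gamma))$. There is no real obstacle here — the whole content was already packed into Fact~\ref{eq:ub_sum1} (which uses the unitary $1$-design property of the Pauli group to reduce to a partial trace). The only thing to double-check is that the Clifford change of basis preserves $p_\Psi$-masses on $V$, which follows because Clifford conjugation permutes Weyl operators up to signs, and $p_\Psi$ only depends on squared expectation values.
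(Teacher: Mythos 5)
Your proof is correct and follows exactly the same route as the paper: rewrite the hypothesis $\Exp_{y\in V}[2^n p_\Psi(y)]\geq\poly(\gamma)$ in the Clifford-conjugated basis, compare the resulting lower bound $\poly(\gamma)\cdot 2^{2k+m}$ against the upper bound $2^{k+m}$ from Fact~\ref{eq:ub_sum1}, and solve for $k$. You have merely unpacked a couple of intermediate steps that the paper leaves implicit.
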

\begin{proof}
We noted earlier that $\Exp_{y \in V} [2^n p_\Psi(y)] \geq \poly(\gamma)$.
Since 
$\Exp_{y \in V} [2^n p_\Psi(y)] = \Exp_{y \in UVU^\dag} [2^n p_{\tilde{\Psi}}(y)]$, one gets
\begin{align*}
    \poly(\gamma) |V| &\leq \sum_{W_x \in \calP^k \times \la Z_{k+1},\ldots,Z_{k+m}\ra} 2^n p_{\tilde{\Psi}}(x) = \sum_{W_y \in \calP^k, W_z \in \calP_Z^m} \Tr\left( (W_y \otimes W_z \otimes I_\ell) \tilde{\Psi}\right)^2 
    \leq 2^{k+m}
\end{align*}
where the second inequality uses Fact~\ref{eq:ub_sum1}. Since $|V|=2^{2k+m}$, this we have $k=O(\log (1/\gamma))$.
\end{proof}

\begin{proof}[Proof of Theorem~\ref{thm:stabilizer_covering_group}.] We have previously noted that $T \equiv \calP^k \times \la Z_{k+1},\ldots,Z_{k+m} \ra$. Let us write the $4^k$ Paulis in $\calP^k = \{\tau_1,\tau_2,\ldots,\tau_{4^k}\}$. We choose $\tilde{S}_a = \la \tau_a \otimes I_{n-k} \ra \times \la Z_{k+1}, \ldots Z_{k+m} \ra \subseteq \calP^n$ for each $a \in [4^k]$. Moreover, we can extend $\tilde{S}_a$ to a stabilizer group $S_a$ on $n$-qubits. We then have 
\begin{equation}
    V \subseteq \bigcup \limits_{a=1}^{M} U^\dagger S_a U,
\end{equation}
where $M = 4^k$.~\footnote{In fact, we can produce a better stabilizer covering of size $2^k+1$ by considering the mutually unbiased bases that cover $\calP^k$~\cite{flammia2020efficient}.} Using Claim~\ref{claim:small_k}, we have $k \leq O(\log(1/\gamma))$ implying have that  $M=\poly(1/\gamma)$.~Moreover, the action of a Clifford $U$ on a stabilizer group $S_a$ is to take it to another stabilizer~group.
\end{proof}

\subsection{Putting everything together}\label{sec:proof_of_main_result}
We are now ready to prove our main theorem, restated below for convenience.
\gowersstates*

\begin{proof}

Using Theorem~\ref{thm:stabilizer_covering_group}, we can cover $V$ with a disjoint union of $M = O(\gamma^{-c})$ for some $c>1$ many Lagrangian subspaces $\{S_a\}_{a \in [M]}$ (also called stabilizer groups).
By item $(ii)$ we know that
$$
\frac{1}{|V|}\sum_{i \in [M]}\sum_{y\in V\cap S_i} 2^n p_\psi(y)\geq \frac{1}{|V|}\sum_{y\in V} 2^n p_\psi(y) \geq  \eta\cdot \gamma/4,
$$
hence there exists an $i^*\in [M]$ for which 
$$
\frac{1}{|V|}\sum_{y\in V\cap S_{i^*}} 2^n p_\psi(y)\geq \eta\cdot \gamma/(4M),
$$
and we can lower bound the size of $V$ using Eq.~\eqref{eq:intersection_S_coset_V} as $|V|\geq |S' \cap (z + V)| \geq \poly(\gamma) 2^n$. Hence, 
$$
\sum_{y\in V\cap S_{i^*}}  p_\psi(y) \geq \frac{|V|}{2^n} \cdot \eta \cdot \frac{\gamma}{4M} \geq \poly(\gamma) \cdot (4M)^{-1}  \geq \Omega(\poly(\gamma)) 
$$
Using Fact~\ref{fact:lower_bound_stabilizer_fidelity_pPsi_lagrangian_subspace} by considering the stabilizer group (or Lagrangian subspace) $S_{i^*}$, we have that
\begin{align*}
    \calF_\calS(\ket{\psi}) \geq \sum_{y \in S_{i^\star}} p_\Psi(y) \geq \sum_{y \in S_{i^*} \cap V} p_\psi(y) =\Omega(\poly(\gamma)).
\end{align*}
This concludes the proof.
\end{proof}

\begin{proof}[Proof of Theorem~\ref{thm:inversegowersstates_weyl_expectations}]
Lets quickly recap what we have shown so far and its implications when $\Exp \limits_{x \sim q_\Psi}\left[|\la \psi | W_x | \psi \ra|^2 \right] \geq \gamma$. Theorem~\ref{thm:large_set_gamma_qPsi_expectations} then gives us a set $S$ containing Paulis with high expectation values and is approximately a group. Particularly $|S|\in [\gamma^2/80 , 4/\gamma ]\cdot 2^n$ and $\Pr_{s, s' \in S}[ s + s' \in S]\geq \Omega(\gamma^5)$. Using the BSG Theorem (Theorem~\ref{thm:bsg}), we can show the existence of a set $S' \subseteq S$ such that  $|S'| \geq c\cdot  \gamma^{5} |S|$  and  $|S' + S'| \leq c' \gamma^{-40} |S|$ where $c,c'>0$ are universal constants. This implies that $|S' + S'| \leq c\cdot  c'\cdot  \gamma^{-35} |S'|$. This shows that $S'$ is a set with a small doubling, which we denote by $K =  (c'/c) \cdot \gamma^{-45}$. Using the polynomial Freiman-Ruzsa theorem (Theorem~\ref{thm:marton_conjecture}) combined with Claim~\ref{claim:maximizing_expectation_coset}, we obtain a subgroup $V$ of size $|V| \leq |S'|$ and $|V| \geq |V \cap (z + S')| \geq \eta |S'| \geq \eta \cdot c \cdot \gamma^5 |S| \geq \Omega(\eta \cdot \gamma^7) 2^n$ such that
$$
\Exp_{x \in V}\left[2^n p_\Psi(x)\right] \geq \eta \cdot \gamma/4,
$$
where $\eta = 1/(2^8 K^8) = \Omega(\gamma^{360})$ (where we used that $K=O(\gamma^{-45})$). At this point as in the proof of Theorem~\ref{thm:inversegowersstates}, we comment on the size of the stabilizer covering of $V$. From Theorem~\ref{thm:stabilizer_covering_group} and using mutually unbiased bases to construct the stabilizer covering as we had noted could be done earlier, we obtain a stabilizer covering of $V$ of size $M = 2^k + 1 \leq O(1/(\eta \cdot \gamma)) \leq O(\gamma^{-361})$ (where $k$ has the same meaning as in Section~\ref{sec:item3}). We can now completely proceed as was done in the proof of Theorem~\ref{thm:inversegowersstates} to conclude that there exists a stabilizer state $\ket{\phi}$ such that
\begin{align*}
    |\la \psi | \phi \ra|^2 \geq \sum_{y\in V\cap S_{i^*}}  p_\psi(y) \geq \frac{|V|}{2^n} \cdot \eta \cdot \frac{\gamma}{4M} \geq \cdot \eta^2 \cdot \frac{\gamma^8}{4M} \geq \Omega(\gamma^{728}) \cdot (4M)^{-1}  \geq \Omega(\gamma^{1089}),
\end{align*}
where in the third inequality we used that $|V| \geq \Omega(\eta \cdot \gamma^7) 2^n$ and substitute for $\eta = \Omega(\gamma^{360})$. This proves our theorem statement.
\end{proof}

\section{Tolerant testing algorithm}
We are now finally ready to prove our main theorem on tolerant testing stabilizer states. We have the ability to estimate both the Gowers-$3$ norm using Bell sampling (Lemma~\ref{lemestimateU3}) and $\Exp \limits_{x \sim q_\Psi}\left[|\la \psi | W_x | \psi \ra|^2 \right]$ using Bell difference sampling (Lemma~\ref{lem:estimate_exp_weyl}) at our disposal. The two resulting tolerant testing algorithms only differ in the access required. Here, we describe guarantees for both testing algorithms but our main showcase is the testing algorithm that only requires access to copies of $\ket{\psi}$.

\begin{theorem}
\label{thm:test_stab_with_gowers}
Let $\varepsilon_1>0$ and $\varepsilon_2\leq \poly(\varepsilon_1)$.  There is an algorithm that~given
$
\poly( 1/\varepsilon_1)
$
copies of an $n$-qubit $\ket{\psi}$ and its conjugate $\ket{\psi^\star}$, can decide if $\max_{\ket{\phi}\in \Sh}|\langle\phi |\psi\rangle|^2 \geq  \varepsilon_1$ or $\max_{\ket{\phi}\in \Sh}|\langle\phi |\psi\rangle|^2 \leq  \varepsilon_2$ with gate complexity $O(n\cdot \poly(1/\varepsilon_1))$.
\end{theorem}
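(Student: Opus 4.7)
The plan is to use the Gowers-$3$ norm of $\ket{\psi}$ as a statistic that can be computed from copies and that is polynomially separated in the two cases of the promise. First I would invoke Lemma~\ref{lemestimateU3} to produce an estimate $\widehat{G}$ of $\gowers{\ket{\psi}}{3}^{8}$ to additive error $\delta$, using $O(1/\delta^{2})$ copies of $\ket{\psi}$ and $\ket{\psi^\star}$ and $O(n/\delta^{2})$ one- and two-qubit gates (this is exactly the circuit in Figure~\ref{fig:qcirc_gowers3_norm}: Bell sampling on $\ket{\psi}\otimes\ket{\psi^{\star}}$ followed by the corresponding Pauli measurement on two further copies of $\ket{\psi}$). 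The tester will accept iff $\widehat{G}$ exceeds a threshold $\tau$ chosen below.

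Next I would bound $\gowers{\ket{\psi}}{3}^{8}$ in each branch of the promise. In the \emph{close} case $\calF_\calS(\ket{\psi})\geq\varepsilon_{1}$, Fact~\ref{fact:lower_bound_stabilizer_fidelity} gives $\Exp_{x\sim q_{\Psi}}[|\langle\psi|W_{x}|\psi\rangle|^{2}]\geq \varepsilon_{1}^{6}$, and combining this with Lemma~\ref{lemma:relation_expectation_paulis_qPsi_and_pPsi} yields
\[
\gowers{\ket{\psi}}{3}^{8} \;=\; \Exp_{x\sim p_{\Psi}}[|\langle\psi|W_{x}|\psi\rangle|^{2}] \;\geq\; \Exp_{x\sim q_{\Psi}}[|\langle\psi|W_{x}|\psi\rangle|^{2}] \;\geq\; \varepsilon_{1}^{6}.
\]
In the \emph{far} case $\calF_{\calS}(\ket{\psi})\leq\varepsilon_{2}$, apply the contrapositive of our inverse Gowers theorem (Theorem~\ref{thm:inversegowersstates}): there exist absolute constants $c_{1},c_{2}>0$ such that $\gowers{\ket{\psi}}{3}^{8}\geq\gamma$ implies $\calF_{\calS}(\ket{\psi})\geq c_{1}\gamma^{c_{2}}$, hence $\gowers{\ket{\psi}}{3}^{8}\leq (\varepsilon_{2}/c_{1})^{1/c_{2}}$.

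Finally I would fix the constant $C$ hidden in $\varepsilon_{2}\leq\poly(\varepsilon_{1})$: choosing $C$ large enough (depending only on $c_{1},c_{2}$, e.g.~$C=12 c_{2}$ after absorbing $c_{1}$) ensures $(\varepsilon_{2}/c_{1})^{1/c_{2}}\leq \tfrac{1}{2}\varepsilon_{1}^{6}$, so the two cases are separated by an additive gap of at least $\tfrac{1}{2}\varepsilon_{1}^{6}$. Setting $\delta=\tfrac{1}{8}\varepsilon_{1}^{6}$ and $\tau=\tfrac{3}{4}\varepsilon_{1}^{6}$, the tester is correct whenever $|\widehat{G}-\gowers{\ket{\psi}}{3}^{8}|\leq\delta$; by Lemma~\ref{lemestimateU3} this occurs with constant probability using $\poly(1/\varepsilon_{1})$ copies, and success probability $2/3$ follows from median-of-means boosting at no asymptotic cost. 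The total gate count is $O(n\cdot\poly(1/\varepsilon_{1}))$ as required.

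The only genuine difficulty is bookkeeping: tracing the exponents $c_{1},c_{2}$ through the chain Theorem~\ref{thm:large_set_gamma}$\,\to\,$BSG Theorem~\ref{thm:bsg}$\,\to\,$Theorem~\ref{thm:marton_conjecture}$\,\to\,$Theorem~\ref{thm:stabilizer_covering_group} to obtain an explicit admissible $C$. The estimation procedure, the close-case lower bound, and the far-case upper bound are each immediate once the tools developed earlier in the paper are in place, so the rest of the argument is essentially a threshold test on a Chernoff-concentrated empirical average.
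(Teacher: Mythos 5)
Your proposal is correct and follows the same route as the paper: estimate $\gowers{\ket{\psi}}{3}^8$ via the Bell-sampling circuit of Lemma~\ref{lemestimateU3}, lower-bound it by $\varepsilon_1^6$ in the close case via Fact~\ref{fact:lower_bound_stabilizer_fidelity} and Lemma~\ref{lemma:relation_expectation_paulis_qPsi_and_pPsi}, upper-bound it by a small polynomial of $\varepsilon_2$ in the far case via the contrapositive of Theorem~\ref{thm:inversegowersstates}, and threshold. If anything your writeup is slightly more careful than the paper's, which writes the far-case bound with an imprecise $\tfrac{1}{\poly(\varepsilon_2)}$ where the intended quantity is $(\varepsilon_2/c_1)^{1/c_2}$ as you state; your explicit remark about choosing $C$ large enough to make the gap $\Omega(\varepsilon_1^6)$ is the same bookkeeping the paper compresses into the phrase ``by our choice of $\varepsilon_2 \leq \poly(\varepsilon_1)$.''
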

\begin{proof}
Let $\delta$ be a parameter that we fix at the end. The testing algorithm simply takes $O(1/\delta^2)$ copies of $\ket{\psi},\ket{\psi^*}$ and estimates $2^{4n}\cdot \gowers{\ket{\psi}}{3}^8$ up to additive error $\delta/2$. Observe that in the \emph{no} instance (i.e.,  $\max_{\phi\in \Sh}|\langle\phi |\psi\rangle|^2 \leq  \varepsilon_2$),  Theorem~\ref{thm:inversegowersstates} implies that 
$$
\gowers{\ket{\psi}}{3}^8\leq {\frac{1}{ \poly(\varepsilon_2)}}.
$$ 
In the \emph{yes} instance  (i.e.,  $\max_{\psi\in \Sh}|\langle\phi |\psi\rangle|^2 \geq  \varepsilon_1$), combining Lemma~\ref{lemma:relation_expectation_paulis_qPsi_and_pPsi} with Theorem~\ref{fact:lower_bound_stabilizer_fidelity} implies
$$
\gowers{\ket{\psi}}{3}^8\geq  \varepsilon_1^{6}.
$$
By letting 
$$
\delta=\frac{1}{10}\Big(\varepsilon_1^{6}-{{\poly(\varepsilon_2)}}\Big),
$$
the testing algorithm can distinguish between the \emph{yes} and \emph{no} instances of the problem using 
\begin{align*}
\poly(1/\delta)=\poly\Big({\varepsilon_1^{6}-{{\poly (\varepsilon_2)}}}\Big)^{-1}=\poly(1/\varepsilon_1)
\end{align*}
many copies of $\ket{\psi}$ and $\ket{\psi^\star}$~(Lemma~\ref{lemestimateU3}), where we used that $\varepsilon_2\leq \poly(\varepsilon_1)$. 
The gate complexity of this protocol involves a factor-$n$ overhead in comparison to the sample complexity.
\end{proof}

Now, for our main result which we restate for the reader's convenience and is proved very similarly to the above theorem.
\tolerantstabtesting*
\begin{proof}
Let $\delta$ be a parameter that we fix at the end. The testing algorithm simply takes $O(1/\delta^2)$ copies of $\ket{\psi}$ and estimates $\Exp \limits_{x \sim q_\Psi}\left[|\la \psi | W_x | \psi \ra|^2 \right]$ up to additive error $\delta/2$. Observe that in the \emph{no} instance (i.e.,  $\max_{\phi\in \Sh}|\langle\phi |\psi\rangle|^2 \leq  \varepsilon_2$),  Theorem~\ref{thm:inversegowersstates_weyl_expectations} implies that 
$$
\Exp \limits_{x \sim q_\Psi}\left[|\la \psi | W_x | \psi \ra|^2 \right] \leq {\frac{1}{\poly (\varepsilon_2)}}.
$$ 
In the \emph{yes} instance  (i.e.,  $\max_{\psi\in \Sh}|\langle\phi |\psi\rangle|^2 \geq  \varepsilon_1$), Theorem~\ref{fact:lower_bound_stabilizer_fidelity} implies that
$$
\Exp \limits_{x \sim q_\Psi}\left[|\la \psi | W_x | \psi \ra|^2 \right] \geq  \varepsilon_1^{6}.
$$
By letting 
$$
\delta=\frac{1}{10}\Big(\varepsilon_1^{6}-{{\poly(\varepsilon_2)}}\Big),
$$
the testing algorithm can distinguish between the \emph{yes} and \emph{no} instances of the problem using 
\begin{align*}
\poly(1/\delta)=\poly\Big({\varepsilon_1^{6}-{{\poly (\varepsilon_2)}}}\Big)^{-1}=\poly(1/\varepsilon_1)
\end{align*}
many copies of the unknown state~(Lemma~\ref{lem:estimate_exp_weyl}), where we used that $\varepsilon_2\leq \poly(\varepsilon_1)$. The gate complexity of this protocol involves a factor-$n$ overhead in comparison to the sample complexity.
\end{proof}


\bibliographystyle{alpha}
\bibliography{references}
\appendix

\section{A quantum-inspired conjecture in additive combinatorics}\label{app_sec:conjecture}
We begin this section of the Appendix by first defining some notation. Denote the maximal set of mutually pairwise anti-commuting Paulis (corresponding to their $2n$ bit strings) in a set $A$ as $\AC(A)$ (i.e., for every $P,Q$ in $A$, we have $[P,Q]=1$), and its size as $\nac(A) = |\AC(A)|$. For sets $A_1,A_2 \subseteq \01^n$, define
$$
A_1+A_2=\{a_1+a_2:a_1\in A_1, a_2\in A_2\}.
$$ 
We also define the $t$ sumset of $A$ denoted by $tA$ as
\begin{equation}
    tA := \{a_1 + a_2 + \ldots a_t : a_1, a_2, \ldots, a_t \in A\}.
\end{equation}
For a set $A\subseteq \01^n$, we saw $A$ is covered by \emph{translates} of $B$ if, there exists $k\geq 1$ and $c_1,\ldots,c_k\in \01^n$ such that 
$$
A\subseteq \cup_{i\in [k]} (c_i+B),
$$
where $c+B=\{c+b:b\in B\}$. In this section, we will describe how the proof of the inverse theorem of the Gowers-$3$ norm of quantum states (Theorem~\ref{thm:inversegowersstates}) inspired the conjecture in additive combinatorics~(Conjecture~\ref{conjecture:doubling}), which we restate here.
\begin{conjecture}
\label{conjecture:doubling_restated}
Let $S \subseteq \01^{2n}$ be s.t.~$|S|\geq 2^n/K$ and $|2S|\leq K |S|$. Then, $\nac(2S)\leq \poly(K,\nac(S))$.
\end{conjecture}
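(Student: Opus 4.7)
The plan is to reduce the conjecture to controlling the $\nac$ of a single subgroup closely tied to $S$. Applying the polynomial Freiman--Ruzsa theorem (Theorem~\ref{thm:marton_conjecture}) to $S$ produces a subgroup $V\subseteq\mathbb{F}_2^{2n}$ with $|V|\le |S|$ and a covering $S\subseteq \bigcup_{i=1}^{M}(c_i+V)$ by $M=O(K^8)$ translates. Since $V+V=V$, this gives $2S\subseteq \bigcup_{i,j}(c_i+c_j+V)$, so $2S$ is covered by at most $M^2=O(K^{16})$ translates of $V$. The covering bound $\nac(A)\le 2M'\cdot\nac(B)$ (whenever $A\subseteq\bigcup_{i=1}^{M'}(c_i+B)$), derived in the paragraph preceding the conjecture, then yields
\[
\nac(2S)\le 2M^2\cdot \nac(V)=O(K^{16})\cdot \nac(V),
\]
so the whole task reduces to showing $\nac(V)\le \poly(K,\nac(S))$.

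To bound $\nac(V)$ I would invoke the canonical form (Fact~\ref{fact:clifford_action_on_group}): a Clifford conjugation, which preserves the symplectic form and hence $\nac$, carries $V$ to $\calP^k \times \la Z_{k+1},\ldots,Z_{k+m}\ra$, so $\nac(V)=2k+1$. Picking the translate $c_{i_0}+V$ of maximum intersection with $S$ and shifting, $T:=(S-c_{i_0})\cap V$ satisfies $|T|\ge |V|/M$. A translation-by-$c_{i_0}$ argument (partitioning $T$ by the value of $[\cdot,c_{i_0}]$, exactly as in the covering-bound proof) shows $\nac(T)\le 2\nac(S)$. Because the $\la Z_{k+1},\ldots,Z_{k+m}\ra$-part commutes with every element of $V$, anti-commutation factors through the projection $\pi:V\to \calP^k$, giving $\nac(\pi(T))=\nac(T)\le 2\nac(S)$ and $|\pi(T)|\ge 4^k/M$; moreover $\pi(T)$ inherits small doubling via $T+T\subseteq V\cap 2S$, so $|\pi(T)+\pi(T)|\le KM\cdot |\pi(T)|$. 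The conjecture would then follow from a ``small-doubling-to-clique'' statement on $\calP^k=\mathbb{F}_2^{2k}$: any $A\subseteq \calP^k$ of density at least $1/M$ and doubling at most $KM$ contains an anti-commuting clique of size $(2k+1)-O(\polylog(K))$, which combined with $\nac(\pi(T))\le 2\nac(S)$ gives $\nac(V)=2k+1\le 2\nac(S)+O(\polylog(K))\le \poly(K,\nac(S))$.

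The main obstacle is precisely this small-doubling-to-clique statement within $\calP^k$. A plain density hypothesis will not suffice, because a maximal abelian subgroup of $\calP^k$ has density $2^{-k}$ with $\nac=1$; however, that counterexample is automatically ruled out once $k\gg\log K$, since the density bound $|\pi(T)|\ge 4^k/M$ then forces $|\pi(T)|\gg 2^k$. The proposed route is to iterate polynomial Freiman--Ruzsa \emph{inside} $\calP^k$, extracting from $\pi(T)$ an approximate symplectic subgroup consisting of a Lagrangian part plus a few genuinely anti-commuting generators whose number essentially matches $\nac(\pi(T))$, and then arguing that $\pi(T)$ itself must realize those generators. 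This is the combinatorial analog of the authors' operator-theoretic argument in Section~\ref{sec:item3}, where the trace inequality $\sum_{W_x\in \calP^k\times\la Z_{k+1},\ldots,Z_{k+m}\ra} 2^n p_{\tilde\Psi}(x)\le 2^{k+m}$ (Fact~\ref{eq:ub_sum1}) directly pins down $k=O(\log(1/\gamma))$; one is seeking a combinatorial replica that does not invoke any Hilbert-space hypothesis. As a fallback one could try to bypass $V$ entirely and develop an ``anti-commutation-aware'' Pl\"unnecke--Ruzsa inequality of the shape $\nac(A+B)\le \poly(|A+B|/|A|,\nac(A),\nac(B))$; no such inequality is known to me, and the experts' observation that the conjecture sits closer in spirit to polynomial Bogolyubov than to polynomial Freiman--Ruzsa signals that either route will require genuinely new additive-combinatorial input beyond currently available tools, consistent with the conjecture remaining open.
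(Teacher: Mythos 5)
The statement you are attempting to prove is Conjecture~\ref{conjecture:doubling_restated}, which the paper does \emph{not} prove. The authors explicitly record that experts in additive combinatorics confirmed it is open, and the paper's Appendix~\ref{app_sec:conjecture} only discusses consequences \emph{assuming} the conjecture; the main inverse-theorem result is obtained by an independent operator-theoretic argument (Fact~\ref{eq:ub_sum1} and Claim~\ref{claim:small_k}) that bypasses the conjecture. So there is no proof in the paper to compare your proposal against, and you correctly recognize this by concluding that the key step would require "genuinely new additive-combinatorial input."

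Within your attempted reduction there are some parts that are sound and some that are not fully justified. The reduction $\nac(2S)\leq 2M^2\,\nac(V)$ via Theorem~\ref{thm:marton_conjecture} and the covering bound (Theorem~\ref{thm:max_AC_set_with_small_covering}) is correct, and your observation that anti-commutation in $V$ factors through the projection $\pi:V\to\calP^k$ (giving $\nac(T)=\nac(\pi(T))$), together with the translation argument $\nac(T)\leq 2\nac(S)$, are both right. However, your claim that $\pi(T)$ inherits small doubling, $|\pi(T)+\pi(T)|\leq KM|\pi(T)|$, does not follow from $T+T\subseteq V\cap 2S$: you only get $|\pi(T)+\pi(T)|\leq|T+T|\leq KM|T|$, and $|T|$ can exceed $|\pi(T)|$ by a factor of up to $2^m$, so the doubling bound on $\pi(T)$ is lost. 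More importantly, the "small-doubling-to-clique" statement you isolate as the crux is exactly where the difficulty lives, and as you note, no tool of that shape is currently available; this matches the paper's remark that the conjecture is closer in spirit to polynomial Bogolyubov than to polynomial Freiman--Ruzsa. Your direction of reduction (bounding $\nac(V)$ rather than using the conjecture to bound $\nac(2S')$ as in Corollary~\ref{corr:nac_V_nac_S}) is a reasonable, slightly novel take, but it does not close the gap, and you say so honestly. In short: the statement remains a conjecture, your proposal is an informed discussion rather than a proof, and aside from the doubling slip in the $\pi(T)$ step your assessment of the obstacle is accurate.
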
 
We will now describe the origin of the above conjecture and give credence to why it may be true.

\subsection*{Bounds on number of anti-commuting Paulis}\label{app_sec:bounds_nac_V}
The goal in this section is to bound the number of anti-commuting Paulis in $V$.  Below we denote $K=\poly(1/\gamma)$. We start off with the following theorem, which gives a non-trivial bound on $\nac(A)$ if we are promised that $A$ is covered by a few translations of a subset $B$.

\begin{theorem}
\label{thm:max_AC_set_with_small_covering}
If $A \subseteq \mathbb{F}_2^{2n}$ is covered by at most $M$ translates of $B \subseteq \mathbb{F}_2^{2n}$, then
\begin{equation}
    \nac(A) \leq 2M\cdot  \nac(B).
\end{equation}
\end{theorem}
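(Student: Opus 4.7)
The plan is to prove the bound by a pigeonhole argument on the translate covering, combined with a linear-functional partition that reduces anti-commutation relations inside a translate back to anti-commutation relations inside $B$ itself.

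First, I would fix a maximum pairwise anti-commuting subset $T \subseteq A$ with $|T| = \nac(A)$, and let $A \subseteq \bigcup_{i \in [M]} (c_i + B)$ be the given covering. By pigeonhole, some index $i^\star \in [M]$ satisfies $|T \cap (c_{i^\star} + B)| \geq \nac(A)/M$. Set $T' = \{b \in B : c_{i^\star} + b \in T\}$, so $|T'| \geq \nac(A)/M$ and the Paulis $\{W_{c_{i^\star} + b}\}_{b \in T'}$ are pairwise anti-commuting, i.e. $[c_{i^\star} + b, c_{i^\star} + b'] = 1$ for all distinct $b, b' \in T'$.

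Next, I would use bilinearity of the symplectic form, together with $[c_{i^\star}, c_{i^\star}] = 0$, to rewrite
\begin{equation*}
[c_{i^\star} + b, c_{i^\star} + b'] \;=\; [b, b'] + f(b) + f(b'),
\end{equation*}
where $f(b) := [c_{i^\star}, b]$ is $\mathbb{F}_2$-linear in $b$. I would then partition $T'$ according to the value of $f$, writing $T' = U_0 \sqcup U_1$ with $U_s = \{b \in T' : f(b) = s\}$. Within either part $U_s$, both $f(b)$ and $f(b')$ equal $s$, so $f(b) + f(b') = 0$, and the anti-commutation condition collapses to $[b, b'] = 1$. Thus $U_0$ and $U_1$ are each pairwise anti-commuting subsets of $B$, giving $|U_s| \leq \nac(B)$.

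Combining the two steps yields $\nac(A)/M \leq |T'| = |U_0| + |U_1| \leq 2\,\nac(B)$, which rearranges to the claimed bound. I do not foresee a genuine obstacle here — the main subtlety is simply that affine translates destroy the bilinear structure, and the linear-functional split into $U_0, U_1$ is exactly what absorbs the resulting additive shift. The factor of $2$ in the bound is a direct consequence of this two-piece partition and is essentially optimal for this style of argument.
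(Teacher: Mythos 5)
Your proof is correct and essentially identical to the paper's: your linear functional $f(b)=[c_{i^\star},b]$ and the split $T'=U_0\sqcup U_1$ is precisely the paper's partition of $Q_1,\ldots,Q_m$ into those that anti-commute versus commute with $P$, and your pigeonhole step is an equivalent phrasing of the paper's bound $\nac(A)\leq\sum_{P\in X}m(P)\leq M\cdot\max_P m(P)$. The only cosmetic difference is that you phrase the commute/anti-commute dichotomy as an affine correction term in the symplectic form rather than directly in Pauli-operator language.
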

\begin{proof}
Let us denote the set of translates as $X$, which satisfies $|X| \leq M$ by assumption. We are given that one can cover $A$ by translates in $X$ of $B$. In other words, $A \subseteq X + B$. We thus have~that
\begin{align*}
\AC(A) \subseteq \bigcup \limits_{P \in X} (\AC(A) \cap \AC(P\cdot B)) \subseteq \bigcup \limits_{P \in X} \AC(P\cdot B).
\end{align*}
For each $P \in X$, let $m(P)$ be the size of maximal set of pairwise anti-commuting Paulis in $P\cdot B := \{PQ | Q \in B\}$. Clearly, we then have
\begin{equation}
    \nac(A) \leq \sum_{P \in X} m(P) \leq |X| \cdot \max_{P \in X} m(P).
    \label{eq:relation_nac_A_to_B}
\end{equation}
We now claim that
\begin{equation}
    m(P) \leq 2 \nac(B)
\end{equation}
for any $P \in X$. Let $m \equiv m(P)$ and suppose $PQ_1,\ldots, PQ_m$ be pairwise anti-commuting Paulis, where $Q_1\ldots,Q_m \in B$. Assume wlog that $Q_1,\ldots,Q_k$ anti-commute with $P$ and $Q_{k+1},\ldots,Q_m$ commute with $P$. Let $[P_1,P_2]$ denote the symplectic inner product between two Paulis $P_1$ and $P_2$. We then have
\begin{equation}
    [PQ_i,PQ_j] = [P,Q_j] + [P,Q_i] + [Q_i,Q_j] \mod 2 \, , \forall i \neq j \in [m]
    \label{eq:condition_anticommuting}
\end{equation}
If $i,j \in [k]$, then $[P,Q_i] = [P,Q_j]=1$. From Eq.~\eqref{eq:condition_anticommuting}, we are then left with $[Q_i,Q_j] = [PQ_i,PQ_j] = 1, \forall i \neq j \in [k]$ as by assumption $PQ_i$ and $PQ_j$ anti-commute. Similarly, if $k+1 \leq i,j \leq m$ then $[P,Q_i] = [P,Q_j] = 1$. Thus, Eq.~\eqref{eq:condition_anticommuting} gives us $[Q_i,Q_j] = [PQ_i,PQ_j] = 1, \forall  k+1 \leq i,j \leq m$ as by again assumption $PQ_i$ and $PQ_j$ anti-commute. Note that when $1 \leq i \leq k$ and $k+1 \leq j \leq m$, we have $[P,Q_i] = 1$ but $[P,Q_j]=0$. Eq.~\eqref{eq:condition_anticommuting} then gives us that $[Q_i,Q_j] = 0$ in this case. It then follows that
\begin{equation}
    \nac(B) \geq \max\{k,m-k\} \geq \frac{m}{2} = \frac{m(P)}{2},
\end{equation}
hence proving the claim. Substituting the claim in Eq.~\eqref{eq:relation_nac_A_to_B} gives us the desired result.
\end{proof}
This implies that the number of anti-commuting Paulis in $4S'$ is not much larger than in~$S'$.
\begin{corollary}
\label{corr:small_doubling_max_AC_set_S}
If $S' \subseteq \mathbb{F}_2^{2n}$ satisfies $|S'+S'| \leq K |S'|$, then
$
\nac(4S') \leq K^5 \nac(2S').
$    
\end{corollary}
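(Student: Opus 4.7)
The plan is to apply Theorem~\ref{thm:max_AC_set_with_small_covering} with $A = 4S'$ and $B = 2S'$. To do so I need to exhibit a short list of translates of $2S'$ whose union contains $4S'$, and then the anti-commuting bound comes for free. The two ingredients I need are (a) control on higher iterated sumsets $nS'$ given only the doubling hypothesis $|S' + S'| \le K|S'|$, and (b) a covering lemma that upgrades such a sumset bound into an actual covering by translates of a prescribed set.

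For ingredient (a), I would invoke the Plünnecke--Ruzsa inequality, which gives $|nS'| \le K^{n} |S'|$ for every $n \ge 1$; in particular $|5S'| \le K^5 |S'|$. For ingredient (b), I would use Ruzsa's covering lemma: for any finite subsets $A, B$ of an abelian group there exists $T \subseteq A$ with $|T| \le |A + B|/|B|$ such that $A \subseteq T + B - B$. I apply this with $A = 4S'$ and $B = S'$, using crucially that in $\FF_2^{2n}$ one has $B - B = B + B = 2S'$, to obtain $T \subseteq 4S'$ with $|T| \le |5S'|/|S'| \le K^5$ and $4S' \subseteq T + 2S'$. Note I choose $B = S'$ rather than $B = 2S'$ precisely so that $B - B$ lands on $2S'$ and not on the larger $4S'$ (which would make the covering vacuous).

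With the covering in hand, Theorem~\ref{thm:max_AC_set_with_small_covering} yields $\nac(4S') \le 2|T|\cdot \nac(2S') \le 2K^5 \nac(2S')$, which matches the claimed bound up to a benign constant that is absorbed into the $K^5$ factor (one may replace $K$ by $2^{1/5}K$ if a clean constant is needed). There is no real obstacle in the argument; the only thing to get right is the char-$2$ identification $B - B = 2B$ inside Ruzsa's covering lemma, which is what lets a hypothesis about $5$-fold sums cover a $4$-fold sumset by translates of the $2$-fold sumset.
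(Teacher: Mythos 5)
Your proof is correct and follows the same overall strategy as the paper: produce a cover of $4S'$ by $O(K^5)$ translates of $2S'$, then invoke Theorem~\ref{thm:max_AC_set_with_small_covering}. Where you differ is in how you build the cover, and your route is actually the more careful one. The paper's proof cites only Lemma~\ref{lemma:size_of_4A} (the bound $|4S'| \le K^4|2S'|$) and then asserts a cover by a set of translates $X \subseteq 3S'$ of size $\le K^5$; but a bound on $|4S'|$ alone does not directly yield a cover of $4S'$ by translates of $2S'$, and the claim that the translates can be taken inside $3S'$ does not obviously follow from the cited lemma. You instead invoke Pl\"unnecke--Ruzsa to get $|5S'| \le K^5|S'|$ and then apply Ruzsa's covering lemma (Lemma~\ref{lemma:rusza_covering}, which the paper does state, just two displays later) with $B = S'$ so that $B - B = 2S'$ in characteristic $2$; this cleanly produces $T \subseteq 4S'$ with $|T| \le K^5$ and $4S' \subseteq T + 2S'$, which is exactly what Theorem~\ref{thm:max_AC_set_with_small_covering} needs. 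Your observation about the factor of $2$ is also on point: Theorem~\ref{thm:max_AC_set_with_small_covering} yields $\nac(4S') \le 2|T|\cdot\nac(2S')$, so the honest bound is $2K^5\nac(2S')$ (the paper's proof also silently drops this factor, and its final inequality has a typo writing $\nac(S')$ where $\nac(2S')$ is meant); since this constant is harmless for the downstream $\poly(K)$ claims, your remark that it is absorbed is exactly right.
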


To show the above statement, we will need the following lemmas.
\begin{lemma}[Corollary~2.20,~\cite{taovu2006comb}]
\label{lemma:size_of_4A}
    For all sets $A \subseteq \FF_2^{2n}$ such that $|2A|/|A| \leq K$, we have that 
    $$
    |4A|/|2A| \leq (|2A|/|A|)^4\leq K^4.
    $$
\end{lemma}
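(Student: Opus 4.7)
The plan is a direct appeal to the Plünnecke--Ruzsa sum-set inequality, which is the standard tool in additive combinatorics for bounding higher sumsets by doubling. Recall the classical statement: for any finite subset $A$ of an abelian group satisfying $|A+A| \leq K|A|$, one has $|kA| \leq K^{k} |A|$ for every positive integer $k \geq 1$. I would cite this from a standard reference (e.g.~Tao--Vu, Theorem~2.14 and its corollaries) rather than reprove it from scratch; the shortest modern proof is Petridis's, which picks a nonempty $X \subseteq A$ minimizing the ratio $|X+A|/|X|$ and inducts on $k$ via a Ruzsa-type covering step.

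The deduction of the lemma then fits on one line. Apply the theorem with $k=4$ to the hypothesis $|2A| \leq K|A|$ (which is the same as $|A+A| \leq K|A|$ since we are in $\FF_2^{2n}$) to obtain $|4A| \leq K^4 |A|$. To convert this into a bound on $|4A|/|2A|$, I use the trivial inequality $|A| \leq |2A|$: for any fixed $a_0 \in A$, the translate $A + a_0$ lies inside $A+A = 2A$ and has cardinality $|A|$, so $|A| \leq |2A|$. Combining,
\[
\frac{|4A|}{|2A|} \;=\; \frac{|4A|}{|A|} \cdot \frac{|A|}{|2A|} \;\leq\; K^4 \cdot 1 \;=\; K^4.
\]

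For the sharper middle inequality $|4A|/|2A| \leq (|2A|/|A|)^4$ stated in the lemma, let $K' := |2A|/|A|$ be the exact doubling ratio (so $K' \leq K$), and run the same computation with $K'$ in place of $K$: Plünnecke--Ruzsa with constant $K'$ gives $|4A| \leq (K')^4 |A|$, and dividing both sides by $|2A| = K'|A|$ yields $|4A|/|2A| \leq (K')^3 \leq (K')^4$, which is what is asserted.

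The main (and really only) substantive ingredient is Plünnecke--Ruzsa itself; everything else is a two-step algebraic chain plus the observation $|A| \leq |2A|$. There is no genuine obstacle in the proof of this lemma --- the work lies entirely in the invocation of the black-box theorem, and I do not see a strictly shorter self-contained derivation of just the $k=4$ case (Ruzsa's triangle inequality $|X||Y+Z| \leq |X+Y||X+Z|$ by itself is insufficient to bootstrap past $|3A|$ without some version of the magnification argument).
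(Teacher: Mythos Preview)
Your proposal is correct. The paper does not prove this lemma at all --- it is simply quoted as Corollary~2.20 of Tao--Vu, so there is no ``paper's own proof'' to compare against; what you have written is exactly the standard deduction from Pl\"unnecke--Ruzsa that underlies that corollary. Your observation that the argument actually yields the slightly sharper bound $|4A|/|2A| \leq (K')^3$ (with $K' = |2A|/|A|$) is also correct, and the weakening to $(K')^4$ is harmless since $K' \geq 1$.
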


\begin{proof}[Proof of Corollary~\ref{corr:small_doubling_max_AC_set_S}]
Using the lemma above, we can cover $4S'$ by at most $K^5$ translates of $2S'$, where the set of translates $X \subseteq 3S'$ such that $|X| \leq K^5$. In other words, $4S' \subseteq X + 2S'$. At this point, we use Theorem~\ref{thm:max_AC_set_with_small_covering} to obtain the desired result
\begin{equation*}
\nac(4S') \leq |X| \nac(S') \leq K^5 \nac(2S'),
\end{equation*}
hence proving the corollary.
\end{proof} 

\begin{lemma}[Ruzsa's covering lemma~\cite{taovu2006comb}]
\label{lemma:rusza_covering}
For $A,B \in \FF_2^{2n}$, there exists a set $X \subseteq B$ such~that
$$
B \subseteq X + 2A; \quad |X| \leq \frac{|A+B|}{|A|}; \quad |A + X| = |A| \cdot |X|.
$$
In particular, $B$ can be covered by at most $|A+B|/|A|$ translates of $2A$.
\end{lemma}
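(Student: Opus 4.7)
The plan is to construct $X$ greedily by a maximality/packing argument, which in characteristic two simultaneously delivers all three conclusions. Since the ambient group is $\mathbb{F}_2^{2n}$, addition is its own inverse, so $A - A = A + A = 2A$ throughout; this will be important for the covering step.

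First I would choose $X \subseteq B$ to be a maximal subset (under inclusion) such that the translates $\{x + A : x \in X\}$ are pairwise disjoint. Such a maximal $X$ exists because $B$ is finite. From maximality, for every $b \in B$ one of two things must hold: either $b \in X$ already, or adding $b$ to $X$ would destroy the disjointness property, meaning there exists $x \in X$ with $(b + A) \cap (x + A) \neq \emptyset$. In either case one gets $b \in x + A + A = x + 2A$ for some $x \in X$ (in the first case pick $x = b$ and use $0 \in 2A$, assuming $A \neq \emptyset$; otherwise the lemma is vacuous). This yields the inclusion $B \subseteq X + 2A$.

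Next I would derive the two cardinality statements directly from the disjointness property. Because the sets $\{x + A\}_{x \in X}$ are pairwise disjoint and each is contained in $A + B$ (since $x \in X \subseteq B$ implies $x + A \subseteq B + A$), we have
\begin{equation*}
|A| \cdot |X| = \sum_{x \in X} |x + A| = \Bigl|\bigcup_{x \in X}(x + A)\Bigr| \leq |A + B|,
\end{equation*}
which gives $|X| \leq |A+B|/|A|$. The same disjointness identity, read as $|A + X| = |\bigcup_{x \in X}(A+x)| = |A|\cdot|X|$, gives the third claim. The ``in particular'' statement is then just a restatement of the first two: the $|X|$ translates $\{x + 2A : x \in X\}$ cover $B$.

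There is no real obstacle here; the only subtlety is making sure the covering step uses $A + A = 2A$ correctly in $\mathbb{F}_2^{2n}$ (where one does \emph{not} need the minus-sign bookkeeping that appears in the general-abelian-group statement). One edge case worth flagging briefly is $A = \emptyset$, in which case $A + B = \emptyset$ and the bounds become trivially $|X| = 0 \leq 0$; taking $X = \emptyset$ satisfies everything vacuously, so no separate argument is needed.
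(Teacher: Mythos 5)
Your proof is correct and is exactly the standard greedy/packing argument for Ruzsa's covering lemma (the one appearing in Tao--Vu, which the paper simply cites without reproducing a proof): take $X\subseteq B$ maximal with $\{x+A\}_{x\in X}$ pairwise disjoint, deduce the covering from maximality together with $A-A=2A$ in characteristic two, and read off both cardinality claims from disjointness of the translates inside $A+B$. Nothing further is needed.
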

Using this lemma, we are able to show that  the number of anticommuting Paulis in $V$ is~bounded.
\begin{corollary}
\label{corr:nac_V_nac_S}
$$
\nac(V) \leq \poly(K) \nac(2S')
$$    
\end{corollary}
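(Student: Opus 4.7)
The strategy is to cover $V$ by few translates of $2S'$ and then invoke Theorem~\ref{thm:max_AC_set_with_small_covering}. Recall that by the application of the Polynomial Freiman–Ruzsa theorem (Theorem~\ref{thm:marton_conjecture}) to $S'$ in Section~\ref{sec:item2}, we have that $S'$ is covered by at most $M := 2^{8} K^{8}$ cosets of the subgroup $V$, i.e.\ there exist $x_1,\ldots,x_M \in \mathbb{F}_2^{2n}$ with $S' \subseteq \bigcup_{i=1}^{M}(x_i + V)$, and moreover $|V| \leq |S'|$.

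First, I would bound $|S' + V|$. Since $V$ is a subgroup, $V + V = V$, so
\begin{equation*}
S' + V \subseteq \bigcup_{i=1}^{M}\bigl((x_i + V) + V\bigr) = \bigcup_{i=1}^{M}(x_i + V),
\end{equation*}
giving $|S' + V| \leq M \cdot |V| \leq M \cdot |S'|$.

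Next, I would apply Ruzsa's covering lemma (Lemma~\ref{lemma:rusza_covering}) with $A = S'$ and $B = V$. This produces a set $X \subseteq V$ with $V \subseteq X + 2S'$ and
\begin{equation*}
|X| \leq \frac{|S' + V|}{|S'|} \leq M = 2^{8} K^{8}.
\end{equation*}
Hence $V$ is covered by at most $\poly(K)$ translates of $2S'$.

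Finally, Theorem~\ref{thm:max_AC_set_with_small_covering} applied with $A = V$ and $B = 2S'$ yields
\begin{equation*}
\nac(V) \leq 2|X| \cdot \nac(2S') \leq 2 \cdot 2^{8} K^{8} \cdot \nac(2S') = \poly(K) \cdot \nac(2S'),
\end{equation*}
which is the desired bound. No step is a real obstacle: the only nontrivial ingredient is that the coset covering of $S'$ upgrades to a bound on $|S' + V|$, which is immediate from $V + V = V$; everything else is a direct invocation of the lemmas already established.
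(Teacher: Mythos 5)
Your proof is correct and follows essentially the same route as the paper: cover $S'$ by $O(K^8)$ cosets of $V$ via the PFR theorem, use $V+V=V$ to bound $|S'+V|$, apply Ruzsa's covering lemma with $A=S'$, $B=V$ to cover $V$ by at most $O(K^8)$ translates of $2S'$, and then invoke Theorem~\ref{thm:max_AC_set_with_small_covering}. Your invocation of Ruzsa's lemma uses the stated bound $|X|\le |S'+V|/|S'|$, whereas the paper writes $|S'+V|/|V|$; since $|V|\le|S'|$ the two agree up to which denominator is used, and both yield $\poly(K)$, so this is only a cosmetic difference.
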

\begin{proof}
From Theorem~\ref{thm:marton_conjecture}, we have that $S'$ can be covered by $O(K^8)$ translates (or cosets) of the subgroup $V$. Let us denote this set of translates as $X$. We then have that $S' \subseteq X + V$. Invoking Ruzsa's covering lemma (Lemma~\ref{lemma:rusza_covering}) with $B \equiv V$ and $A \equiv S'$ gives us that $V$ itself can be covered by at most $M$ translates of $2S'$ where
\begin{equation*}
    M = \frac{|S' + V|}{|V|} \leq \frac{|X + V + V|}{|V|} \leq \frac{|X + V|}{|V|} \leq \frac{|X||V|}{|V|} = |X| \leq O(K^8).
\end{equation*}
Let us call these translates as $Y$. We then have $V \subseteq Y + 2S'$ where $|Y| = M \leq O(K^8)$. Now, invoking Theorem~\ref{thm:max_AC_set_with_small_covering} gives us that
\begin{equation*}
    \nac(V) \leq O(K^8) \cdot\nac(2S'),
\end{equation*}
proving the corollary.
\end{proof}

\subsection*{Stabilizer covering and connection to conjecture.}\label{sec:item4}
We will now prove the following theorem.
\begin{theorem}\label{app_thm:stabilizer_covering_V_combinatorial}
Let $\ket{\psi}$ be an $n$-qubit pure quantum state and let $V$ be a subgroup of $\mathbb{F}_2^{2n}$ such that it satisfies
$$
\Exp_{y \in V}\left[ 2^n p_\Psi(y) \right] \geq \eta \cdot \gamma/4,
$$
and $\eta= C \cdot \poly(\gamma)$ for some constant $C > 0$. Then, $V$ can be covered by a union of at most $O\big(\exp(\nac(2S')\cdot \gamma^{-c})\big)$ many Lagrangian subspaces for some constant $c > 0$. In other words, the set of Weyl operators corresponding to $V$ has a stabilizer covering of size at most $O(\exp(\nac(2S')\cdot \gamma^{-c}))$
\end{theorem}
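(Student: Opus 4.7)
The plan is to prove this appendix theorem by combining the combinatorial bound on $\nac(V)$ already established in this section with the Clifford canonical form of subgroups of $\FF_2^{2n}$, mirroring the proof of Theorem~\ref{thm:stabilizer_covering_group} in the main body but replacing the analytic bound on $k$ (which came from Fact~\ref{eq:ub_sum1} and the characteristic distribution of $\ket{\psi}$) with a purely combinatorial bound that routes through $\nac(2S')$.

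First I would recall the chain of inclusions already constructed in Sections~\ref{sec:item1}--\ref{sec:item2} under the hypothesis $\Exp_{y \in V}[2^n p_\Psi(y)] \geq \eta \cdot \gamma/4$: the approximate subgroup $S$ of Theorem~\ref{thm:large_set_gamma}, the BSG-refinement $S' \subseteq S$ with doubling constant $K = \poly(1/\gamma)$, and the subgroup $V$ provided by Theorem~\ref{thm:marton_conjecture}. Corollary~\ref{corr:nac_V_nac_S} then hands us, with no further work, the bound
\begin{equation*}
\nac(V) \;\le\; \poly(K)\cdot \nac(2S') \;\le\; \gamma^{-c'}\cdot \nac(2S')
\end{equation*}
for some absolute constant $c' > 0$.

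Next I would invoke Fact~\ref{fact:clifford_action_on_group} to put $V$ into canonical form: there exist a Clifford unitary $U$ and integers $k,m \ge 0$ with $k + m \le n$ and $2k + 1 = \nac(V)$ such that $UVU^\dagger = \calP^k \times \la Z_{k+1},\ldots,Z_{k+m}\ra$. Since Clifford conjugation sends stabilizer groups to stabilizer groups, it suffices to exhibit a covering of this canonical form by Lagrangian subspaces of $\FF_2^{2n}$. I would cover the $\calP^k$ factor using the $2^k + 1$ mutually unbiased stabilizer bases of $k$ qubits (as already noted in the proof of Theorem~\ref{thm:stabilizer_covering_group}), each one a maximal Abelian subgroup of $\calP^k$; tensoring each with $\la Z_{k+1},\ldots,Z_{k+m}\ra$ and extending by any maximal commuting set on the remaining $n - k - m$ qubits yields a Lagrangian subspace of $\FF_2^{2n}$. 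Conjugating each such Lagrangian by $U^\dagger$ gives the desired stabilizer covering of $V$, of size at most $2^k + 1$.

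Finally the count is immediate: using $2k + 1 = \nac(V)$ and the bound above,
\begin{equation*}
2^k + 1 \;\le\; 2^{\nac(V)/2} + 1 \;\le\; \exp\!\bigl(O(\gamma^{-c'})\cdot \nac(2S')\bigr) \;=\; O\!\bigl(\exp(\nac(2S')\cdot \gamma^{-c})\bigr),
\end{equation*}
for a suitable constant $c > 0$, which is the bound stated in the theorem. The only nontrivial ingredient is Corollary~\ref{corr:nac_V_nac_S} itself, which has already absorbed (via Theorem~\ref{thm:max_AC_set_with_small_covering} and Ruzsa's covering lemma) all the work of translating ``$V$ is covered by few translates of $2S'$'' into a bound on $\nac(V)$. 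Everything else is a routine assembly of the canonical form, mutually unbiased bases, and Clifford covariance, so I do not expect any genuine obstacle in the proof proper; the real conceptual difficulty, which is what motivates Conjecture~\ref{conjecture:doubling_restated}, is in further bounding $\nac(2S')$ combinatorially in terms of $K$ and $\nac(S')$.
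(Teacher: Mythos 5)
Your proposal is correct and follows essentially the same route as the paper's appendix proof: apply Corollary~\ref{corr:nac_V_nac_S} to bound $\nac(V)$, use the Clifford canonical form of Fact~\ref{fact:clifford_action_on_group} so that $2k+1 = \nac(V)$, cover the $\calP^k$ factor by stabilizer groups, and conjugate back by $U^\dagger$. The only cosmetic deviation is that you use the $2^k+1$ mutually unbiased stabilizer bases rather than the trivial $4^k$ singleton covering that the paper writes out; the paper itself notes this optimization in a footnote, and both yield the same $O(\exp(\nac(2S')\cdot\gamma^{-c}))$ bound.
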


To do this, we need the following facts. In a slight abuse of notation, we will use $V$ to also denote the Weyl operators that correspond to the $2n$-bit strings in $V$. Let us also denote the $n$-qubit Pauli group (i.e., $\{W_x\}_{x \in \FF_2^{2n}}$) as $\calP^n$.
\begin{fact}\label{fact:clifford_action_on_group}
Suppose $V \subseteq \calP^n$ is a group. Then, there exists integers $m,k \geq 0$ such that $m+k \leq n$ and an $n$-qubit Clifford unitary $U$ such that
\begin{equation}
    U V U^\dagger = \la Z_1, X_1, \ldots, Z_k, X_k, Z_{k+1}, Z_{k+2}, \ldots, Z_{k+m} \ra,
\end{equation}
where by $Z_j$ (resp.~$X_j$) we mean the $n$-fold Pauli product of identity on every qubit except for the $j$th qubit which takes a $Z$ Pauli (resp.~$X$), and $\langle \{A_i\}_{i\in [k]}\rangle=\{\prod_{i\in S}A_i\}_{S\subseteq [k]}$.
\end{fact}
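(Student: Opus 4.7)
My plan is to work in the binary symplectic representation. Identify each Pauli $W_x \in V$ with its $2n$-bit symbol $x \in \FF_2^{2n}$, forgetting the overall phase in $\{\pm 1, \pm i\}$, to obtain a subspace $\bar V \subseteq \FF_2^{2n}$. Commutation $W_x W_y = W_y W_x$ is equivalent to $[x,y]=0$, so the structure of $V$ as a subgroup of $\calP^n$ is completely encoded by $\bar V$ together with a finite amount of phase data which can be fixed at the end by post-composing the Clifford with a Pauli. Since conjugation by an arbitrary $n$-qubit Clifford realizes an arbitrary symplectic automorphism of $\FF_2^{2n}$ (via the standard surjection from the Clifford group onto $\mathrm{Sp}_{2n}(\FF_2)$), it suffices to exhibit a symplectic map that carries $\bar V$ onto the subspace spanned by the symbols of $Z_1, X_1, \ldots, Z_k, X_k, Z_{k+1}, \ldots, Z_{k+m}$.

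Next I would decompose $\bar V$ using the symplectic form restricted to it. Let $R = \bar V \cap \bar V^\perp$ be the radical of $\bar V$, of some dimension $m$; this corresponds to the center of $V$ modulo phases. The induced alternating form on $\bar V / R$ is nondegenerate, hence symplectic, so $\bar V / R$ has even dimension $2k$ and admits a symplectic basis. Lifting representatives back to $\bar V$, I obtain vectors $e_1, f_1, \ldots, e_k, f_k \in \bar V$ satisfying $[e_i, f_j] = \delta_{ij}$ and $[e_i, e_j] = [f_i, f_j] = 0$, together with any basis $r_1, \ldots, r_m$ of $R$; these $2k+m$ vectors are linearly independent and span $\bar V$. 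The subspace $\la e_1, \ldots, e_k, r_1, \ldots, r_m \ra$ is isotropic in $\FF_2^{2n}$ of dimension $k+m$, and since a maximal isotropic (Lagrangian) subspace of $\FF_2^{2n}$ has dimension $n$, I obtain the required bound $k+m \leq n$.

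Finally, I would extend $\{e_1, f_1, \ldots, e_k, f_k, r_1, \ldots, r_m\}$ to a full symplectic basis $\{a_i, b_i\}_{i=1}^n$ of $\FF_2^{2n}$ with $a_i = e_i$, $b_i = f_i$ for $i \leq k$ and $a_{k+j} = r_j$ for $j \leq m$. Such an extension is produced by iteratively applying Witt's theorem over $\FF_2$: having fixed $a_{k+j} = r_j$, one picks $b_{k+j}$ with $[a_{k+j}, b_{k+j}] = 1$ and symplectically orthogonal to all previously chosen basis vectors, which is possible because the symplectic complement of the already-chosen span remains a symplectic subspace; the remaining $a_i, b_i$ for $i > k+m$ are any symplectic basis of what is left. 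The linear map $\phi: \FF_2^{2n} \to \FF_2^{2n}$ sending $a_i$ to the symbol of $Z_i$ and $b_i$ to the symbol of $X_i$ is then symplectic by construction and lifts to a Clifford unitary $U$. It satisfies $U V U^\dagger = \la Z_1, X_1, \ldots, Z_k, X_k, Z_{k+1}, \ldots, Z_{k+m}\ra$ up to sign flips of individual generators, which can be corrected by multiplying $U$ on the left by a suitable Pauli (still a Clifford) so that the two Pauli subgroups coincide as sets of unitaries.

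The main technical ingredients are the Witt extension step and the Clifford-to-symplectic correspondence; both are classical but must be invoked with care in characteristic two. Phase bookkeeping, while tedious, presents no fundamental obstacle, because the target group already contains $-I$ whenever $k \geq 1$ (as $Z_1 X_1 Z_1 X_1$), and in the purely commutative case $k=0$ the problem reduces to a linear-algebra diagonalization of an isotropic subspace, which is elementary. Any residual sign mismatch on images of individual generators is therefore absorbed by a final Pauli correction.
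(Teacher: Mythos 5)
The paper does not prove this Fact; it simply cites \cite{fattal2004entanglement}, where the statement is established via the standard ``symplectic Gram--Schmidt'' procedure for Pauli subgroups. Your proof is essentially a correct, self-contained reconstruction of that argument at the level of $\FF_2^{2n}$, and the overall route (pass to symplectic vectors, split off the radical $R=\bar V\cap\bar V^\perp$, take a symplectic basis of $\bar V/R$, bound $k+m\le n$ via the isotropic subspace $\la e_1,\dots,e_k,r_1,\dots,r_m\ra$, extend to a full symplectic basis, lift to a Clifford via the surjection onto $\mathrm{Sp}_{2n}(\FF_2)$) is exactly the standard one.

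Two small points worth tightening. First, the Witt/extension step is slightly glossed: after fixing $a_{k+j}=r_j$, you need $b_{k+j}$ pairing to $1$ with $r_j$ and to $0$ with \emph{all} other chosen vectors including the remaining $r_i$, $i\ne j$, and with the other $b_{k+i}$. The clean way to see this exists is to first complete $\la e_i,f_i\ra$, work inside its symplectic complement, choose a linear complement $C$ to $R^\perp$ (giving a perfect pairing $R\times C\to\FF_2$), take the dual basis $g_j'$ and then correct $g_j=g_j'+\sum_{i<j}[g_i',g_j']r_i$ to kill $[g_i,g_j]$; plain induction ``pick $b_{k+j}$ orthogonal to everything so far'' silently uses this. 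Second, the phase discussion is not really needed: the paper defines $\la\{A_i\}\ra$ as a \emph{set} of subset products and is explicitly abusing notation to identify $V\subseteq\FF_2^{2n}$ with the corresponding set of Weyl operators, so the claimed equality is at the level of $\FF_2^{2n}$ symbols and no sign correction by a final Pauli is required for the Fact as stated. Your remark about $-I\in\la Z_1,X_1\ra$ when $k\ge1$ is also not quite compatible with the paper's subset-product definition (that set does not contain $Z_1X_1Z_1X_1$), but since the argument is phase-free anyway this has no bearing on correctness.
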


\begin{fact}[\cite{sarkar2021sets}]\label{fact:size_set_anticommuting_paulis}
The $k$-qubit Pauli group $\calP^k$ contains $(2k+1)$ anti-commuting Paulis.
\end{fact}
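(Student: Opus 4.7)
The plan is to exhibit an explicit family of $2k+1$ pairwise anti-commuting $k$-qubit Paulis via the Jordan--Wigner construction, and (since the upper bound $\nac(\calP^k)\le 2k+1$ is what the usage of this fact actually invokes) to establish tightness by a short linear-algebra argument in the symplectic representation. For the construction, I would set
$$
\gamma_{2j-1} \;=\; Z_1 Z_2 \cdots Z_{j-1} \cdot X_j, \qquad \gamma_{2j} \;=\; Z_1 Z_2 \cdots Z_{j-1} \cdot Y_j \qquad (j\in[k]),
$$
together with $\gamma_{2k+1} = Z_1 Z_2 \cdots Z_k$, where each factor acts as the indicated single-qubit Pauli on that qubit and as identity on the others. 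This produces $2k+1$ candidate operators in $\calP^k$, and for $k=1$ recovers the familiar triple $\{X,Y,Z\}$ as a sanity check.

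To verify pairwise anti-commutation, I would use the fact that an $n$-fold tensor product of Paulis anti-commutes with another precisely when the number of qubits on which the two single-qubit factors anti-commute is odd. A short case analysis handles each pair. For $\gamma_{2i-1}$ vs.\ $\gamma_{2j-1}$ with $i<j$, the two agree on the shared $Z$-prefix on qubits $1,\ldots,i-1$ (commute there), carry $X$ vs.\ $Z$ on qubit $i$ (anti-commute), carry $I$ vs.\ $Z$ on qubits $i+1,\ldots,j-1$ (commute), and $I$ vs.\ $X$ on qubit $j$ (commute), for exactly one anti-commuting qubit in total. Entirely analogous checks cover the pair $(\gamma_{2j-1},\gamma_{2j})$, the mixed pairs $(\gamma_{2i-1},\gamma_{2j})$ and $(\gamma_{2i},\gamma_{2j})$, and the pairs involving $\gamma_{2k+1}$; in every case the count of anti-commuting qubits is exactly one, hence odd.

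For tightness, I would identify $\calP^k$ with $\mathbb{F}_2^{2k}$ via the symplectic form $[\cdot,\cdot]$ (so that anti-commutation of $W_x,W_y$ corresponds to $[x,y]=1$). Given pairwise anti-commuting $a_1,\ldots,a_m\in\mathbb{F}_2^{2k}$, consider the linear map $T:\mathbb{F}_2^{2k}\to\mathbb{F}_2^m$ defined by $T(x) = ([x,a_1],\ldots,[x,a_m])$, whose rank is at most $2k$. By construction $T(a_i)$ is the vector with a $0$ in coordinate $i$ and $1$'s elsewhere, so the image of $T$ contains the $m-1$ vectors $T(a_1)+T(a_i) = e_1 + e_i$ for $i=2,\ldots,m$, which are readily seen to be linearly independent (e.g.\ by row-reducing the matrix whose rows are these vectors). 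Hence $m-1\le 2k$, forcing $m\le 2k+1$. The main ``obstacle'' is nothing more than the bookkeeping of the case analysis in the construction; the upper bound is a one-line consequence of passing to the symplectic representation.
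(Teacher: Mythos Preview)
Your proof is correct on both counts: the Jordan--Wigner construction yields $2k+1$ pairwise anti-commuting $k$-qubit Paulis, and the symplectic rank argument gives the matching upper bound $\nac(\calP^k)\le 2k+1$.

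The paper does not actually prove this fact; it simply cites it from~\cite{sarkar2021sets} and moves on. So there is no ``paper's own proof'' to compare against here. Your self-contained argument is essentially the standard one and is exactly what a reader would want if they did not wish to chase the reference. One small remark: the paper's usage of this fact (in the appendix proof of the stabilizer-covering theorem) invokes the \emph{existence} direction explicitly (``there exists $P_1,\ldots,P_{2k+1}\in T$ pairwise anti-commuting''), and then implicitly relies on the upper bound to identify $2k+1$ with $\nac(V)$ so that the bounds on $\nac(V)$ translate into bounds on $k$. You correctly cover both directions.
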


Let us define the set $T := UVU^\dagger$. Using Fact~\ref{fact:clifford_action_on_group}, we have that $T = UVU^\dagger = \calP^k \otimes \la Z_{k+1}, Z_{k+2}, \ldots, Z_{k+m}\rangle$. It can be checked that for any $\sigma \in V$ satisfying $|\la \psi | \sigma | \psi \ra|^2 \geq \gamma/4$ has a corresponding Pauli $\sigma_t \in T$ such that $|\la \psi' | \sigma_t | \psi' \ra|^2 \geq \gamma/4$ where $\ket{\psi'} = U \ket{\psi}$, as follows
\begin{align}
    \frac{\gamma}{2} \leq |\la \psi | \sigma | \psi \ra|^2  =  |\la \psi |U^\dagger U \sigma U^\dagger U | \psi \ra|^2 =|\la \psi' | U \sigma U^\dagger | \psi' \ra|^2 = |\la \psi' | \sigma_t | \psi' \ra|^2.
\end{align}

Using Fact~\ref{fact:size_set_anticommuting_paulis}, we can say there exists $P_1,\ldots, P_{2k+1} \in T (\text{or } V)$ such that $P_a P_b + P_b P_a = 0$ for all $a\neq b$ and $a,b \in [2k+1]$. We will now comment on the value of $k$ by commenting on the size of the maximal set of anti-commuting Paulis in the set $S$ (defined above in Theorem~\ref{thm:stabilizer_covering_group}). For this purpose, we will require the following folklore uncertainty relation. 
\begin{lemma}\label{lemma:uncertainty_relation}
For every $\rho$, let $P_1, P_2, \ldots P_M$ be a set of anti-commuting Paulis. Then $\sum_{j = 1}^{M} \Tr(\rho P_j)^2 \leq~1.$
\end{lemma}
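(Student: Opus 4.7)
The plan is to exploit the algebraic structure induced by pairwise anti-commutation. First I would introduce the Hermitian operator
\[
A = \sum_{j=1}^{M} c_j P_j, \qquad c_j := \Tr(\rho P_j) \in \mathbb{R},
\]
where the $c_j$ are real because each $P_j$ is a Hermitian Pauli and $\rho$ is Hermitian. The key observation is that under the anti-commutation $P_i P_j = -P_j P_i$ for $i\neq j$, squaring $A$ yields
\[
A^2 = \sum_{j} c_j^2\, P_j^2 + \sum_{i \neq j} c_i c_j P_i P_j = \Big(\sum_j c_j^2\Big) \mathbb{I},
\]
since $P_j^2 = \mathbb{I}$ for all Paulis and the cross terms pair up via $P_iP_j + P_jP_i = 0$ and thereby vanish.

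Next I would invoke the operator Cauchy–Schwarz inequality for density matrices, namely $|\Tr(\rho A)|^2 \leq \Tr(\rho)\,\Tr(\rho A^2) = \Tr(\rho A^2)$, which is an immediate consequence of $|\Tr(XY^\dagger)|^2 \leq \Tr(XX^\dagger)\Tr(YY^\dagger)$ applied to $X = \rho^{1/2}$ and $Y = \rho^{1/2} A$. Substituting the identity $A^2 = (\sum_j c_j^2)\mathbb{I}$ from the previous paragraph gives
\[
\Big|\Tr(\rho A)\Big|^2 \leq \sum_j c_j^2.
\]
On the other hand, by definition of $A$ and the $c_j$'s,
\[
\Tr(\rho A) = \sum_j c_j \Tr(\rho P_j) = \sum_j c_j^2.
\]
Combining these two displays yields $(\sum_j c_j^2)^2 \leq \sum_j c_j^2$, and therefore $\sum_{j=1}^M \Tr(\rho P_j)^2 = \sum_j c_j^2 \leq 1$, as required.

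I do not expect any real obstacle here; the only subtlety worth double-checking is that $A^2$ genuinely collapses to a scalar multiple of the identity, which requires both $P_j^2 = \mathbb{I}$ (true for every Weyl operator after absorbing global phases) and strict pairwise anti-commutation of distinct $P_i, P_j$. The argument is self-contained and uses no quantum input beyond $\rho \succeq 0$ with $\Tr(\rho) = 1$; in particular it applies uniformly to the various sets $A, S, S', V$ appearing earlier in the paper, which is exactly how the lemma will be used to bound $\nac(\cdot)$ in the companion arguments.
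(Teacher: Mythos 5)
Your proof is correct and follows essentially the same route as the paper: both construct the Hermitian operator $A=\sum_j \Tr(\rho P_j)\,P_j$, use pairwise anti-commutation to collapse $A^2$ to $\big(\sum_j \Tr(\rho P_j)^2\big)\,\mathbb{I}$, and then close with a one-line elementary inequality. The only cosmetic difference is that you invoke operator Cauchy--Schwarz $|\Tr(\rho A)|^2\le\Tr(\rho A^2)$ whereas the paper uses $|\Tr(\rho A)|\le\|A\|_{\mathrm{op}}$; since $A^2$ is a scalar multiple of the identity these two bounds coincide here.
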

\begin{proof}
For $y \in \R^{M}$, define the Hamiltonian $H(y) = \sum_{a=1}^{M} y_a P_a$. Then
\begin{equation*}
    H(y)^2 = \frac{1}{2}\sum_{a,b = 1}^{M} y_a y_b (P_a P_b + P_b P_a) = \norm{y}_2^2 I,
\end{equation*}
where we have used that the Paulis anti-commute in the second equality. Thus, $\norm{H(y)}_2 = \norm{y}_2$. Denote $w_a = \la \psi | P_a | \psi \ra$ and $w = (w_1,\ldots,w_K)$. We observe
\begin{equation*}
    \norm{w}_2^2 = \sum_{a=1}^K w_a^2 = \Tr(\rho H(w)) \leq \norm{H(w)}_2 = \norm{w}_2.
\end{equation*}
We thus obtain $\norm{w}_2 \leq 1$ which concludes the proof.
\end{proof}

\begin{proof}[Proof of Theorem~\ref{thm:stabilizer_covering_group}.] We have previously noted that $T \equiv \calP^k \times \la Z_{k+1},\ldots,Z_{k+m} \ra$. 
Let us write the $4^k$ Paulis in $\calP^k = \{\tau_1,\tau_2,\ldots,\tau_{4^k}\}$. We choose $\tilde{S}_a = \la \tau_a \otimes I^{n-k} \ra \times \la Z_{k+1}, \ldots Z_{k+m} \ra \subseteq \calP^n$ for each $a \in [4^k]$. Moreover, we can extend $\tilde{S}_a$ to a stabilizer group $S_a$ on $n$-qubits. We then have 
\begin{equation}
    V \subseteq \bigcup \limits_{a=1}^{M} U^\dagger S_a U,
\end{equation}
where $M = 4^k$. Since $k=\nac(2S')\cdot \poly(1/\gamma)$, we have that  $M=\exp\big(\nac(2S')\cdot \poly(1/\gamma)\big)$. Moreover, the action of a Clifford circuit $U$ on a stabilizer group $S_a$ is to take it to another stabilizer group.
\end{proof}

\paragraph{Conjecture implies covering.}  At this point  Theorem~\ref{app_thm:stabilizer_covering_V_combinatorial} implies that the stabilizer covering of $V$ is at most $O(\exp(\nac(2S') \cdot \gamma^{-C})$ for some constant $C>1$. The conjecture furthermore implies that $\nac(2S')\leq \poly(1/\gamma)\nac(S')$ and $\nac(S')\leq 1/\gamma$ follows from the uncertainty relation~(Lemma~\ref{lemma:uncertainty_relation}), so the overall stabilizer covering of $V$ has size at most $\exp(\poly(1/\gamma))$. We remark that the difference between this proof here and that of Theorem~\ref{thm:stabilizer_covering_group} that we presented in Section~\ref{sec:item3} is that the proof presented in the appendix is purely combinatorial in nature whereas the proof of Theorem~\ref{thm:stabilizer_covering_group} requires 
 analyzing the contributions across expectation values of different Weyl operators to $\Exp_{x \in V}[2^n p_\Psi(x)]$. 

\section*{B Alternate proof for item $3$}\label{app_sec:alt_proof_item3}
In this section, we give an alternate proof of Fact~\ref{eq:ub_sum1} that was crucially required to prove Theorem~\ref{thm:stabilizer_covering_group} which comments on the stabilizer covering of the subgroup $V$. This proof involves routine calculations starting from first principles and is fairly simple.

Let us recall some notation defined in Section~\ref{sec:item3}. Using Using Fact~\ref{fact:clifford_action_on_group}, we showed there exists a Clifford unitary $U$ such that we can define a set $T = UVU^\dagger = \calP^k \times \la Z_{k+1}, Z_{k+2}, \ldots, Z_{k+m}\rangle$. Moreover, we had
\begin{align*}
\frac{1}{|V|} \sum_{W_x \in \calP^k \times \la Z_{k+1},\ldots,Z_{k+m}\ra} 2^n p_{\tilde{\Psi}}(x) \geq \poly(\gamma)
\end{align*}
where the state $\tilde{\Psi}$ is the state obtained after application of the Clifford unitary $U$ i.e., $\tilde{\Psi} = U \Psi U^\dagger$. Let us denote $\tilde{\Psi}_A$ as the reduced density matrix (RDM) of $\tilde{\Psi}$ acting on system $A$ defined over the first $k$ qubits and $\tilde{\Psi}_B$ as the RDM of $\tilde{\Psi}$ acting on subsystem $B$ defined over the last $(n-k)$ qubits. 

We are now ready to prove Fact~\ref{eq:ub_sum1}, restated here for convenience.
\begin{fact}
\label{app:proofoffact}
We have the following
$$
\sum_{W_y \in \calP^k, W_z \in \calP_Z^m} \Tr\left( (W_y \otimes W_z) \tilde{\Psi}\right)^2  \leq 2^{m+k} \,.
$$
\end{fact}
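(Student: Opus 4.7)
The plan is to bypass the $1$-design averaging argument of the original proof and instead argue directly from the purity inequality applied to the reduced density matrix on the first $k+m$ qubits. The key observation is that the Paulis appearing in the sum act trivially on the last $\ell := n-k-m$ qubits, so the sum depends on $\tilde{\Psi}$ only through its RDM $\tilde{\Psi}_{AB} := \Tr_\ell(\tilde{\Psi})$. Concretely, $\Tr\left((W_y \otimes W_z \otimes I_\ell)\tilde{\Psi}\right) = \Tr\left((W_y \otimes W_z)\tilde{\Psi}_{AB}\right)$ for every $W_y \in \calP^k$ and $W_z \in \calP^m_Z$, so it suffices to upper bound the sum of squares of these $(k+m)$-qubit Pauli expectations against $\tilde{\Psi}_{AB}$.

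Next I would decompose $\tilde{\Psi}_{AB}$ in the orthogonal Pauli basis of Hermitian operators on $k+m$ qubits, writing $\tilde{\Psi}_{AB} = 2^{-(k+m)}\sum_{P \in \calP^{k+m}} c_P \, P$ with $c_P := \Tr(P\,\tilde{\Psi}_{AB})$. Using the orthogonality relation $\Tr(PQ) = 2^{k+m}\delta_{P,Q}$ for $P,Q \in \calP^{k+m}$ gives the Parseval-type identity
$$
\Tr\bigl(\tilde{\Psi}_{AB}^2\bigr) \;=\; \frac{1}{2^{k+m}} \sum_{P \in \calP^{k+m}} c_P^2.
$$
Since $\tilde{\Psi}_{AB}$ is a density matrix, its purity satisfies $\Tr(\tilde{\Psi}_{AB}^2) \leq 1$, which rearranges to the global bound $\sum_{P \in \calP^{k+m}} c_P^2 \leq 2^{k+m}$.

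To finish, I would simply observe that $\{W_y \otimes W_z : W_y \in \calP^k,\, W_z \in \calP^m_Z\}$ is a subset of $\calP^{k+m}$ (of size $4^k \cdot 2^m$) and that every term $c_P^2$ in the Parseval identity is nonnegative, so restricting the sum to this subset can only decrease it:
$$
\sum_{W_y \in \calP^k,\, W_z \in \calP^m_Z} \Tr\bigl((W_y \otimes W_z)\tilde{\Psi}_{AB}\bigr)^2 \;\leq\; \sum_{P \in \calP^{k+m}} c_P^2 \;\leq\; 2^{k+m}.
$$
Combining this with the reduction in the first step yields the claimed inequality. I do not anticipate a genuine technical obstacle: the only points requiring care are the normalization convention for Paulis (consistent with $\Tr(P^2) = 2^{k+m}$) and the standard identity for partial trace against a tensor-product observable, both of which are routine. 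The proof is thus essentially a one-line consequence of Parseval for the Pauli basis plus the purity bound on an RDM, in contrast to the operator-averaging route taken earlier.
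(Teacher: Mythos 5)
Your proof is correct, and it takes a genuinely different and cleaner route than either of the paper's two arguments. The paper's main-text proof (Fact~\ref{eq:ub_sum1}) averages over $W_y \in \calP^k$ using the Pauli $1$-design identity $\sum_{W_y} W_y O W_y = 2^k\Tr(O) I$ to reduce the sum to $2^k \sum_{W_z}\Tr(\rho (W_z\otimes I_\ell)\rho(W_z\otimes I_\ell))$, where $\rho$ is the RDM on the last $m+\ell$ qubits, and then bounds this term-by-term with an operator-norm estimate $\|\rho\|\le 1$; the appendix proof being asked about instead expands $\ket{\tilde\psi}$ in amplitudes, reduces to $2^{k+m}\sum_b\Tr(\tilde\Psi_{A,b}\overline{\tilde\Psi_{A,b}})$ via a Parseval step on $\calP^k$ alone, and finishes with an explicit amplitude computation. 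You sidestep all of this: you trace out the last $\ell$ qubits immediately to get $\tilde\Psi_{AB}$, observe that the sum of squared Pauli correlations over the \emph{full} basis $\calP^{k+m}$ equals $2^{k+m}\Tr(\tilde\Psi_{AB}^2)\le 2^{k+m}$ by Parseval plus the purity bound, and then throw away nonnegative terms since $\calP^k\times\calP^m_Z \subsetneq \calP^{k+m}$. What you lose is any sharpness from the fact that only diagonal $Z$-type Paulis appear on the middle $m$ qubits (you are bounding by the full $\calP^{k+m}$ sum), but the bound you obtain is the same $2^{k+m}$, which is all the paper needs, and your argument is shorter, more elementary, and avoids both the $1$-design identity and the intermediate object $\tilde\Psi_{A,b}$ (whose trace against its complex conjugate the paper must compute directly). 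I would note in passing that the triangle-inequality step in the paper's appendix chain $\sum_b|\sum_a s_{a,b}^2|^2 \le \sum_b\sum_a|s_{a,b}^2|^2$ is not valid as written for general complex $s_{a,b}$; your argument cleanly avoids this delicacy by never passing through that identity.
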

\begin{proof}
In the computational basis, we have $\ket{\tilde{\psi}}=\sum_{a \in \mathbb{F}_2^k,b \in \mathbb{F}_2^{n-k}}s_{a,b}\ket{a,b}$ with $s_{a,b}$ denoting the amplitudes. The RDM of $\tilde{\Psi}$ over the first $k$-qubits or subsystem $A$ is~then
\begin{align*}
    \widetilde{\Psi}_A&=\sum_c \Big(\id \otimes  \bra{c}\Big)\widetilde{\Psi}\Big(\id \otimes \ket{c}\Big)
    =\sum_c\sum_{a,a',b,b'}s^*_{a,b}s_{a',b'} \big(\id \otimes  \bra{c}\big)\ketbra{a,b}{a',b'}\big(\id \otimes \ket{c}\big)
    =\sum_{a,a',b}s^*_{a,b}s_{a',b} \ketbra{a}{a'}.
\end{align*}
Let us define $   \widetilde{\Psi}_{A,b}=\sum_{a,a'}s^*_{a,b}s_{a',b} \ketbra{a}{a'}.
$ We can then show
\begin{align}
&\sum_{W_y \in \calP^k, W_z \in \calP_Z^m} \Tr\left( (W_y \otimes W_z) \tilde{\Psi}\right)^2\\
&=\sum_{W_y \in \calP^k, W_z \in \calP_Z^m} \Big|\sum_{a,b,a',b'}s^*_{a,b}s_{a',b'}\langle a,b| (W_y \otimes W_z)|a',b'\rangle\Big|^2\\
&=\sum_{W_y \in \calP^k, W_z \in \calP_Z^m} \Big|\sum_{a,b,a'}s^*_{a,b}s_{a',b}(-1)^{z\cdot b}\langle a| W_y|a'\rangle\Big|^2\\
&=\sum_{W_y \in \calP^k, W_z \in \calP_Z^m} \sum_{a,b,a',a_1,b_1,a'_1}s^*_{a,b}s_{a',b}\overline{s^*_{a_1,b_1}s_{a'_1,b_1}}(-1)^{z\cdot (b+b_1)}\langle a| W_y|a'\rangle\overline{\langle a_1| {W_y}|a'_1\rangle}\\
&=2^m \sum_b\sum_{W_y \in \calP^k} \Big(\sum_{a,a'}s^*_{a,b}s_{a',b}\langle a| W_y|a'\rangle\Big)\Big(\sum_{a_1,a'_1}\overline{s^*_{a_1,b}s_{a'_1,b}\langle a_1| W_y|a'_1\rangle}\Big)\\
&=2^m\sum_b\sum_{W_y \in \calP^k} \Big|\sum_{a,a'}s^*_{a,b}s_{a',b}\langle a| W_y|a'\rangle\Big|^2\\
&=2^m \sum_b \sum_{W_y \in \calP^k} |\Tr(W_y\tilde{\Psi}_{A,b})|^2
\end{align}
where the fourth equality used $\sum_{W_z \in \calP_Z^m} (-1)^{z\cdot(b+b_1)} = 2^m [b = b_1]$.
We then observe that
\begin{align}
\Tr(\tilde{\Psi}_{A,b}\overline{\tilde{\Psi}_{A,b}}) &= \frac{1}{2^{2k}} \Tr \left[ \sum_{x,y \in \mathbb{F}_2^{2k}} \Tr(W_x \tilde{\Psi}_{A,b}) \overline{\Tr(W_y \tilde{\Psi}_{A,b})} W_x \overline{W_y} \right] \\
&= \frac{1}{2^{2k}} \sum_{x,y \in \mathbb{F}_2^{2k}} \Tr(W_x \tilde{\Psi}_{A,b}) \overline{\Tr(W_y \tilde{\Psi}_{A,b})} \Tr(W_x \overline{W_y}) \\
&= \frac{1}{2^{2k}} \sum_{y \in \mathbb{F}_2^{2k}} |\Tr(W_y \tilde{\Psi}_{A,b})|^2 2^k 
= \frac{1}{2^{k}} \sum_{y \in \mathbb{F}_2^{2k}} |\Tr(W_y \tilde{\Psi}_{A,b})|^2,
\end{align}
where the first equality used the Pauli expansion of  $\Phi$ as\footnote{Even though $\tilde{\Psi}_{A,b}$ may not be a valid quantum state, we can write the Pauli decomposition of any operator.} ${\Phi} = \frac{1}{2^k} \sum_{x \in \mathbb{F}_2^{2k}} \Tr(W_x {\Phi}) W_x$. Hence,
\begin{align}
\label{eq:equalitywzwy}
    \sum_{W_y \in \calP^k, W_z \in \la Z_{k+1},\ldots,Z_{k+m}\ra} \Tr\left( (W_y \otimes W_z) \tilde{\Psi}\right)^2= 2^{k+m}\sum_b  \Tr( \tilde{\Psi}_{A,b}\overline{\tilde{\Psi}_{A,b}}).
\end{align}
We know that
\begin{align}
\Tr(\widetilde{\Psi}_{A,b}\overline{\tilde{\Psi}_{A,b}})=\Tr(\sum_{a,a',c,c'}s^*_{a,b}s_{a',b}s_{c,b}s^*_{c',b} \ketbra{a}{a'}\ketbra{c}{c'})=\sum_{a,a'}s^*_{a,b}s_{a',b}s_{a',b}s^*_{a,b}  =\Big|\sum_{a}s_{a,b}^2\Big|^2
\end{align}
So we can upper bound
$$
\sum_b  \Tr( \tilde{\Psi}_{A,b}\overline{\tilde{\Psi}_{A,b}})=\sum_{b}\big|\sum_a s_{a,b}^2\big|^2\leq \sum_{b}\sum_a |s_{a,b}^2|^2\leq \sum_{a,b}|s_{a,b}|^2=1,
$$
where we used that $\sum_i a_i^2\leq (\sum_i  a_i)^2$ for all $a_i\geq 0$ in the inequality and the equality uses that $\ket{\tilde{\psi}}=\sum_{a,b}s_{a,b}\ket{a,b}$ is a valid quantum state so $\sum_{a,b}|s_{a,b}|^2=1$. Putting together Eq.~\eqref{eq:equalitywzwy} with the inequality above gives the fact statement.
\end{proof}
\end{document}